\def\o{\texttt{1}}
\def\t{\texttt{2}}
\def\th{\texttt{3}}
\definecolor{gray50}{gray}{0.5}
\def\gray50#1{\textcolor{gray50}{#1}}
\newcommand{\nat}{\mathbb N}
\newcommand{\tuple}[1]{\langle #1 \rangle}
\def\abs#1{\ensuremath{\lvert #1\rvert}} 
\let\epsilon\varepsilon
\let\emptyset\varnothing
\newcommand{\straa}{s} 
\newcommand{\val}{\mathit{val}}
\def\mynote#1{}
\def\mynewnote#1{}
\newcommand*{\quot}[2]{\sfrac{#1}{#2}}
\newcommand{\A}{\mathcal{A}}
\newcommand{\F}{\mathcal{F}}
\newcommand{\G}{\mathcal{G}}
\renewcommand{\H}{\mathcal{H}}
\newcommand{\M}{\mathcal{M}}
\renewcommand{\P}{\mathcal{P}}
\newcommand{\U}{\mathcal{U}}
\newcommand{\V}{\mathcal{V}}
\newcommand{\W}{\mathcal{W}}
\newcommand{\outcome}{\mathrm{Out}}
\newcommand*{\act}{\mathrm{act}}
\newcommand*{\sync}{\mathrm{sync}}
\newcommand*{\View}{\mathrm{View}}
\newcommand{\prio}{\mathrm{prio}}
\renewcommand{\a}{a}
\renewcommand{\b}{b}
\newcommand{\bi}[2]{{}^{#1}_{#2}}
\newcommand{\tbi}[2]{\bi{#1}{#2}}   
\newcommand{\fbi}[2]{\mbox{{\large $\textstyle \bi{#1}{#2}$}}}   
\newcommand{\cfbi}[2]{\fbi{#1}{#2}}  
\newcommand{\reject}{q_{\mathrm{rej}}}
\def\presuper#1#2%
\newcommand*{\dfa}{\textsc{dfa}\xspace}
\newcommand*{\init}{\varepsilon}
\newtheorem{theorem}{Theorem}
\newtheorem{lemma}{Lemma}
\newtheorem{corollary}[lemma]{Corollary}
\newtheorem{remark}{Remark}
\newcommand{\clift}{\nabla}
\newcommand{\lift}{\mathrm{lift}}
\newcommand{\up}[1]{#1^{\uparrow}}
\newcommand{\Com}{\mathrm{Com}}
\newcommand{\rej}{\mathrm{rej}}
\newenvironment{longversion}{}{}
\newenvironment{shortversion}{}{}
\title{Synthesising Full-Information Protocols}
\author{Dietmar Berwanger \and Laurent Doyen \and Thomas Soullard}
\date{LMF, ENS Paris-Saclay \& CNRS}
\begin{document}
\excludecomment{shortversion}

\maketitle

\begin{abstract}
  We lay out a model of games with imperfect information that features
  explicit communication actions, by which the 
  entire observation history of a player is revealed to another player.
  Such full-information
  protocols are common in asynchronous distributed systems; here, we
  consider a synchronous setting with a single active player who may
  communicate with multiple passive observers in an indeterminate
  environment. We present a procedure for solving the basic
  strategy-synthesis problem under regular winning conditions.  

  We present our solution in an abstract framework of games with 
  imperfect information and we split the proof in two conceptual 
  parts: $(i)$ a generic reduction schema from imperfect-information
  to perfect-information games, and $(ii)$ a specific construction 
  for full-information protocols that satisfies the requirement of the
  reduction schema.

  Furthermore we show that the number of passive observers induces
  a strict hierarchy, both in terms of expressiveness and complexity:
  with $n$ observers, a full-information protocol can express indistinguishability relations
  (defining imperfect information for the player in the protocol) 
  that are not expressible with $n-1$ observers, and 
  the strategy-synthesis problem is $(n+1)$-EXPTIME-complete. 
\end{abstract}

\section{Introduction}

One fundamental paradigm for the analysis of complex systems is that of reactive processes
proposed by Harel and Pnueli~\cite{HarelPnu85}.
A reactive process is one that interacts perpetually with its environment:
at every stage of the execution, it observes an input signal and then responds with a control action
towards the purpose of enforcing that the global system runs successfully with respect to a specified objective.
In contrast to programs that evaluate a function on a given input and then terminate,
reactive processes are intended to run forever.
The ongoing interaction is modelled naturally as a
game played over infinitely many stages between a strategic player representing the process, 
which seeks to satisfy the objective, and a non-strategic opponent, Nature,
which chooses the moves of the environment. 
The task of designing a reactive process that enforces a specified objective translates into the problem of constructing a winning strategy in such an infinite game \cite{BL69,Thomas95}.
That is, a function that maps the information acquired by the process player
to actions, such that every possible run of the global system satisfies the specification, regardless of the moves of Nature.

Strategies are based on the information available to the player.
In the particular situation, where the sequence of inputs signals together with the output actions
determine the run of the global system completely, the game is of perfect information.
However, reactive systems often involve events that are not directly observable to the process,
such that one input-output sequence viewed by the process 
may correspond to multiple possible global runs.
Based on its local, partial view, the process thus needs to choose its action in a way to account
for all contingencies of the global run.
We are therefore in the
setting of infinite games with imperfect information.

The original framework of reactive-system synthesis,
detailed by Pnueli and Rosner~\cite{PnueliRos89}, concerns single-process architectures described
by a sequential finite-state machine, which represents the global system together
together with the objective of the process. 
Every process action is associated with a set of state transitions that it enables,
and each transition yields
an observation from a finite alphabet.
The objective is described by a colouring of states,
a run is winning if the sequence of visited colors infinitely often maps to a specified set of colours.
In this framework, the synthesis problem is to decide whether there exists a
process strategy that is winning with respect to the objective and, if possible,
to construct a finite-state machine that implements it. 

Indeed, the synthesis problem can be solved effectively for single-process architectures
in the basic finite-state framework~\cite{Chu62,BL69,Rabin72}. 
Under perfect information, the task reduces to solving parity games
between two strictly conflicting players,
a problem that has been
well studied with good algorithmic results~\cite{Thomas95,CaludeJKLS22}.
In the setting of imperfect information, the corresponding games can be solved
via a power-set construction that goes back to Reif \cite{Rei84}.
For any observation history, consider the set of possible play histories,
and map it to the set of their end nodes. 
This map respects update operations associated to receiving a new observation.
Its image describes a game of perfect information that is equivalent to the game with imperfect information at the outset in a strong sense:
every strategy in the image corresponds to one in the original game and vice versa,
such that the two strategies have the same outcomes in terms of observation sequences.
In this way, one obtains a game that is exponentially larger, but with perfect information,
and the strategies transfer back and forth between the games preserving their outcomes,
and in particular their winning status.

A main challenge in the analysis of complex system is, however, that they are distributed.
In practice, the global system involves multiple processes,
each receiving its own observations and executing actions based on its local view.
Here, the synthesis problem concerns coordination strategies for a coalition of players with
a common objective: a solution consists in a strategy profile, a list of strategies one for
each player, which, if played simultaneously, enforces that the global run satisfies the
objective.
Typically one process has no direct access to the observation received by other processes,
we are thus in the setting of infinite coordination games with imperfect information. 

Unfortunately, the distributed variant of the synthesis problem is
algorithmically unsolvable, in the general context of infinite games with imperfect information~\cite{PR90,KupfermanVar01,FinkbeinerS05}.
Already for two processes that receive separate input sequences from the environment,
it is undecidable whether a coordination strategy exists to enforce a common goal specified by a
finite-state automaton~\cite{Schewe2014}.
As the environment of an individual process now includes the other processes of the
coalition, we are no longer in an antagonistic setting.
To coordinate successfully, however, the strategy of each player may
need to keep track of the information held by the other players,
and this is algorithmically hard to manage over an infinite duration~\cite{BerwangerKP11}.

In contrast to single-processor architectures, where actions are chosen with the purpose of controlling
the global system, the design of multi-processor architectures has an important focus on the communication
between processes.
Rather than producing outputs or actions relevant for an external observer,
a process may just convey information, derived from the observation of its own
input sequence, to another process; this could greatly help solving the control problem for the global system. 
In the basic model, such a message-passing event can be modelled as a side effect of an action chosen by the sender
that triggers a particular (global) transition, which in turn emits a particular observation to the receiving process.
This corresponds to modelling a communication channel of fixed bandwidth, bounded by the number of observations.
As the amount of information acquired by a process increases along a run whereas the channel can only convey messages
from a fixed finite range in one round, communication in this model involves a strategic choice.
Therefore, the task of designing a suitable communication strategy 
brings us back to the synthesis problem for games
between multiple decision-makers with imperfect information, known to be undecidable.

In an attempt to push the undecidability frontier for the synthesis of
distributed reactive systems, we explore a model where communication between
processes is not restricted beforehand.

We propose a model of \emph{full-information protocols (FIP)} that draws its name and the basic idea from the a well-known
concept in distributed computing: whenever a communication event between two 
processes occurs, all the information that the sender holds is conveyed to the receiver~\cite{PeaseSL80,DworkMos90,WooLam94}.
The occurrence of such an event is not necessarily controllable by the processes.
For instance, Nature may choose to keep a communication link between two processes down
over an arbitrary amount of time so that no information is transmitted.
However, when the link is re-established,
one process receives instantly the entire sequence of inputs observed by the process at the other end.
In particular, the model makes no assumption on the bandwidth of communication channels.
Another crucial feature is that communication is passive: processes do not have the choice
to reveal only a part of their information.
The meaningful choices thus concern either control actions or the triggering of communication events --
albeit with no control on the contents of the message.
Essentially, the model captures a setting of maximal information that can be
conveyed in a system where the availability of communication channels is subject
to interactive control.
Whenever a synthesis task can be solved with a communication strategy on an architecture
with arbitrarily high bandwidth, it is solvable in full-information protocols.

The principle of conveying maximal information with every communication event
also plays a key role in the model of asynchronous systems 
interpreted over Mazurkiewicz traces.
The corresponding notion of causal memory built into the fundamental
model of Zielonka automata \cite{Zie87}, proved instrumental for 
solving the synthesis problem for several classes of 
architectures~\cite{GLZ04,MTY05,GGMW13} and for Petri games~\cite{FO17}. 

We focus on the synchronous setting and
model distributed systems with FIP semantics as a repeated game played over infinitely
many stages between several players on the one side and Nature on the other side.
The players have a common objective described by a colouring of a finite-state machine. 
Every play drives the machine sequentially, by triggering  transitions that arise as an outcome of the stage
game; we call such transitions a move.
In every stage, each player chooses an action, and the profile of chosen actions determines a nonempty subset of enabled moves.
Among these, Nature chooses one. Every move is associated to a profile of observations, one for each player.
Firstly, every observation received by a player carries a local input symbol that she receives directly. 
Additionally, there is a special attribute that designates
the list of players to which she can communicate in the current round.
In consequence, she also observes the entire view of each player in this list.
Thus, the information held by a player~$i$ is encoded by its view which consists of the sequence of her own inputs and additionally,
of the sequence of inputs received by any player~$j$ up to any earlier round in which $i$ could communicate with $j$, further
the input sequence of any player $k$ with which $j$ could communicate, and so forth.
The information structure of a player in the repeated game is a tree obtained by following the possible updates of her view.
A strategy is a function that maps any view to an action.
Thus a strategy profile determines as an outcome a set of plays, that is, infinite move sequences
that correspond to infinite runs in the finite-state machine from the outset.
The objective of the players is given by an acceptance condition of this machine defined in terms of colours.
The distributed strategy profile is winning if all plays in its outcome are winning.

Our formalisation of FIP games subsumes the synchronous models of infinite games with imperfect information played on finite graphs from the
literature~\cite{AzharPetRei01,RamadgeWon87,MohalikWal03,DR11}.
Accordingly, there is no hope for solving the synthesis problem for games that involve more than one decision maker,
in the general case. We therefore restrict our attention to the case of one decision maker, corresponding to one process that chooses
actions. Additionally, there can be any number of passive players, which we call observers:
they just acquire information -- either by observing their own inputs or
by receiving the views of other player through communication events.
Observers play a crucial role, as their current view may be communicated to the decision maker
at different rounds of the infinite play, conveying an unbounded amount of information in a single stage.
The challenge for a synthesis procedure is to process this information.  
In this paper, we show that the synthesis problem is effectively solvable for FIP games with one active player
and arbitrarily many observers for a winning condition described by a finite-state automaton on infinite words.

One obstacle, even for the case of a single active player, is that information trees can be of unbounded branching.
Imagine, for instance, that the active player receives just a non-informative input symbol in each of the first 100 rounds of a game,
whereas an observer can receive any sequence of bits until the two are allowed to communicate in round $101$.
Then, the possible views of the active player will be arranged on a simple path of length $100$ and then suddenly
branch to $2^{100}$ successor views, one for each possible bit sequence received by the observer, which will henceforth be included into her
view. If the scenario continues in the same way, until the next communication event occurs in round $300$,
there will be a branching of degree $2^{200}$, and so on.
However, the automata-theoretic approach to synthesis
(\cite{Rabin72,GurevichHar82,ArnoldWal03})
relies of tree models of bounded branching degree, so we cannot expect 
classical techniques to apply directly for solving the synthesis problem.  

The basis of our approach lies in a particular notion of game equivalence which
is supported by a homomorphism that maps the original game structure to a finite image and 
satisfies a key property: its kernel commutes with the indistinguishability relation of
the active player of the game in the outset. Indeed, the composition of the kernel with the indistinguishability
relation yields an equivalence of finite index ; the quotient of the original game structure by this equivalence yields a finite
game that is bisimilar to one at the outset. As a consequence,
the winning strategies can be transferred back and forth to the original one via the homomorphism.  

The main technical contribution consists in constructing a homomorphism with the required property.
Intuitively, this is done by a bold generalisation of the powerset construction of Reif.
Besides recording the set of end states of possible histories associated to an information state,
our construction keeps track recursively of the possible records along increasing chains of coalitions
starting with the active player and up to the grand coalition.
In this way, the information sets of the original game are mapped to a bisimilar copy that is finite, but
where each node is annotated with a record of $n$-fold exponential size, where $n$ is the number of observers.
Accordingly, our solution procedure is of nonelementary complexity.   

Nevertheless, we show that the non-elementary complexity of the synthesis procedure for FIP is unavoidable.
Indeed, the acceptance problem for a Turing machine that use $n$-fold exponential space
in the length of its input reduces to the synthesis problem for a FIP game with $n$ observers.     

Nonelementary complexity is not unusual in the case of games with imperfect information involving several players.
Indeed, this bound is characteristic for games with hierarchical information in the synchronous
setting~\cite{PR90,AzharPetRei01,KupfermanVar01,FinkbeinerS05,BMvdB18},
or for acyclic architectures in the asynchronous setting \cite{GGMW13}.
In all these cases, the synthesis problem is solvable in exponential time in the single-process case and
the complexity grows as a tower of exponentials of height $n$ with the number $n$ of active players.
Thus, the nonelementary lower bound comes as a surprise in the setting of FIP with a single decision maker. 

\section{Basic Notions}

For a function $f: X \to Y$ and a domain subset $Z \subseteq X$, 
we denote by $f(Z)= \{f(z) \mid z \in Z \}$ the set of images of elements in $Z$.

We use finite automata as a model of acceptor of finite words,
and Mealy automata as a model of transducer. 
They share a common underlying structure of the form $\tuple{Q, \Gamma, q_\init, \delta}$,
called a \emph{semi-automaton}, described by a finite set~$Q$ of states, 
a finite input alphabet~$\Gamma$, 
a designated initial state $q_\init \in Q$, 
and a transition function $\delta: Q \times \Gamma \to Q$.
To extend the transition function from letters to words, we define the function $\delta: Q \times \Gamma^* \to Q$ 
by setting, for every state~$q \in Q$, by $\delta(q, \epsilon) = q$ for the empty word~$\epsilon$, 
and, recursively $\delta(q, \tau c) = \delta(\delta( q, \tau), c )$, for any word $\tau c$ 
obtained by concatenation of a word $\tau \in \Gamma^*$ and a letter $c \in \Gamma$. 
The \emph{synchronous product} of two semi-automata $\tuple{Q, \Gamma, q_\init, \delta}$
and $\tuple{P, \Gamma, p_\init, \delta'}$ is the semi-automaton 
$\tuple{Q \times P, \Gamma, (p_\init,q_\init), \Delta}$ with transition function $\Delta((q,p), c) = 
(\delta(q,c),\delta'(p,c))$ for all $q \in Q$, $p \in P$, and $c \in \Gamma$.

A \emph{deterministic finite automaton} ($\dfa$)  $\A = (Q, \Gamma, q_\init, \delta, F)$ 
expands a semi-automaton with a set~$F \subseteq Q$ of accepting states. 
A finite input word $\tau \in \Gamma^*$ is \emph{accepted} by~$\A$ 
if $\delta(q_{\init}, \tau) \in F$. 
A \emph{Mealy automaton} is described by a 
a tuple  $(Q, \Gamma, \Sigma, q_\init, \delta, \lambda)$ 
where $(Q, \Gamma, q_\init, \delta)$ is a semi-automaton, $\Sigma$ 
is a finite output alphabet, and 
$\lambda: Q \times \Gamma \to \Sigma$ is an output function.
The Mealy automaton defines a function $\lambda: \Gamma^+ \to \Sigma$
obtained by setting $\lambda(\epsilon) = \epsilon$ and $\lambda(\tau c) = \lambda(\delta(q_\init, \tau), c)$
for all words $\tau \in \Gamma^*$ and letters $c \in \Gamma$.
We say that a function on $\Gamma^*$ is \emph{regular} if there exists a Mealy automaton
that defines it.
Given an input word $\tau = c_1 c_2 \dots c_n \in \Gamma^*$,
let $\hat{\lambda}(\tau) = \lambda(c_1) \lambda(c_1 c_2) \dots \lambda(c_1 c_2 \dots c_n)$
be the output sequence consisting of the output of all prefixes of $\tau$.
We extend $\hat{\lambda}$ to infinite words $\pi = c_1 c_2 \dots \in \Gamma^{\omega}$
by setting $\hat{\lambda}(c_1 c_2 \dots) = \lambda(c_1) \lambda(c_1 c_2) \dots$ as expected.

\subsection{Repeated games with imperfect information}

Our purpose is to model reactive systems driven by occurrences of
discrete state transitions, which we call \emph{moves}. 
Towards this, we use \emph{abstract repeated games} played in infinitely many stages between a
fixed set~$I = \{1, \dots, |I|\}$ of players and Nature.
In every stage, a move is produced as an outcome of
a one-shot \emph{base game} played as follows:
each player~$i \in I$ chooses an \emph{action}~$a^i$ from her
given action set $A^i$; the chosen profile $a = (a_i)_{i \in I}$ 
constrains the set of possible outcomes to the subset of moves
supported by~$a$, from which Nature chooses one.
The outcoming move is recorded
in the play history, then the base game is repeated.
The outcome of the multistage game, called a \emph{play},
is thus an infinite sequence $\pi = c_1 c_2 \dots $ of moves.
A~\emph{history} (of length $\ell$) is a finite prefix 
$\tau = c_1 c_2 \dots c_\ell$ of a play; 
the empty history $\init$ has length zero. 
We denote by $\pi(\ell) = c_1 c_2 \ldots c_\ell$
the prefix of length $\ell$ of a play $\pi$, with $\pi(0) = \epsilon$.

\smallskip\noindent{\em Winning condition.}
The objective of a player is specified by a winning condition, a
set $W \subseteq \Gamma^{\omega}$ of plays declared to be winning.
Of special interest is the class of $\omega$-regular languages that 
extends regular languages to infinite words, and provides a robust 
specification language to express commonly used specifications~\cite{Thomas97}.

It is convenient to specify winning conditions in two parts: $(1)$ a logical specification 
$L \subseteq C^{\omega}$ over an alphabet $C$ of colors, which 
is independent of the game and its move alphabet, and $(2)$~a 
regular coloring function $\lambda: \Gamma^+ \to C$ that induces 
the winning condition $W =  \{\pi \in \Gamma^{\omega} \mid \hat{\lambda}(\pi) \in L\}$.
In this setting, the condition $L$ can be fixed and defines the type of game 
while the function $\lambda$ can be 
specified by a Mealy machine that is part of the game instance (e.g., as given 
in the input of the synthesis algorithm). 
For example parity games, which are a canonical way of representing games 
with $\omega$-regular winning conditions~\cite{Thomas97}, correspond to $C = \nat$ and 
$L = \{n_1 n_2 \ldots \in \nat^{\omega} \mid \liminf_{i \to \infty} n_i \text{ is even} \}$.
Reachability games correspond to $C = \{0,1\}$ and 
$L = \{n_1 n_2 \ldots \in \nat^{\omega} \mid \inf_{i} n_i = 0 \}$.

\smallskip\noindent{\em Imperfect information.}
To pursue their objective, players choose actions
based on the information available to them.  
The information of a player~$i \in I$ is modeled by a  
partition $U^i$ of the set
of histories; 
the parts of $U^i$ are called \emph{information sets} (of the player). 
The intended meaning is that if the actual history belongs to an 
information set, then the player considers every history in the set
possible.
The particular case where all information sets in the partition 
are singletons characterises the setting of \emph{perfect information}. 

Our model is \emph{synchronous}, which means, intuitively, 
that the players always know how many stages have been played. 
This amounts to asserting that all histories in an information 
set have the same length; 
in particular the empty history forms a singleton information set.
Further, we assume that the player has \emph{perfect recall} --- 
he never forgets what he knew previously 
and which actions he took. 
Formally, if an information set contains nontrivial histories $\tau c$ and $\tau' c'$, 
then the predecessor histories $\tau$ and $\tau'$ belong to the same information set 
and the moves~$c$ and $c'$ are supported by the same action.

An alternative representation of an information partition $U$ is given by the equivalence
relation $\sim \, \in \Gamma^* \times \, \Gamma^*$ such that $\tau \sim \tau'$
if $\tau, \tau' \in u$ for some $u \in U$. Such an equivalence is called an \emph{indistinguishability 
relation}~\cite{BD23} as it relates
the pairs of histories that the player cannot distinguish.
Formally, an indistinguishability relation $\sim \, \in \Gamma^* \times \, \Gamma^*$ is
an equivalence relation satisfying the following conditions,
for all $\tau, \tau' \in \Gamma^*$ and $c, c' \in \Gamma$:
\begin{itemize}
\item if $\tau \sim \tau'$, then $\abs{\tau} = \abs{\tau'}$ (indistinguishable histories have the same length), 
\item if $\tau c \sim \tau' c'$, then $\tau \sim \tau'$ (the relation is prefix-closed),
\item if $\tau c \sim \tau' c'$, then $\act(c) = \act(c')$ (the action is visible).
\end{itemize}

For a history $\tau \in \Gamma^*$, we denote by 
$[\tau]_{\sim} = \{\tau' \in \Gamma^* \mid \tau' \sim \tau\}$ 
the information set containing~$\tau$.
%
%
Intuitively, the first condition above states that
the player knows how many rounds have been played. 
The condition of prefix-closure formalises perfect recall,
and visibility of actions means that he can distinguish his own actions. 

\smallskip\noindent{\em Restrictions: one active player, visible winning condition.}
For our analysis of the synthesis problem, we restrict to the particular case where only
one player can make relevant choices, namely Player~$0$.
Concretely, we assume that the action set of every other player $i \in I \setminus \{0\}$
is trivial $|A^i| = 1$.
Whenever we refer to a set of actions or an indistinguishability relation without specifying to which 
player it pertains, we mean Player~$0$. 
Additionally, we require the function $\lambda$ defining the color of a history
to be information-consistent, that is, constant over every information set: 
$\lambda(\tau) = \lambda(\tau')$ for all indistinguishable histories $\tau \sim \tau'$. 
We say that the induced winning condition is \emph{visible}.

\smallskip\noindent{\em Strategies.}
The following definitions concern the single active player in a repeated game. 
A \emph{decision function} is a map $f: \Gamma^* \to A$ from histories to actions. 
We say that a play $c_1 c_2 \dots$ \emph{follows}~$f$ if
$\act( c_t ) = f(c_1 \dots c_{t-1})$, for every stage $t > 0$ (and similarly for a history).
We denote by $\outcome(f)$ the set of all plays that follow $f$.

A \emph{strategy} is a decision function that is information consistent.
Given a winning condition $W \subseteq \Gamma^{\omega}$, the strategy $\straa$ 
is \emph{winning} if all plays that follow $\straa$ belong to~$W$,
that is $\outcome(\straa) \subseteq W$. When the winning condition is induced
by a logical specification $L \subseteq C^{\omega}$ (and a regular function 
$\lambda: \Gamma^+ \to C$ that is clear from the context), 
we also say that $\straa$ is \emph{winning} for $L$.

\smallskip\noindent{\em Game description.}
Given an action set $A$, a move set $\Gamma$, and a function $\act: \Gamma \to A$, 
a game with imperfect information consists of a tuple 
$\G = \tuple{A,\Gamma,\act,\sim, \lambda}$ and a winning condition $L \subseteq C^{\omega}$,
where $\sim$ is an indistinguishability 
relation and $\lambda$ is a coloring function. 
In the special case of perfect-information games, characterised by 
the indistinguishability relation $\sim$ being the identity (or equivalently 
by the informations sets $[\tau]_{\sim} = \{\tau\}$ being singletons 
for all $\tau \in \Gamma^*$), we omit the relation $\sim$ in the tuple~$\G$. 

\smallskip\noindent{\em Synthesis problem.}
For a fixed winning condition $L \subseteq C^{\omega}$,
the synthesis problem asks, given a game $\G$ with imperfect information,
whether there exists a winning strategy for $L$ in $\G$.

\section{Full-Information Protocols}
In the standard model of partial-observation games~\cite{Rei84}, the indistinguishability
relation $\sim$ is induced by a regular observation function $\beta: \Gamma^* \to \Sigma$
(where $\Sigma$ is a finite set of observations),
such that $\tau \sim \tau'$ if $\hat{\beta}(\tau) = \hat{\beta}(\tau')$.
Intuitively, the player receives at every nonempty history~$\tau c$ 
the observation symbol $\beta(\tau c)$, and by the assumption of perfect recall,
remembers the sequence $\hat{\beta}(\tau)$ of previous observations.
An equivalent characterisation is
$\tau c \sim \tau'c'$ iff 
$\tau \sim \tau'$ and $\beta(\tau c) = \beta(\tau' c')$. 
As a consequence, for any information set $u$, there are at most
$\abs{\Sigma}$ information sets $u'$ such that $\tau c \in u'$
for some $\tau \in u$ and $c \in \Gamma$, that is, the information
tree has bounded branching.

In a full-information protocol, the active player, namely Player~$0$, 
is accompanied by $n$ passive players, which we call observers. 
Each player $i = 0, \dots, n$ receives an observation symbol at every round, given 
by a regular observation function $\beta_i: \Gamma^* \to \Sigma$.
However, only player~$0$ is able to make strategic choices;
the other players $1, \dots, n$ have singleton actions sets. However, they may 
communicate with other observers or with the main player.
Let $I = \{0,1,\dots,n\}$ be the set of all players. 
Communication is specified by relations $R_{\sigma} \subseteq I \times I$
indexed by observations $\sigma \in \Sigma$: when Player~$i$ receives observation $\sigma$,
he also receives the entire view of all players $j \in R_{\sigma}(i) = \{j \mid (i,j) \in R_{\sigma}\}$,
which consists of all observations of players in $R_{\sigma}(i)$ as well
as (recursively) the view of players in $R_{\sigma}(i)$. 
Intuitively, a link $(i,j) \in R_{\sigma}$ specifies a one-way communication
with receiver $i$ and sender $j$ upon observation of~$\sigma$ (Player~$i$ peeks at Player~$j$). We refer to such links
as \emph{direct} links. If at some history, there is a direct link from
Player~$i$ to Player~$j$, and also a direct link from Player~$j$ to Player~$k$, then a communication
is established from Player~$i$ to Player~$k$, even if the protocol does not
specify the link from $i$ to $k$ directly. We refer to such links as \emph{indirect} links.

We represent the information available to the player and observers along a history 
$\tau = c_1 c_2 \dots c_{\ell}$ by a graph $\View(\tau) = (V,E)$,
called the \emph{view graph}, where:
\begin{itemize}
\item $V = I \times \{0,1,\dots, \ell\}$ is the set of nodes, and a node $(i,t) \in V$
represents the viewpoint of Player~$i$ after $t$ rounds;

\item $E \subseteq V \times V$ is the set of edges, where an edge $\tuple{(i,t),(j,u)}$
intuitively means that after $t$ rounds, Player~$i$ has access to the view of Player~$j$ 
at round $u$; the set $E$ contains the edges $\tuple{(i,t),(i,t-1)}$
for all $i \in I$ and $1 < t \leq \ell$, which correspond to looking into the past,
and the edges $\tuple{(i,t),(j,t)}$
for all $i,j \in I$ and $1 \leq t \leq \ell$
such that $j \in R_{\sigma}(i)$ where $\sigma = \beta_i(c_1 c_2 \dots c_{t})$, 
which correspond to communicating the view of Player~$j$ to Player~$i$ (via a direct link).
\end{itemize}


Two histories $\tau,\tau' \in \Gamma^*$
are indistinguishable for Player~$i$,
denoted $\tau \sim_i \tau'$, if $\abs{\tau} = \abs{\tau'}$ and 
$\beta_j(\tau(t)) = \beta_j(\tau'(t))$ for
all nodes $(j,t)$ reachable from $(i,\abs{\tau})$ in the view graph $\View(\tau)$.
Note that the definition implies that if $\tau \sim_i \tau'$, then 
the reachable nodes from $(i,\abs{\tau})$ in $\View(\tau)$ and 
in $\View(\tau')$ coincide.
We say that the histories $\tau, \tau'$ are indistinguishable
for a coalition $J \subseteq I$, denoted $\tau \sim_J \tau'$, 
if they are indistinguishable for all players of the coalition,
that is, $\tau \sim_i \tau'$ for all $i \in J$.

\begin{shortversion}
An example of a view graph is given in the extended version (appendix).
\end{shortversion}

\begin{longversion}
\figurename~\ref{fig:view-graphs} shows a view graph for a FIP with four
players (the main player and three observers). The figure shows the edges
corresponding to communications, but we omit the edges corresponding
to looking into the past. Given the view graph 
of $\tau = c_1 c_2 c_3 c_4 c_5 c_6 c_7 \dots$ in \figurename~\ref{fig:view-graph},
the view of Player~$0$ after $c_6$ is illustrated in \figurename~\ref{fig:view-graph6},
and after $c_7$ in \figurename~\ref{fig:view-graph7}. 


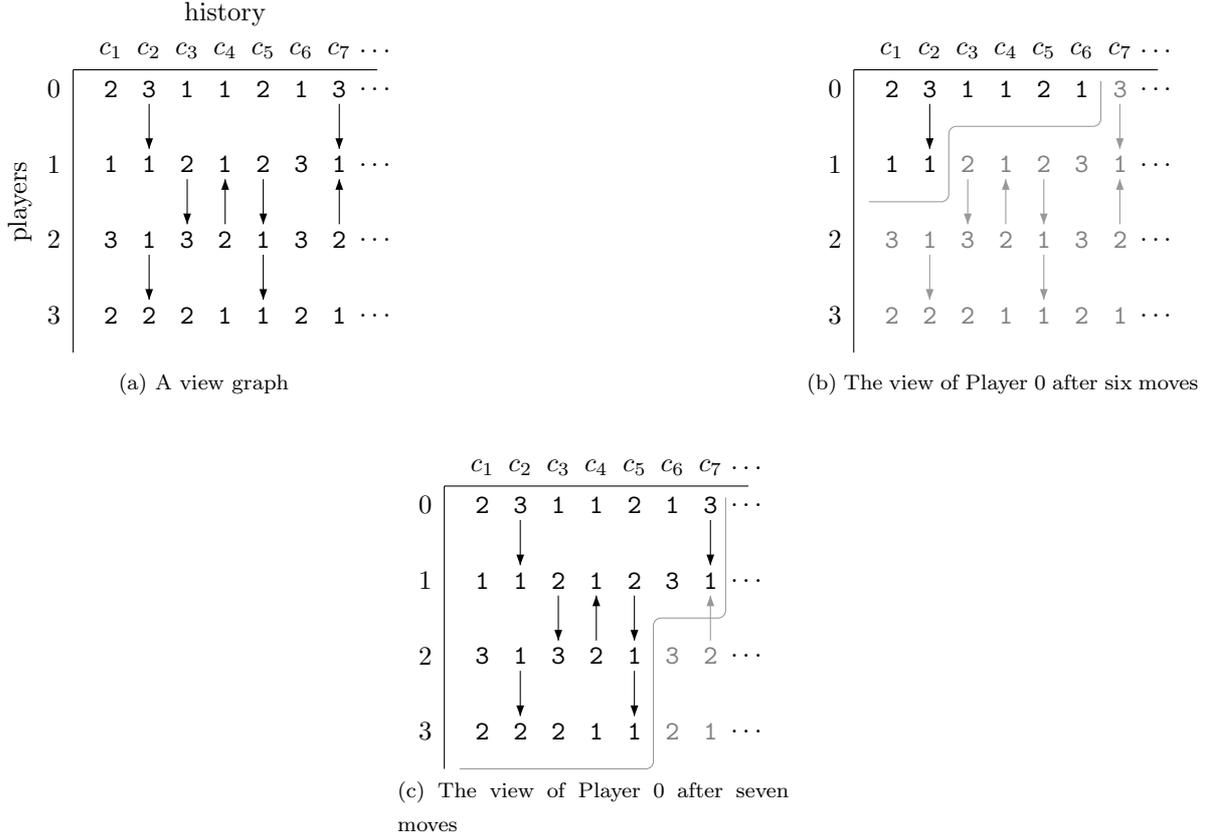
\begin{figure}[!tb]%
\begin{center}
\hrule
\subfloat[A view graph {\large \strut}]{

\begin{picture}(53,45)(0,5)

      \gasset{Nw=3,Nh=4,Nmr=2, rdist=1, loopdiam=5}      

      \node[Nmarks=n, Nframe=n](p0)(3,25){\rotatebox{90}{players}}
      \node[Nmarks=n, Nframe=n](p0)(30,50){history}
      \node[Nmarks=n, Nframe=n](p0)(15,45){$c_1$}
      \node[Nmarks=n, Nframe=n](p0)(20,45){$c_2$}
      \node[Nmarks=n, Nframe=n](p0)(25,45){$c_3$}
      \node[Nmarks=n, Nframe=n](p0)(30,45){$c_4$}
      \node[Nmarks=n, Nframe=n](p0)(35,45){$c_5$}
      \node[Nmarks=n, Nframe=n](p0)(40,45){$c_6$}
      \node[Nmarks=n, Nframe=n](p0)(45,45){$c_7$}
      \node[Nmarks=n, Nframe=n](p0)(50,45){$\dots$}

      \node[Nmarks=n, Nframe=n](p0)(7.5,40){$0$}
      \node[Nmarks=n, Nframe=n](p1)(7.5,30){$1$}
      \node[Nmarks=n, Nframe=n](q2)(7.5,20){$2$}
      \node[Nmarks=n, Nframe=n](q3)(7.5,10){$3$}

      \drawline[AHnb=0,arcradius=1](10,5)(10,42.5)
      \drawline[AHnb=0,arcradius=1](50,42.5)(10,42.5)
      
      \node[Nmarks=n, Nframe=n](z1)(15,40){\t}
      \node[Nmarks=n, Nframe=n](o1)(15,30){\o}
      \node[Nmarks=n, Nframe=n](t1)(15,20){\th}
      \node[Nmarks=n, Nframe=n](th1)(15,10){\t}

      \node[Nmarks=n, Nframe=n](z2)(20,40){\th}
      \node[Nmarks=n, Nframe=n](o2)(20,30){\o}
      \node[Nmarks=n, Nframe=n](t2)(20,20){\o}
      \node[Nmarks=n, Nframe=n](th2)(20,10){\t}

      \node[Nmarks=n, Nframe=n](z3)(25,40){\o}
      \node[Nmarks=n, Nframe=n](o3)(25,30){\t}
      \node[Nmarks=n, Nframe=n](t3)(25,20){\th}
      \node[Nmarks=n, Nframe=n](th3)(25,10){\t}

      \node[Nmarks=n, Nframe=n](z4)(30,40){\o}
      \node[Nmarks=n, Nframe=n](o4)(30,30){\o}
      \node[Nmarks=n, Nframe=n](t4)(30,20){\t}
      \node[Nmarks=n, Nframe=n](th4)(30,10){\o}

      \node[Nmarks=n, Nframe=n](z5)(35,40){\t}
      \node[Nmarks=n, Nframe=n](o5)(35,30){\t}
      \node[Nmarks=n, Nframe=n](t5)(35,20){\o}
      \node[Nmarks=n, Nframe=n](th5)(35,10){\o}

      \node[Nmarks=n, Nframe=n](z6)(40,40){\o}
      \node[Nmarks=n, Nframe=n](o6)(40,30){\th}
      \node[Nmarks=n, Nframe=n](t6)(40,20){\th}
      \node[Nmarks=n, Nframe=n](th6)(40,10){\t}

      \node[Nmarks=n, Nframe=n](z7)(45,40){\th}
      \node[Nmarks=n, Nframe=n](o7)(45,30){\o}
      \node[Nmarks=n, Nframe=n](t7)(45,20){\t}
      \node[Nmarks=n, Nframe=n](th7)(45,10){\o}

      \node[Nmarks=n, Nframe=n](z8)(50,40){$\dots$}
      \node[Nmarks=n, Nframe=n](o8)(50,30){$\dots$}
      \node[Nmarks=n, Nframe=n](t8)(50,20){$\dots$}
      \node[Nmarks=n, Nframe=n](th8)(50,10){$\dots$}

      \drawedge[ELpos=50, ELside=l, curvedepth=0](z2,o2){}
      \drawedge[ELpos=50, ELside=l, curvedepth=0](t2,th2){}

      \drawedge[ELpos=50, ELside=l, curvedepth=0](o3,t3){}

      \drawedge[ELpos=50, ELside=l, curvedepth=0](t4,o4){}

      \drawedge[ELpos=50, ELside=l, curvedepth=0](o5,t5){}
      \drawedge[ELpos=50, ELside=l, curvedepth=0](t5,th5){}

      \drawedge[ELpos=50, ELside=l, curvedepth=0](z7,o7){}
      \drawedge[ELpos=50, ELside=l, curvedepth=0](t7,o7){}







\end{picture}
   \label{fig:view-graph}  
}
\hfill
\subfloat[The view of Player~$0$ after six moves{\large \strut}]{

\begin{picture}(48,45)(5,5)

      \gasset{Nw=3,Nh=4,Nmr=2, rdist=1, loopdiam=5}      

      \node[Nmarks=n, Nframe=n](p0)(15,45){$c_1$}
      \node[Nmarks=n, Nframe=n](p0)(20,45){$c_2$}
      \node[Nmarks=n, Nframe=n](p0)(25,45){$c_3$}
      \node[Nmarks=n, Nframe=n](p0)(30,45){$c_4$}
      \node[Nmarks=n, Nframe=n](p0)(35,45){$c_5$}
      \node[Nmarks=n, Nframe=n](p0)(40,45){$c_6$}
      \node[Nmarks=n, Nframe=n](p0)(45,45){$c_7$}
      \node[Nmarks=n, Nframe=n](p0)(50,45){$\dots$}

      \node[Nmarks=n, Nframe=n](p0)(7.5,40){$0$}
      \node[Nmarks=n, Nframe=n](p1)(7.5,30){$1$}
      \node[Nmarks=n, Nframe=n](q2)(7.5,20){$2$}
      \node[Nmarks=n, Nframe=n](q3)(7.5,10){$3$}

      \drawline[AHnb=0,arcradius=1](10,5)(10,42.5)
      \drawline[AHnb=0,arcradius=1](50,42.5)(10,42.5)
      
      \node[Nmarks=n, Nframe=n](z1)(15,40){\t}
      \node[Nmarks=n, Nframe=n](o1)(15,30){\o}
      \node[Nmarks=n, Nframe=n](t1)(15,20){\gray50{\th}}
      \node[Nmarks=n, Nframe=n](th1)(15,10){\gray50{\t}}

      \node[Nmarks=n, Nframe=n](z2)(20,40){\th}
      \node[Nmarks=n, Nframe=n](o2)(20,30){\o}
      \node[Nmarks=n, Nframe=n](t2)(20,20){\gray50{\o}}
      \node[Nmarks=n, Nframe=n](th2)(20,10){\gray50{\t}}

      \node[Nmarks=n, Nframe=n](z3)(25,40){\o}
      \node[Nmarks=n, Nframe=n](o3)(25,30){\gray50{\t}}
      \node[Nmarks=n, Nframe=n](t3)(25,20){\gray50{\th}}
      \node[Nmarks=n, Nframe=n](th3)(25,10){\gray50{\t}}

      \node[Nmarks=n, Nframe=n](z4)(30,40){\o}
      \node[Nmarks=n, Nframe=n](o4)(30,30){\gray50{\o}}
      \node[Nmarks=n, Nframe=n](t4)(30,20){\gray50{\t}}
      \node[Nmarks=n, Nframe=n](th4)(30,10){\gray50{\o}}

      \node[Nmarks=n, Nframe=n](z5)(35,40){\t}
      \node[Nmarks=n, Nframe=n](o5)(35,30){\gray50{\t}}
      \node[Nmarks=n, Nframe=n](t5)(35,20){\gray50{\o}}
      \node[Nmarks=n, Nframe=n](th5)(35,10){\gray50{\o}}

      \node[Nmarks=n, Nframe=n](z6)(40,40){\o}
      \node[Nmarks=n, Nframe=n](o6)(40,30){\gray50{\th}}
      \node[Nmarks=n, Nframe=n](t6)(40,20){\gray50{\th}}
      \node[Nmarks=n, Nframe=n](th6)(40,10){\gray50{\t}}

      \node[Nmarks=n, Nframe=n](z7)(45,40){\gray50{\th}}
      \node[Nmarks=n, Nframe=n](o7)(45,30){\gray50{\o}}
      \node[Nmarks=n, Nframe=n](t7)(45,20){\gray50{\t}}
      \node[Nmarks=n, Nframe=n](th7)(45,10){\gray50{\o}}

      \node[Nmarks=n, Nframe=n](z8)(50,40){$\dots$}
      \node[Nmarks=n, Nframe=n](o8)(50,30){$\dots$}
      \node[Nmarks=n, Nframe=n](t8)(50,20){$\dots$}
      \node[Nmarks=n, Nframe=n](th8)(50,10){$\dots$}

      \drawline[AHnb=0,arcradius=1,linegray=.6](42.5,41)(42.5,35)(22.5,35)(22.5,25)(12,25)

      \drawedge[ELpos=50, ELside=l, curvedepth=0](z2,o2){}
      \drawedge[ELpos=50, ELside=l, curvedepth=0,linegray=.6](t2,th2){}

      \drawedge[ELpos=50, ELside=l, curvedepth=0,linegray=.6](o3,t3){}

      \drawedge[ELpos=50, ELside=l, curvedepth=0,linegray=.6](t4,o4){}

      \drawedge[ELpos=50, ELside=l, curvedepth=0,linegray=.6](o5,t5){}
      \drawedge[ELpos=50, ELside=l, curvedepth=0,linegray=.6](t5,th5){}

      \drawedge[ELpos=50, ELside=l, curvedepth=0,linegray=.6](z7,o7){}
      \drawedge[ELpos=50, ELside=l, curvedepth=0, linegray=.6](t7,o7){}







\end{picture}
    \label{fig:view-graph6}
}
\hfill
\subfloat[The view of Player~$0$ after seven moves{\large \strut}]{

\begin{picture}(48,45)(5,5)

      \gasset{Nw=3,Nh=4,Nmr=2, rdist=1, loopdiam=5}      

      \node[Nmarks=n, Nframe=n](p0)(15,45){$c_1$}
      \node[Nmarks=n, Nframe=n](p0)(20,45){$c_2$}
      \node[Nmarks=n, Nframe=n](p0)(25,45){$c_3$}
      \node[Nmarks=n, Nframe=n](p0)(30,45){$c_4$}
      \node[Nmarks=n, Nframe=n](p0)(35,45){$c_5$}
      \node[Nmarks=n, Nframe=n](p0)(40,45){$c_6$}
      \node[Nmarks=n, Nframe=n](p0)(45,45){$c_7$}
      \node[Nmarks=n, Nframe=n](p0)(50,45){$\dots$}

      \node[Nmarks=n, Nframe=n](p0)(7.5,40){$0$}
      \node[Nmarks=n, Nframe=n](p1)(7.5,30){$1$}
      \node[Nmarks=n, Nframe=n](q2)(7.5,20){$2$}
      \node[Nmarks=n, Nframe=n](q3)(7.5,10){$3$}

      \drawline[AHnb=0,arcradius=1](10,5)(10,42.5)
      \drawline[AHnb=0,arcradius=1](50,42.5)(10,42.5)
      
      \node[Nmarks=n, Nframe=n](z1)(15,40){\t}
      \node[Nmarks=n, Nframe=n](o1)(15,30){\o}
      \node[Nmarks=n, Nframe=n](t1)(15,20){\th}
      \node[Nmarks=n, Nframe=n](th1)(15,10){\t}

      \node[Nmarks=n, Nframe=n](z2)(20,40){\th}
      \node[Nmarks=n, Nframe=n](o2)(20,30){\o}
      \node[Nmarks=n, Nframe=n](t2)(20,20){\o}
      \node[Nmarks=n, Nframe=n](th2)(20,10){\t}

      \node[Nmarks=n, Nframe=n](z3)(25,40){\o}
      \node[Nmarks=n, Nframe=n](o3)(25,30){\t}
      \node[Nmarks=n, Nframe=n](t3)(25,20){\th}
      \node[Nmarks=n, Nframe=n](th3)(25,10){\t}

      \node[Nmarks=n, Nframe=n](z4)(30,40){\o}
      \node[Nmarks=n, Nframe=n](o4)(30,30){\o}
      \node[Nmarks=n, Nframe=n](t4)(30,20){\t}
      \node[Nmarks=n, Nframe=n](th4)(30,10){\o}

      \node[Nmarks=n, Nframe=n](z5)(35,40){\t}
      \node[Nmarks=n, Nframe=n](o5)(35,30){\t}
      \node[Nmarks=n, Nframe=n](t5)(35,20){\o}
      \node[Nmarks=n, Nframe=n](th5)(35,10){\o}

      \node[Nmarks=n, Nframe=n](z6)(40,40){\o}
      \node[Nmarks=n, Nframe=n](o6)(40,30){\th}
      \node[Nmarks=n, Nframe=n](t6)(40,20){\gray50{\th}}
      \node[Nmarks=n, Nframe=n](th6)(40,10){\gray50{\t}}

      \node[Nmarks=n, Nframe=n](z7)(45,40){\th}
      \node[Nmarks=n, Nframe=n](o7)(45,30){\o}
      \node[Nmarks=n, Nframe=n](t7)(45,20){\gray50{\t}}
      \node[Nmarks=n, Nframe=n](th7)(45,10){\gray50{\o}}

      \node[Nmarks=n, Nframe=n](z8)(50,40){$\dots$}
      \node[Nmarks=n, Nframe=n](o8)(50,30){$\dots$}
      \node[Nmarks=n, Nframe=n](t8)(50,20){$\dots$}
      \node[Nmarks=n, Nframe=n](th8)(50,10){$\dots$}

      \drawline[AHnb=0,arcradius=1,linegray=.6](47,41)(47,25)(37.5,25)(37.5,5)(12,5)

      \drawedge[ELpos=50, ELside=l, curvedepth=0](z2,o2){}
      \drawedge[ELpos=50, ELside=l, curvedepth=0](t2,th2){}

      \drawedge[ELpos=50, ELside=l, curvedepth=0](o3,t3){}

      \drawedge[ELpos=50, ELside=l, curvedepth=0](t4,o4){}

      \drawedge[ELpos=50, ELside=l, curvedepth=0](o5,t5){}
      \drawedge[ELpos=50, ELside=l, curvedepth=0](t5,th5){}

      \drawedge[ELpos=50, ELside=l, curvedepth=0](z7,o7){}
      \drawedge[ELpos=50, ELside=l, curvedepth=0, linegray=.6](t7,o7){}







\end{picture}
    \label{fig:view-graph7}
}
\hrule
\caption{View graphs (we omit all edges pointing backwards, correspond to looking into the past). \label{fig:view-graphs}}%
\end{center}
\end{figure}
\end{longversion}

\medskip
A \emph{full-information protocol} (FIP) $F = \tuple{I, (\M_i)_{i\in I},(R_{\sigma})_{\sigma \in \Sigma}}$
with $n$ observers over move alphabet $\Gamma$ and observation alphabet $\Sigma$
consists of a set $I = \{0,1,\dots,n\}$ of players, Mealy machines $\M_i$
defining the observation functions $\beta_i: \Gamma^* \to \Sigma$ of each player $i \in I$,
and the relations $R_{\sigma} \subseteq I \times I$ defining the communication links
between the players on observations $\sigma \in \Sigma$.
By extension, a full-information protocol is a game 
$\tuple{A, \Gamma, \act, \sim, \M}$ where the indistinguishability relation $\sim$ is 
$\sim_0$ defined by $F$.
Moreover, we require that two moves with different actions have different observation,
if $\act(c) \neq \act(c')$, then $\beta_0(\tau c) \neq \beta_0(\tau' c')$ for 
all histories $\tau, \tau' \in \Gamma^*$ and moves $c,c' \in \Gamma$,
ensuring that the action is visible to the player.
It is then easy to see that $\sim$ is indeed an indistinguishability relation.

Note that FIP games with one player and no observer ($I= \{0\}$) correspond
to the special case of partial-observation games~\cite{Rei84} where  
the indistinguishability relation is represented by a single (regular) observation
function.


\section{Graph Games and Morphisms}\label{sec:graph-games}
The key tool to strategy synthesis for infinite games is the
automata-theoretic procedure founded on the works of B\"uchi and Landweber~\cite{BL69},
and of Rabin~\cite{Rabin69}.
Setting out from an automaton that recognises the set of strategies in a game and a second one
that recognises the winning condition, 
the procedure constructs a new automaton that recognises the set of winning strategies.
The emptiness test for the constructed automaton is decidable,
answering the question of whether winning strategies exist.
Moreover, by Rabin's Basis Theorem~\cite{Rabin72},
every nonempty automaton accepts a regular tree, 
which corresponds to the unfolding of a finite graph --
this allows to effectively construct a winning strategy defined by a Mealy machine.

An essential feature of the automata-theoretic approach is that strategies
are presented as trees with bounded, finite branching.
In our setting, however, 
the information trees which support strategies might have 
unbounded degree. Indeed, it was shown in \cite{BD23},
that a regular indistinguishability relation defines an information tree with finite branching
if, and only if, there exists an equivalent observation function.
Since FIP protocols are more expressive than observation functions, as we show
in Section~\ref{sec:expressiveness},
this means that we cannot rely on tree automata to recognise the set of strategies of a FIP game in general.

To overcome this obstacle, we propose a construction that transforms
any FIP game into a game with perfect information, by preserving
the existence of winning strategies in the following sense:
$(1)$~whenever a winning strategy exists in the original game, there exists one in the transformed game; $(2)$~given a regular winning strategy for transformed game,
we can effectively construct a winning strategy for the original game.

To prepare the ground, we first discuss some general transformation of games with imperfect
information into games of perfect information that preserves
the existence of winning strategies, and present a sufficient 
condition for the transformed game of perfect information to
be regular and thus solvable.
In Section~\ref{sec:solving-FIP}, we describe a particular transformation
for solving the FIP synthesis problem.

\subsection{Game graphs}

It will be convenient to consider repeated games played on a graph,
which is a model equivalent to abstract repeated games~\cite{Thomas95}.
We briefly recall the definition of game graphs as repeated games.

Let~$A$ be a set of actions and $C$ be a set of colors.
A \emph{(game) graph} is a structure $\V = (V, v_\init, (E_a)_{a \in A}, \lambda)$
on a set $V$ of nodes called the domain with a designated initial node $v_\init \in V$, 
a binary edge relation $E_a \subseteq V \times V$ for every action~$a \in A$, 
and a node-labeling function $\lambda: V \to C$. We require that for
every node $v \in V$ and action $a \in A$, the set $E_a(v) = \{w \mid (v,w) \in E_a\}$
of successors of $v$ by $a$ is nonempty.

Intuitively, a game on $\V$ is played in rounds as follows. 
Each round starts in a node, the first round starts in the initial node $v_{\init}$. 
In each round, given the node $v$ in which the round starts, the player chooses 
an action $a \in A$, then the environment chooses a node $w$ such that $(v,w) \in E_a$.
The next round starts in the node $w$. 

As a repeated game, the game on $\V$ is the perfect-information game 
$\G_{\V} = \tuple{A,\Gamma,\act, \M}$
with the set of moves $\Gamma = A \times V$,
the function $\act$ defined by $\act(a,v) = a$ for all $(a,v) \in \Gamma$,
and the Mealy machine $\M$ that maps a history $\tau = (a_1,v_1) \ldots (a_n,v_n)$
to the color $\lambda(v_n)$ if $\tau$ forms a path
from $v_\init$ in the graph $\V$, and to the color $\bot$ otherwise.
Given a winning condition $L \subseteq C^{\omega}$ for the game on $\V$,
a play is declared winning in $\G_{\V}$ if it is mapped by $\M$ to 
a sequence in~$L$ or to a sequence containing $\bot$, which
forces the environment to respect the edge relations of the game graph.

The domain of $\V$ may be finite or infinite.
However, the graphs we consider have at most countable set of nodes, they are finitely branching, 
that is, every node has finitely many successors,
and the range of the coloring function~$\lambda$ is finite.

In the sequel we construct game graphs as the quotient of Mealy automata, defined
as follows. 
Let $\A = (Q, \Gamma, \Sigma, q_\init, \delta, \lambda)$ be a Mealy automaton.
Given an action map $\act: \Gamma \to A$ and an equivalence relation 
$R\subseteq Q \times Q$ that respects $\lambda$ (i.e., if $(q,q') \in R$, 
then $\lambda(q) = \lambda(q')$), the \emph{quotient} by $R$ of $\A$ 
is the graph $\quot{\A}{R} = (V, v_\init, (E_a)_{a \in A}, \lambda)$
where $V$ is the set of all equivalence classes in $Q$, the initial node
is $v_\init = [q_\init]_R$, and for all $a \in A$ the edge relation $E_a$ connects two
equivalence classes $(u,u')$ whenever there exists a state $q \in u$
with a successor $\delta(q,c) \in u'$ for some $c \in \Gamma$ such that $\act(c) = a$.
By an abuse of notation, we define $\lambda([q]_R) = \lambda(q)$ for all $q \in Q$.

\subsection{Information tree}

Our first step is to represent games with imperfect information 
as a game of perfect information played on the information tree, 
which is an infinite graph whose nodes are information sets. 

The \emph{information tree} for a game $\G = \tuple{A,\Gamma,\act,\sim,\lambda}$ 
is the transition graph
$\U(\G)$ on the domain $U = \{ [\tau]_{\sim} \mid \tau \in \Gamma^* \} $, 
with initial node $u_\init = [\init]_{\sim}$, 
edge sets
$E_a^{\U} := \bigl\{ \bigl( [\tau]_{\sim}, [\tau c]_{\sim} \bigr) \mid \act(c) = a \bigr\}$,
for each action $a \in A$, 
and with coloring
$\lambda^{\U}( [\tau]_{\sim} ) = \lambda( \tau )$ for all $\tau \in \Gamma^*$
(which is well-defined since $\lambda$ is information-consistent).
Note that $\U(\G)$ is a tree due to the perfect-recall property of~$R$.
Therefore, every node $u \in U$ identifies a unique path from $u_\init$
to $u$ and we view strategies in the game payed on $\U(\G)$
as functions $\straa: U \to A$
rather than $\straa: U^{+} \to A$.

The information tree can also be obtained as the quotient by $\sim$ of
the (infinite-state) automaton with state space $Q = \Gamma^*$
and transition function defined by $\delta(\tau, c) = \tau c$
for all $\tau \in \Gamma^*$ and $c \in \Gamma$.

Although structurally different, a game with imperfect information and the perfect-information
game played on its information tree are the same game, in the following sense.

\begin{lemma}\label{lem:info-quotient}
  For every game $\G$ with indistinguishability relation~$\sim$, 
  there is a bijection that maps every strategy $\straa$ in the information tree $\U(\G)$
  to an (information-consistent) strategy $\tilde{\straa}$ of the original game $\G$,
  such that $\tilde{\straa}(\tau) = \straa([\tau]_{\sim})$, for all histories $\tau \in \Gamma^*$.
  Moreover, the outcomes of corresponding strategies agree on the
  colors: $\hat{\lambda}(\outcome(\tilde{\straa}) = \hat{\lambda}^{\U}(\outcome(\straa)) $. 
\end{lemma}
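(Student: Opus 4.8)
The plan is to exhibit the bijection explicitly and check it is well-defined, bijective, and outcome-preserving on colors. First I would define the forward map. Given a strategy $\straa \colon U \to A$ on the information tree $\U(\G)$, set $\tilde{\straa}(\tau) := \straa([\tau]_{\sim})$ for every history $\tau \in \Gamma^*$. I must check that $\tilde{\straa}$ is a legitimate strategy of $\G$, i.e.\ that it is information-consistent: if $\tau \sim \tau'$ then $[\tau]_{\sim} = [\tau']_{\sim}$, so $\tilde{\straa}(\tau) = \straa([\tau]_{\sim}) = \straa([\tau']_{\sim}) = \tilde{\straa}(\tau')$. Conversely, given an information-consistent decision function $\straa'\colon \Gamma^* \to A$ of $\G$, it factors through the quotient $\Gamma^* \to U$, so there is a unique $\straa\colon U \to A$ with $\straa'(\tau) = \straa([\tau]_{\sim})$; this is the inverse map. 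Well-definedness of $\straa$ uses exactly information-consistency of $\straa'$, and uniqueness uses surjectivity of $\tau \mapsto [\tau]_{\sim}$. Hence the two maps are mutually inverse bijections between strategies on $\U(\G)$ and information-consistent strategies of $\G$, with the stated relation $\tilde{\straa}(\tau) = \straa([\tau]_{\sim})$.

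Next I would relate outcomes. I claim that for any $\straa$ and its image $\tilde{\straa}$, the plays following $\tilde{\straa}$ in $\G$ correspond bijectively to the plays following $\straa$ in the graph game $\U(\G)$, via the map sending a play $\pi = c_1 c_2 \dots$ to the path $[\init]_{\sim}, [\pi(1)]_{\sim}, [\pi(2)]_{\sim}, \dots$ of information sets it traverses. The key point is that this map respects the ``follows'' condition: $\pi$ follows $\tilde{\straa}$ iff $\act(c_{t+1}) = \tilde{\straa}(\pi(t)) = \straa([\pi(t)]_{\sim})$ for all $t$, which by the definition of the edge sets $E_a^{\U}$ is precisely the condition that the path $([\pi(t)]_{\sim})_t$ follows $\straa$ in $\U(\G)$. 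Because $\U(\G)$ is a tree (by perfect recall of $\sim$), each node carries a unique history-class, so distinct plays in $\outcome(\tilde{\straa})$ cannot collapse to the same play in $\U(\G)$ unless they already coincide on every finite prefix's information set — and conversely every play of $\U(\G)$ following $\straa$ is the image of at least one play of $\G$. A small subtlety worth spelling out: moves in $\U(\G)$ are of the form $(a,u')$ (edges of the graph game $\G_{\U(\G)}$), so the correspondence is really $c_{t+1} \mapsto (\act(c_{t+1}), [\pi(t+1)]_{\sim})$; distinct moves $c$ in $\G$ with the same action that lead to the same successor information set get identified, which is exactly the expected behaviour and does not affect the sequence of colors.

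Finally, outcome equality on colors. By definition $\lambda^{\U}([\tau]_{\sim}) = \lambda(\tau)$, so along corresponding plays $\pi$ (in $\G$) and $([\pi(t)]_{\sim})_t$ (in $\U(\G)$) the emitted color sequences agree stepwise: $\hat{\lambda}(\pi) = \lambda(\pi(1))\lambda(\pi(2))\dots = \lambda^{\U}([\pi(1)]_{\sim})\lambda^{\U}([\pi(2)]_{\sim})\dots = \hat{\lambda}^{\U}$ of the corresponding path. Taking the union over all plays following $\tilde{\straa}$, respectively all paths following $\straa$, and using the bijection from the previous paragraph gives $\hat{\lambda}(\outcome(\tilde{\straa})) = \hat{\lambda}^{\U}(\outcome(\straa))$. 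The main obstacle, such as it is, is purely bookkeeping: being careful about the fact that moves of the graph game $\U(\G)$ live in $A \times U$ rather than in $\Gamma$, so that the plays-to-paths correspondence is not literally the identity on move sequences but only color-preserving; once that is set up, every verification (information-consistency, mutual inverseness, the ``follows'' equivalence, stepwise color agreement) is a direct unfolding of the definitions, with perfect recall invoked solely to guarantee that $\U(\G)$ is a tree so that paths and nodes are interchangeable.
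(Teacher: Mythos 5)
Your construction of the bijection on strategies and the forward inclusion $\hat{\lambda}(\outcome(\tilde{\straa})) \subseteq \hat{\lambda}^{\U}(\outcome(\straa))$ are fine and match the paper. The gap is in the converse direction, at the sentence ``conversely every play of $\U(\G)$ following $\straa$ is the image of at least one play of $\G$''. This is the only nontrivial step of the lemma and you assert it without argument; it is not a direct unfolding of the definitions. An edge $(v_i,v_{i+1}) \in E^{\U}_{a}$ only guarantees that \emph{some} history $\tau \in v_i$ has an extension $\tau c \in v_{i+1}$ with $\act(c)=a$; it does not guarantee that \emph{every} history in $v_i$ does, because indistinguishability is prefix-closed downward but not extension-closed upward. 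So a greedy attempt to lift an infinite path $v_0, v_1, v_2,\dots$ of $\U(\G)$ to a single play $c_1 c_2 \dots$ of $\G$ can get stuck: the witness for the edge $v_i \to v_{i+1}$ need not extend the witness you chose for $v_{i-1}\to v_i$. What is true is that the set of histories whose successive prefixes lie in $v_0, v_1,\dots$ forms an infinite, prefix-closed, finitely branching (degree at most $\abs{\Gamma}$) subtree of $\Gamma^*$; the paper then invokes K\"onig's lemma to extract an infinite branch $\theta$, checks via the action-visibility axiom that $\theta$ follows $\tilde{\straa}$, and concludes $\hat{\lambda}(\theta) = \hat{\lambda}^{\U}(\pi)$. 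Some such compactness argument is unavoidable here, and your proof is incomplete without it.

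A second, more cosmetic point: the plays-to-paths map is not a bijection, only a color-preserving surjection onto the paths following $\straa$ (distinct indistinguishable plays of $\G$ collapse to the same path of information sets). You half-acknowledge this but still phrase the argument as a bijection between plays; for the stated conclusion, which is an equality of \emph{images} under $\hat{\lambda}$ and $\hat{\lambda}^{\U}$, surjectivity in both directions is all you need and all you can get, so the claim should be weakened accordingly.
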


\begin{longversion}
\begin{proof}
  By the correspondence between $\straa$ and $\tilde{\straa}$, 
  for each play $c_1 c_2 \dots \in \outcome(\tilde{\straa})$,
  the sequence $a_1 [c_1]_{\sim}\, a_2 [c_1 c_2]_{\sim} \dots$,
  where $a_i = \act(c_i)$ for all $i \geq 1$ is a play in the game on $\U(\G)$ 
  that follows~$\straa$ and forms a path from $[\epsilon]_{\sim}$, thus
  the colors $\lambda^{\U}([c_1 \dots c_i]_{\sim}) = \lambda(c_i)$
  are equal in every stage~$i \geq 1$.

  To show, conversely, that $\hat{\lambda}^{\U}(\outcome(\straa)) \subseteq 
  \lambda(\outcome (\tilde{\straa})$,
  consider a play~$\pi = a_1 v_1 a_2 v_2 \dots$ in $\outcome(\straa)$, 
  and for every $i \geq 0$ let~$\tau_i \in v_{i}$ be a history of the class $v_i$,
  and note that $\tau_i$ follows $\tilde{\straa}$ in $\G$ and $\lambda(\tau_i) = \lambda^{\U}(\pi(i))$.
  The set of all histories $\tau_i$ forms an infinite subtree of $\Gamma^*$, which has degree
  bounded by $\abs{\Gamma}$, and thus contains an infinite path $\theta$ 
  by K{\"o}nig's lemma~\cite{Konig36}.
  It follows that $\hat{\lambda}^{\U}(\pi) = \hat{\lambda}(\theta)$ and 
  $\theta \in \outcome (\tilde{\straa})$, which establishes the desired inclusion.
\end{proof}
\end{longversion}

\subsection{Bisimulation}
To construct transformations that allow taking strategies back and forth between 
games systematically, we use the classical notion of bisimulation.
A bisimulation between two graphs $\G$ and $\H$ with the usual vocabulary,
is a relation $Z \subseteq V^\G \times V^\H$ such that,
every related pair of nodes $(u, v) \in Z$ agree on the color
$\lambda^\G( u ) = \lambda^\H( v )$ and

\begin{description}[leftmargin=3.5em,style=nextline]  
\item[(Zig)]
  for each action~$a \in A$ and every edge $(u, u') \in E_a^\G$,
  there exists an edge $(v, v') \in E_a^\H$ such that $(u', v') \in Z$, and
\item[(Zag)]
  for every action~$a \in A$ and every edge $(v, v') \in E_a^\H$,
  there exists an edge $(u, u') \in E_a^\G$ such that $(u', v') \in Z$.
\end{description}

One can verify easily that the union of two bisimulations is again a 
bisimulation, and thus the coarsest bisimulation between two transition graphs
can be obtained by taking the union of of all bisimulations between them:
two nodes $u \in V^\G$ and $v \in V^\H$ are \emph{bisimilar}, denoted by $u \approx v$,
if there exists a bisimulation that contains $(u, v)$.
By extension, we say that two graphs $\G$ and $\H$ are bisimilar
if their initial nodes are bisimilar, $v_\init^\G \approx v_\init^\H$. 
As a basic notion of dynamic equivalence,
bisimulation has been studied widely and in different variants~\cite{BK08,San11}.
For games with perfect information on graphs, it is folklore that winning strategies
are preserved across bisimilar representations. 

\begin{lemma}\label{lem:bisimulation-preserves-strat}
Given two bisimilar game graphs $\G$ and $\H$ and a logical specification $L \subseteq C^{\omega}$, 
the following equivalence holds:
there exists a winning strategy for $L$ in $\G$ if and only if 
there exists a winning strategy for $L$ in $\H$.
\end{lemma}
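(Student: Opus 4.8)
The plan is to prove the two implications symmetrically. Since the converse of a bisimulation is again a bisimulation (the roles of (Zig) and (Zag) simply swap), it is enough to show that a winning strategy $\straa$ for $L$ in $\G$ yields one for $L$ in $\H$; the other direction then follows by exchanging $\G$ and $\H$. So I would fix a bisimulation $Z \subseteq V^\G \times V^\H$ with $(v_\init^\G, v_\init^\H) \in Z$, which exists because $\G \approx \H$, and build out of $\straa$ a ``simulating'' strategy $\strab$ in $\H$: one all of whose plays are mirrored, stage by stage, by a $\straa$-play in $\G$ carrying the same colours, and hence win exactly when the $\straa$-plays do.

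The core of the proof is a definition, by simultaneous recursion on history length, of $\strab$ together with a map $g$ that sends every history $h$ of $\H$ which forms a path from $v_\init^\H$ and follows $\strab$ to a history $g(h)$ of $\G$ that forms a path from $v_\init^\G$, follows $\straa$, has the same sequence of actions and the same sequence of colours as $h$, and whose last node is $Z$-related to the last node of $h$. I would start with $g(\init)=\init$ (the last nodes are then the initial nodes, which are $Z$-related), put $\strab(h):=\straa(g(h))$ on such histories and let $\strab$ be an arbitrary fixed action elsewhere, and handle the step with (Zag): if $h$ ends at $w$, $g(h)$ ends at $v$ with $(v,w)\in Z$, and the environment extends $h$ by a move $(a,w')$ with $a=\strab(h)=\straa(g(h))$ and $(w,w')\in E_a^\H$, then (Zag) produces $v'$ with $(v,v')\in E_a^\G$ and $(v',w')\in Z$; I fix one such $v'$ (say, least in a fixed well-order of $V^\G$) and set $g\bigl(h\cdot(a,w')\bigr):=g(h)\cdot(a,v')$. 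All the invariants are then immediate; in particular the colours agree because $(v',w')\in Z$ forces $\lambda^\G(v')=\lambda^\H(w')$.

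For correctness I would take a play $\pi$ of $\H$ following $\strab$. If some prefix of $\pi$ fails to form a path from $v_\init^\H$, its colour sequence contains $\bot$ and $\pi$ is winning by the convention built into the game-graph encoding. Otherwise $g$ is defined along every prefix $\pi(t)$, the histories $g(\pi(t))$ increase to a limit play $\theta$ of $\G$, and $\theta$ is a genuine path following $\straa$, so $\theta\in\outcome(\straa)$; since $\straa$ is winning and $\theta$ is a path (hence its colour sequence does not contain $\bot$), that colour sequence lies in $L$. As $g(\pi(t))$ and $\pi(t)$ have $Z$-related last nodes they carry the same colour for every $t$, so $\theta$ and $\pi$ have identical colour sequences, whence $\pi$'s colour sequence is in $L$ and $\pi$ is winning. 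Thus $\strab$ is a winning strategy for $L$ in $\H$.

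I do not expect a serious obstacle: this is the folklore fact that bisimulation preserves winning strategies in graph games. The only points needing a little care are the $\bot$-bookkeeping of the game-graph-as-repeated-game encoding — recording that the player automatically wins once the environment leaves the edge relation, so that $g$ need only be defined along faithful paths and the limit argument goes through — and the (harmless) observation that, since a strategy may be an arbitrary function, the nondeterminism of (Zag) can be resolved by an arbitrary choice. Everything else is routine induction and a König-style limit passage.
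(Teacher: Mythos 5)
Your proof is correct; the paper states this lemma without proof, appealing to the folklore fact that bisimulation preserves winning strategies in perfect-information graph games, and your argument (symmetry of bisimulation, then a simulating strategy built by recursion on history length via (Zag), with the $\bot$-convention absorbing histories that leave the edge relation) is exactly the standard argument that folklore refers to. The only cosmetic remark is that no K\"onig-style argument is actually needed: since you resolve the (Zag) nondeterminism by a fixed choice, the histories $g(\pi(t))$ form an increasing chain whose union is the limit play directly.
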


Moreover, if there exists a functional bisimulation $k: V^\G \to V^\H$ 
containing the initial nodes of the two graphs $\G$ and $\H$, $k(v_\init^\G) = v_\init^\H$,
then there exists a pruning of the unfolding of $\G$ that is isomorphic
to the unfolding of $\H$.

\subsection{Rectangular morphisms}\label{sec:covering}

Throughout this section, we fix a game $\G = \tuple{A,\Gamma,\act,\sim, \lambda}$ 
and a winning condition $L \subseteq C^{\omega}$. 

Intuitively, we aim at constructing a finite-state abstraction of the 
information tree, as a graph bisimilar to the information tree. 
The greatest difficulty is that the information tree may be of unbounded branching,
while a finite-state abstraction must have bounded branching by definition. 
The key is to be able to describe the navigation through the information tree in the
universe of histories: given an information set $u$ identified by an history $\tau \in u$
(that is $u = [\tau]_{\sim}$), the successors of $u$ can be identified by
the histories $\tau' c$ obtained by taking a companion $\tau' \sim \tau$
and then a successor $\tau' c$ by appending a move $c \in \Gamma$.

Intuitively, our finite-state abstraction is induced by a finite-valued function $h:\Gamma^* \to P$
such that we can deduce, given the value of $h(\tau)$, the following elements useful
to navigate through the information tree:
\begin{enumerate}
\item the set of values $\{h(\tau') \mid \tau' \sim \tau\}$,  
\item the set of values $\{h(\tau c) \mid c \in \Gamma\}$, and
\item the value of $\lambda(\tau)$.
\end{enumerate}

Note that these values should be deducible without knowing the value of $\tau$,
so that we can faithfully navigate in the universe $h(\Gamma^*) \subseteq P$
(in the sequel we assume w.l.o.g that $P = h(\Gamma^*)$). 
This is possible when the function $h$ is a \emph{rectangular morphism}
for $\G$, that is satisfying the following properties, 
for all histories $\tau, \tau' \in \Gamma^*$ such that $h( \tau ) = h( \tau' )$:
  \begin{align*}
    \tag{rectangularity}  & h([\tau]_{\sim}) = h([\tau']_{\sim}), \\
    \tag{morphism} & h( \tau c) = h( \tau' c ), \text{ for all } c \in \Gamma,\\
    \tag{refinement} & \lambda( \tau ) = \lambda( \tau' ). \\
  \end{align*}
Note that a finite-valued morphism on $\Gamma^*$ is a regular function.

A variant of the refinement property is to require that $h$ is a refinement of 
the automaton $\M$ defining $\lambda$, that is, 
if $h(\tau) = h(\tau')$, then $\delta(q_\init, \tau) = \delta(q_\init, \tau')$
where $q_\init$ is the initial state of $\M$. 
This variant implies the original refinement property.
If $h$ is a morphism and $\M$ is the minimal automaton defining $\lambda$, 
then the two properties (refinement and the variant) are equivalent. 
In the sequel, we extend $\lambda$ to the set $P$ and define $\lambda(p) = \lambda(\tau)$
for all $\tau$ such that $h(\tau) = p$, which is well defined by the refinement property of $h$.

We show that the solution of the synthesis problem for games with imperfect information boils down
to the construction of a function $h$ satisfying the
four conditions of being $(1)$~rectangular, $(2)$~a morphism, $(3)$~a refinement, and $(4)$ finite-state. It is easy to define functions satisfying any three of the four conditions, namely:

\begin{itemize}
\item all but rectangular ($2,3,4$): $h_1(\tau) = \delta(q_\init, \tau)$; 
\item all but a morphism ($1,3,4$): $h_2(\tau) = \tuple{\delta(q_\init, \tau), \{\delta(q_\init, \tau') \mid \tau' \sim \tau\} }$;
\item all but a refinement ($1,2,4$): $h_3$ is constant.   
\item all but finite-state ($1,2,3$): $h_4(\tau) = \tau$. 
\end{itemize}

The proof that $h_i$ satisfies conditions $\{1,2,3,4\} \setminus \{i\}$ is 
straightforward and left to the reader. We note that the rectangularity of $h_2$
is a corollary of the fact that for all functions $f$ on $\Gamma^*$, 
the function $h$ defined by $h(\tau) = \tuple{f(\tau), f([\tau]_{\sim})}$
for all $\tau \in \Gamma^*$ is rectangular (which is also straightforward 
to prove).

Partial-observation games (i.e., FIP games with no observer) admit a rectangular
morphism of the form $h_2$ where $\delta$ is the transition function of the
synchronous product of the Mealy automaton defining the coloring $\lambda$
and the Mealy automaton defining the observation function $\beta_0$. 
The function $h_2$ is then morphic~\cite{Rei84,CDHR07}: given $q = \delta(q_\init, \tau)$ 
and $u = \{\delta(q_\init, \tau') \mid \tau' \sim \tau\}$, thus $h_2(\tau) = (q,u)$,
and given a move $c$, we can define $q' = \delta(q, c)$ and 
$u' = \{\delta(q,c') \mid \exists c' \in \Gamma: \beta_0(q,c') = \beta_0(q,c) \}$,
and show that $h_2(\tau c) = (q',u')$.

In the rest of this section, we fix a rectangular morphism~$h$ for $\G$.
The role of rectangularity appears in the two crucial lemmas below, which lead
to the construction of a finite-state abstraction of the information tree.

\mynewnote{L: convey that the result is more surprising/challenging than it looks.}

\begin{lemma}\label{lem:RH-equivalence}
The relation~$R^{\H} = \bigl\{ \bigl( h(\tau), h(\tau') \bigr) \mid \tau \sim \tau' \bigr\}$
is an equivalence.
\end{lemma}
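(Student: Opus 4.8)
The plan is to show that $R^{\H}$ is reflexive, symmetric, and transitive. Reflexivity is immediate: for any $\tau \in \Gamma^*$ we have $\tau \sim \tau$, hence $(h(\tau), h(\tau)) \in R^{\H}$, and since $P = h(\Gamma^*)$ every element of $P$ is hit. Symmetry follows directly from the symmetry of $\sim$: if $(h(\tau), h(\tau')) \in R^{\H}$ via $\tau \sim \tau'$, then $\tau' \sim \tau$ witnesses $(h(\tau'), h(\tau)) \in R^{\H}$.

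The only non-trivial part is transitivity, and this is where rectangularity of $h$ is essential. Suppose $(p, q) \in R^{\H}$ and $(q, r) \in R^{\H}$. By definition there are histories $\tau_1 \sim \tau_1'$ with $h(\tau_1) = p$, $h(\tau_1') = q$, and histories $\tau_2 \sim \tau_2'$ with $h(\tau_2) = q$, $h(\tau_2') = r$. The obstacle is that $\tau_1'$ and $\tau_2$ need not be the same history — they merely share the same $h$-value $q$. Here the rectangularity property comes in: since $h(\tau_1') = h(\tau_2) = q$, rectangularity gives $h([\tau_1']_{\sim}) = h([\tau_2]_{\sim})$. In particular, since $\tau_1 \in [\tau_1']_{\sim}$, we have $h(\tau_1) \in h([\tau_1']_{\sim}) = h([\tau_2]_{\sim})$, so there exists a history $\tilde{\tau} \sim \tau_2$ with $h(\tilde{\tau}) = h(\tau_1) = p$.

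Now I would chain things together. We have $\tilde{\tau} \sim \tau_2$ and $\tau_2 \sim \tau_2'$, hence by transitivity of $\sim$, $\tilde{\tau} \sim \tau_2'$. Since $h(\tilde{\tau}) = p$ and $h(\tau_2') = r$, this yields $(p, r) \in R^{\H}$, establishing transitivity. I expect the main subtlety to be exactly this step — recognizing that rectangularity is precisely the condition that lets one "re-pick" a representative of an $h$-class that also lies in a prescribed $\sim$-class, which is what makes the relation transitive despite $h$ being many-to-one. Everything else is routine bookkeeping with the definitions of equivalence relations.
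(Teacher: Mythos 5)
Your proof is correct and follows essentially the same route as the paper's: reflexivity and symmetry are immediate, and transitivity is obtained by applying rectangularity to the two middle histories sharing the same $h$-value, re-picking a representative in the appropriate $\sim$-class, and chaining via transitivity of $\sim$. No gaps.
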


\begin{proof}
  It is immediate that $R^{\H}$ is reflexive (as $h$ is surjective) and symmetric. 

  To show that $R^{\H}$ is transitive, consider
  $\tau_1 \sim \tau_2$ and $\tau_2' \sim \tau_3$
  such that $h(\tau_2) = h(\tau_2')$,
  and show that there exists $\tau_1' \sim \tau_3'$ such that
  $h(\tau'_1) = h(\tau_1)$ and $h(\tau'_3) = h(\tau_3)$.
  By rectangularity, since $h(\tau_2) = h(\tau_2')$ we have 
  $h([\tau_1]_{\sim}) = h([\tau_3]_{\sim})$ (call that set $Y$) 
  and since in particular $h(\tau_1) \in Y$ and $h(\tau_3) \in Y$, 
  there exists $\tau_3' \in [\tau_1]_{\sim}$ such that $h(\tau_3') = h(\tau_3)$.
  We can take $\tau_1' = \tau_1$ and the result follows.
\end{proof}

The tight link between the equivalence classes of $\sim$ and of $R^{\H}$ 
is described in Lemma~\ref{lem:h-preserves-eq-class}.

\begin{lemma}\label{lem:h-preserves-eq-class}
$h([\tau]_{\sim}) = [h(\tau)]_{R^\H}$ for all $\tau \in \Gamma^*$.
\end{lemma}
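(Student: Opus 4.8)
The plan is to show the two set-inclusions $h([\tau]_\sim) \subseteq [h(\tau)]_{R^\H}$ and $[h(\tau)]_{R^\H} \subseteq h([\tau]_\sim)$ separately, both essentially unwinding the definitions with one appeal to rectangularity.

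First, for the inclusion $h([\tau]_\sim) \subseteq [h(\tau)]_{R^\H}$, I would take an arbitrary element of $h([\tau]_\sim)$, which has the form $h(\tau')$ for some $\tau' \sim \tau$. By the very definition of $R^\H$, the pair $(h(\tau), h(\tau'))$ belongs to $R^\H$, hence $h(\tau') \in [h(\tau)]_{R^\H}$. This direction needs no rectangularity at all.

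Second, for the reverse inclusion $[h(\tau)]_{R^\H} \subseteq h([\tau]_\sim)$, I would take $p \in [h(\tau)]_{R^\H}$, so $(h(\tau), p) \in R^\H$. By definition of $R^\H$ there exist histories $\theta \sim \theta'$ with $h(\theta) = h(\tau)$ and $h(\theta') = p$. Now I want to find a history $\tau'' \sim \tau$ with $h(\tau'') = p$; then $p = h(\tau'') \in h([\tau]_\sim)$ and we are done. Here is where rectangularity enters: since $h(\theta) = h(\tau)$, rectangularity gives $h([\theta]_\sim) = h([\tau]_\sim)$. Because $\theta' \sim \theta$, we have $h(\theta') \in h([\theta]_\sim) = h([\tau]_\sim)$, i.e.\ $p = h(\theta') = h(\tau'')$ for some $\tau'' \sim \tau$, exactly as required.

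I expect no serious obstacle: the statement is a direct consequence of rectangularity together with the definition of $R^\H$, and Lemma~\ref{lem:RH-equivalence} already isolates the only slightly delicate argument (transitivity), which this lemma essentially reuses in a one-step form. The one point to state carefully is that $[h(\tau)]_{R^\H}$ makes sense as an equivalence class only because $R^\H$ is an equivalence, which is precisely Lemma~\ref{lem:RH-equivalence}; I would cite it at the outset so the notation is justified.
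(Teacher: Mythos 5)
Your proof is correct and follows essentially the same route as the paper's: the forward inclusion is immediate from the definition of $R^\H$, and the reverse inclusion uses rectangularity applied to the witnesses $\theta\sim\theta'$ (the paper's $\tau_1\sim\tau_1'$) to pull $p$ back into $h([\tau]_\sim)$. Your remark that Lemma~\ref{lem:RH-equivalence} is needed to justify the equivalence-class notation is a good point of care, though the paper leaves it implicit.
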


\begin{proof}
  That $h([\tau]_{\sim}) \subseteq [h(\tau)]_{R^\H}$ follows by definition of $R^{\H}$.
  For the converse inclusion, let $p \in [h(\tau)]_{R^\H}$ and show that 
  there exists $\tau' \in [\tau]_{\sim}$ with $h(\tau') = p$. By definition
  of $R^\H$ there exist $\tau_1 \sim \tau_1'$ such that $h(\tau_1) = h(\tau)$ 
  and $h(\tau_1') = p$. It follows by rectangularity that $h([\tau_1]_{\sim}) = h([\tau]_{\sim})$
  and thus there exists $\tau' \in [\tau]_{\sim}$ with $h(\tau') = h(\tau_1') = p$
  as required.
\end{proof}

Consider the semi-automaton $\H = \tuple{P, p_{\init},\delta,\lambda}$ 
where $p_{\init} = h(\epsilon)$ and $\delta(p,c) = p'$ if there exists $\tau \in \Gamma^*$
such that $h(\tau) = p$ and $h(\tau c) = p'$, which is well defined by the morphism
property of $h$. 
The following result is an immediate consequence of Lemma~\ref{lem:h-preserves-eq-class}.

\begin{theorem}\label{thm:bisimilar}
  Let~$\G$ be a game with indistinguishability relation~$R$. If $h$ is a rectangular homomorphism
  on $\G$, the information tree $\U(\G)$ is bisimilar to the quotient of the semi-automaton $\H$  
  by the equivalence $R^\H$ constructed via~$h$.
\end{theorem}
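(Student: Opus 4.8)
The plan is to exhibit an explicit bisimulation relation $Z$ between the information tree $\U(\G)$ and the quotient $\quot{\H}{R^\H}$, and check the three conditions (color agreement, Zig, Zag). The natural candidate is
\[
  Z = \bigl\{ \bigl( [\tau]_{\sim},\, [h(\tau)]_{R^\H} \bigr) \mid \tau \in \Gamma^* \bigr\}.
\]
By Lemma~\ref{lem:h-preserves-eq-class} we have $h([\tau]_{\sim}) = [h(\tau)]_{R^\H}$, so $Z$ can equivalently be described via the map sending the information set $[\tau]_{\sim}$ to the $R^\H$-class $h([\tau]_{\sim})$; in particular $Z$ is the graph of a function from nodes of $\U(\G)$ to nodes of $\quot{\H}{R^\H}$, and it relates the initial nodes since $u_\init = [\epsilon]_{\sim}$ maps to $[h(\epsilon)]_{R^\H} = [p_\init]_{R^\H}$.

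First I would check well-definedness and color agreement: if $[\tau]_{\sim} = [\tau']_{\sim}$ then $\tau \sim \tau'$, so $(h(\tau), h(\tau')) \in R^\H$ by definition, hence $[h(\tau)]_{R^\H} = [h(\tau')]_{R^\H}$, so $Z$ really is a relation on nodes (not on representatives). For colors, recall $\lambda^\U([\tau]_{\sim}) = \lambda(\tau)$ and, by the refinement property, $\lambda$ descends to $P$ with $\lambda(h(\tau)) = \lambda(\tau)$, and this is constant on $R^\H$-classes (again by refinement: if $(h(\tau), h(\tau')) \in R^\H$ then $\tau \sim \tau'$ so $\lambda(\tau) = \lambda(\tau')$); thus the colors of the two related nodes both equal $\lambda(\tau)$.

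Next, the Zig condition. An edge $([\tau]_{\sim}, [\tau c]_{\sim}) \in E_a^\U$ with $\act(c) = a$ must be matched. In $\H$, by the morphism property, $\delta(h(\tau), c) = h(\tau c)$, so $(h(\tau), h(\tau c))$ is an edge of $\H$ on a move $c$ with $\act(c) = a$; descending to the quotient, $([h(\tau)]_{R^\H}, [h(\tau c)]_{R^\H})$ is an edge in $E_a^{\quot{\H}{R^\H}}$, and $([\tau c]_{\sim}, [h(\tau c)]_{R^\H}) \in Z$. For the Zag condition we need a bit more care because of how the quotient edge relation is defined: an edge from $[h(\tau)]_{R^\H}$ in $\quot{\H}{R^\H}$ on action $a$ arises from \emph{some} state $p$ in the class with a $\delta$-successor $\delta(p,c) = p'$ for some $c$ with $\act(c) = a$. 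Write $p = h(\tau_1)$; since $p \in [h(\tau)]_{R^\H}$, by Lemma~\ref{lem:RH-equivalence} (and its proof) there is $\tau_1' \sim \tau$ with $h(\tau_1') = h(\tau_1) = p$. Wait --- more directly: $[h(\tau)]_{R^\H} = h([\tau]_{\sim})$ by Lemma~\ref{lem:h-preserves-eq-class}, so $p = h(\tau')$ for some $\tau' \sim \tau$. Then by the morphism property $h(\tau' c) = \delta(p, c) = p'$, and $([\tau]_{\sim}, [\tau' c]_{\sim}) = ([\tau']_{\sim}, [\tau' c]_{\sim})$ is an edge in $E_a^\U$ (using $[\tau]_{\sim} = [\tau']_{\sim}$), with $([\tau' c]_{\sim}, p') \in Z$ since $[h(\tau' c)]_{R^\H} = [p']_{R^\H} = p'$. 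That closes Zag.

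The main obstacle is the Zag direction, and specifically making sure the edge relation of the quotient semi-automaton is pinned down correctly: it is the image of $\H$'s edges under the quotient map, so an outgoing edge of an $R^\H$-class may be "witnessed" by any representative of the class, which is why we need Lemma~\ref{lem:h-preserves-eq-class} to convert a representative $h(\tau') \in [h(\tau)]_{R^\H}$ back into a genuine $\sim$-companion $\tau' \sim \tau$ in the universe of histories. Once that translation is available, the morphism property handles the successor step and refinement handles colors, so the remaining verifications are routine. I would also remark that since $Z$ is functional, Theorem~\ref{thm:bisimilar} together with the last sentence of the bisimulation subsection yields that a pruning of the unfolding of $\U(\G)$ is isomorphic to the unfolding of $\quot{\H}{R^\H}$, which is the form in which the bisimulation will be used for strategy transfer.
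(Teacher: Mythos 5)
Your proof is correct and follows exactly the route the paper intends: the paper dismisses the theorem as an ``immediate consequence'' of Lemma~\ref{lem:h-preserves-eq-class} by asserting that $u=[\tau]_{\sim}\mapsto[h(\tau)]_{R^{\H}}$ is a p-morphism, and you have simply written out the Zig/Zag/color verifications, with Lemma~\ref{lem:h-preserves-eq-class} doing the real work in the Zag direction just as you identify. One small imprecision: $(h(\tau),h(\tau'))\in R^{\H}$ does not literally imply $\tau\sim\tau'$ (only that some $\tau_1\sim\tau_1'$ have the same $h$-images), but combining refinement with information-consistency of $\lambda$ still yields $\lambda(\tau)=\lambda(\tau')$, so the color argument stands.
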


Specifically, the function~$k$ induced by~$h$ that maps each information set $u = [\tau]_{\sim}$ 
to the set $k(u) = [h(\tau)]_{R^\H}$ is a p-morphism. 

Using Theorem~\ref{thm:bisimilar}, the solution of the synthesis problem
for a winning condition $L$
boils down to showing the existence of a rectangular morphism $h$, 
constructing the automaton $\H$, and solving the perfect-information
game played on $\H$ for $L$.
Note that this reduction holds for arbitrary winning conditions~$L$,
but is of practical interest only if the synthesis problem for 
perfect-information games is decidable, which is the case of $\omega$-regular
winning conditions~\cite{automata}.

Although Theorem~\ref{thm:bisimilar} does not show that having 
a rectangular morphism is required to solve the synthesis problem 
for games with imperfect information (nor to guarantee the existence of a finite 
bisimulation quotient), this approach appears to be sufficient
to show the decidability of partial-observation games and FIP games. 
Moreover, as discussed above, the requirement of rectangularity is very natural (if not necessary)
as we want to ``simulate'' the navigation through the information tree.

\section{Solving FIP Games}\label{sec:solving-FIP}

We solve FIP games by constructing a rectangular morphism
and reducing to a game of perfect information using Theorem~\ref{thm:bisimilar}.

\subsection{Pre-processing}\label{sec:pre-processing}
To simplify the presentation, we show that every FIP can be transformed
into an equivalent one where the observation functions are trivial.
Intuitively, the move alphabet in the transformed FIP is $\Gamma' = \Sigma^{n+1}$
where $\Sigma$ is the observation alphabet of the original FIP,
and $n$ is the number of observers. As the moves can now be any profile
of observations from the original FIP, we use the winning condition 
to ensure that if the sequence of observations in the transformed
FIP is not possible in the original FIP, then the player wins.
We consider the product of all Mealy machines for the 
observation functions in the original FIP, which defines a function
$\beta': \Gamma^* \to \Sigma^{n+1}$ (where $\Gamma$ is the set of moves
in the original FIP) such that $\beta'(\tau) = \tuple{\beta_0(\tau),
\beta_1(\tau),\dots,\beta_n(\tau)}$ for all $\tau \in \Gamma^*$ 
and we consider the language 
$L = \{\beta'(\tau) \mid \tau \in \Gamma^* \} \subseteq (\Gamma')^*$,
which is a regular language (a \dfa recognising $L$ can be obtained by
a standard subset construction on the Mealy machine 
defining $\beta'$). The winning condition in the transformed FIP 
accepts all plays in $(\Gamma')^*$ that have a winning pre-image by $\beta'$ in 
the original FIP, as well as all plays  in $(\Gamma')^*$ that have a (finite) 
prefix outside $L$.
The transformed FIP is equivalent to the original one in the sense that 
there exists a winning strategy in the transformed FIP if and only if there exists a 
winning strategy in the original FIP.

\mynote{L: skip the details of the transformation, and the proof? }
\mynote{L: ok, but should we mention the proof is very similar to proof of
Lemma~\ref{lem:info-quotient}? and uses K\"{o}nig's Lemma?}

From now on, we assume without loss of generality
that the move alphabet is $\Gamma = \Sigma^{n+1}$ and the observation
function for player $i \in I$ is defined by $\beta_i(\tau c) = c[i]$,
the component of $c$ corresponding to player~$i$,
for all $\tau \in \Gamma^*$ and $c \in \Gamma$.
For $J\subseteq I$, we denote by $c[J] = (c[j])_{j\in J}$ the observations 
of the coalition~$J$.

We present an alternative characterisation
of the indistinguishability relations $\sim_i$, without resorting to
view graphs. 
Given a move $c \in \Gamma$ and a player~$i \in I$, we define the set $\sync_i(c)$ of players 
communicating (directly, or via other players) with Player~$i$ on move $c$ as follows.
The set $\{i\} \times R_{c[i]}(i)$ 
contains the (direct) communication links in which Player~$i$ is a receiver.
Let $T = \bigcup_{i \in I} \{i\} \times R_{c[i]}(i)$, and the reflexive transitive
closure $T^*$ contains all (direct or indirect) communication links between the players,
so we define $\sync_i(c) = T^*(i)$.
For $J \subseteq I$, let $\sync_J(c) = \bigcup_{i \in J} \sync_i(c)$.

Note that $J \subseteq \sync_J(c)$ for all coalitions~$J$ 
and moves $c \in \Gamma$ (coalitions always communicate with themselves), 
and for the coalition $K = \sync_J(c)$, there is no communication
with other players, $\sync_K(c) = K$. In fact $\sync_K(c) = \sync_J(c)$
for all $J \subseteq K \subseteq \sync_J(c)$.
Finally, note that $\sync_{J \cup K}(c) = \sync_J(c) \cup \sync_K(c)$
and thus $\sync$ is monotone with respect to coalitions:
if $J \subseteq K$, then $\sync_J(c) \subseteq \sync_K(c)$.

\begin{lemma}\label{lem:fip-properties}
For all coalitions $J \subseteq I$, for all histories $\tau \in \Gamma^*$ 
and moves $c \in \Gamma$, the following properties hold:

\begin{enumerate} 
\item for all $K \subseteq I$ such that $J \subseteq K \subseteq \sync_J(c)$, we have $[\tau c]_{\sim_K} = [\tau c]_{\sim_J}$, 
\label{lem:fip-properties:P1}

\item for $K = \sync_J(c)$, we have $[\tau c]_{\sim_K} = \{ \tau'd \mid \tau' \sim_K \tau \land d[K] = c[K] \}$.
\label{lem:fip-properties:P2} 
\end{enumerate} 
\end{lemma}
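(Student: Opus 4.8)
We have a FIP where (after pre-processing) the move alphabet is $\Gamma = \Sigma^{n+1}$, and $\beta_i(\tau c) = c[i]$. The indistinguishability $\sim_i$ is defined via the view graph. We must prove two things about $\sync$ and the $\sim_J$: (P1) for $J \subseteq K \subseteq \sync_J(c)$, the information sets $[\tau c]_{\sim_K}$ and $[\tau c]_{\sim_J}$ coincide; (P2) for the "closed" coalition $K = \sync_J(c)$, $[\tau c]_{\sim_K}$ has a rectangular product shape: companions of $\tau$ under $\sim_K$, followed by any move $d$ with $d[K] = c[K]$.
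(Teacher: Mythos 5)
Your submission contains no proof: it only restates the two claims of the lemma in your own words. Nothing in what you wrote establishes either property, so the entire argument is missing.

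Concretely, the content you would need to supply is the following. For part~(2), you must unpack the view-graph definition of $\sim_i$ at an extended history $\tau c$ of length $\ell+1$: the nodes reachable from $(i,\ell+1)$ in $\View(\tau c)$ that lie at level $\ell+1$ are exactly $\sync_i(c)\times\{\ell+1\}$ (this is where the reflexive--transitive closure $T^*$ in the definition of $\sync$ enters, since reachability along the level-$(\ell+1)$ communication edges is precisely the closure of the direct links $R_{c[j]}(j)$), and the nodes reachable at levels $\le\ell$ are exactly those reachable from $\{(j,\ell)\mid j\in\sync_i(c)\}$ in $\View(\tau)$. From this one reads off that $\tau c\sim_i\tau' d$ iff $\tau\sim_{\sync_i(c)}\tau'$ and $d[\sync_i(c)]=c[\sync_i(c)]$; one also has to check that agreement of the observations forces $\sync_i(d)=\sync_i(c)$, so that the reachable sets in $\View(\tau c)$ and $\View(\tau' d)$ match. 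Intersecting over all $i\in J$ and using $\sync_J(c)=\bigcup_{i\in J}\sync_i(c)$ yields the stated product form for $K=\sync_J(c)$. Part~(1) then follows from part~(2) together with the observation (already noted in the paper) that $\sync_K(c)=\sync_J(c)$ for every $K$ with $J\subseteq K\subseteq\sync_J(c)$, so both information sets are given by the same right-hand side. None of these steps appear in your write-up, so as it stands there is nothing to evaluate as a proof.
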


Indeed, Lemma~\ref{lem:fip-properties} characterises
the indistinguishability relations~$\mathop{\sim_J}$ of coalitions~$J$, which can be
be defined inductively by $[\tau c]_{\sim_J} = [\tau c]_{\sim_K} = 
\{ \tau'd \mid \tau' \in [\tau]_{\sim_K} \land d[K] = c[K] \}$ where $K = \sync_J(c)$.
Note that, for the grand coalition~$I$, the relation $\sim_I$ is the identity.

\subsection{A rectangular morphism for FIP games}

We fix a FIP game $\G = \tuple{A,\Sigma^{n+1},\act,\sim,\lambda}$,
where $\sim$ is defined by Lemma~\ref{lem:fip-properties}, 
and $\lambda$ is defined by a given Mealy machine 
$\M = \tuple{Q, \Gamma, C, q_{\varepsilon}, \delta_{\M}, \lambda}$.

We define a function $h$ for our game $\G$ and then show that it is a rectangular morphism
for $\G$. 
The configuration $h(\tau)$ at a history $\tau$ is a vector indexed by all
coalitions $J$ containing the main Player, that is, $0\in J$.
The component $h(\tau)[J]$ corresponding to a coalition $J \neq I$
is a knowledge set for $J$, which consists of a set of configuration components
corresponding to all coalitions $K$ greater than $J$, where the configurations
are calculated at histories $\tau' \sim_J \tau$ indistinguishable from $\tau$
for coalition $J$. For $J = I$ the grand coalition, the configuration
$h(\tau)[I]$ stores the state reached in $\M$ upon reading $\tau$.
It will be convenient to define $h(\tau)[I]$ as a set (a singleton) for uniform
treatment as a knowledge set. 

For all $\tau \in \Gamma^*$, define
$h(\tau)[I] = \{\delta_{\M}(q_{\init}, \tau)\}$,
and for all $J\subsetneq I, \text{ define } h(\tau)[J] = \{(h(\tau')[K])_{K \in \up{J}} \mid \tau' \sim_J \tau\}$

where $\up{J} = \{ K \subseteq I \mid J \subsetneq K\}$, which defines the function
$h: \Gamma^* \to P$ with $P = \prod_{\{0\} \subseteq J \subseteq I} \Psi_J$ where
$\Psi_I = 2^Q$ and, inductively, $\Psi_J = 2^{\prod_{K \in \up{J}} \Psi_K}$ for all
$J \subseteq I$.

It follows immediately from this definition that the component $h(\tau)[J]$ 
corresponding to a coalition $J$ is information consistent for $\sim_J$.

\begin{lemma}\label{lem:basic-property-h}
For all histories $\tau,\tau'\in \Gamma^*$ and all coalitions $\{0\} \subseteq J\subseteq I$,
if $\tau \sim_J \tau'$, then $h(\tau)[J] = h(\tau')[J]$.
\end{lemma}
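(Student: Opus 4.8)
The statement is essentially definitional, so the plan is to unfold the definition of $h$ and exploit the properties of $\sim_J$ established in Lemma~\ref{lem:fip-properties}. First I would split into two cases according to whether $J = I$ or $J \subsetneq I$. If $J = I$, then $\sim_I$ is the identity, so $\tau \sim_I \tau'$ forces $\tau = \tau'$ and there is nothing to prove; alternatively one notes $h(\tau)[I] = \{\delta_{\M}(q_\init,\tau)\}$ depends only on $\tau$. The substantive case is $J \subsetneq I$, where by definition $h(\tau)[J] = \{(h(\sigma)[K])_{K \in \up{J}} \mid \sigma \sim_J \tau\}$. The key observation is that this set is literally a function of the equivalence class $[\tau]_{\sim_J}$ only: the indexing condition ranges over all $\sigma \sim_J \tau$.

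So the core step is: assuming $\tau \sim_J \tau'$, the classes coincide, $[\tau]_{\sim_J} = [\tau']_{\sim_J}$ (since $\sim_J$ is an equivalence relation — it is the intersection of the equivalences $\sim_i$ over $i \in J$, each of which is an equivalence as $\sim_i$ is induced by an observation-consistency condition). Therefore $\{\sigma \mid \sigma \sim_J \tau\} = \{\sigma \mid \sigma \sim_J \tau'\}$, and consequently
\[
h(\tau)[J] = \{(h(\sigma)[K])_{K \in \up{J}} \mid \sigma \sim_J \tau\}
= \{(h(\sigma)[K])_{K \in \up{J}} \mid \sigma \sim_J \tau'\}
= h(\tau')[J].
\]
This finishes the argument; the remark already made in the text right before the lemma (``the component $h(\tau)[J]$ is information consistent for $\sim_J$'') is exactly this statement, so the proof is really just making that observation precise.

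I do not expect any real obstacle here; the only point requiring a line of care is confirming that $\sim_J$ is genuinely an equivalence relation so that indistinguishable histories have the same class — but this follows immediately since $\sim_J = \bigcap_{i \in J} \sim_i$ and each $\sim_i$ is reflexive, symmetric and transitive (it equates histories of equal length whose labels agree on the set of nodes reachable from $(i,\abs{\tau})$ in the view graph, and that reachable set itself coincides for indistinguishable histories, as noted after the definition of $\sim_i$). One could also phrase the whole proof as a trivial induction on the size of $\up{J}$ (equivalently downward on $J$), but no induction is actually needed: the definition of $h(\tau)[J]$ refers to $h(\sigma)[K]$ for $K \in \up{J}$ at possibly different histories $\sigma$, yet the set being built is still determined by $[\tau]_{\sim_J}$ alone, which is all that is required.
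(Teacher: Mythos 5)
Your proof is correct and matches the paper's treatment: the paper gives no explicit proof, stating only that the claim ``follows immediately from this definition,'' and your argument is precisely the required unfolding --- $h(\tau)[J]$ for $J \subsetneq I$ is by construction a function of the class $[\tau]_{\sim_J}$ alone (with the $J=I$ case trivial since $\sim_I$ is the identity). The one point you rightly flag, that $\sim_J$ is an equivalence so the classes of $\sim_J$-related histories coincide, is also handled correctly.
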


We show that the function $h$ defined above is a rectangular morphism for $\G$.
That $h$ is a refinement is immediate since $h(\tau)[I] = \{\delta_{\M}(q_{\init}, \tau)\}$,
and that $h$ is rectangular is relatively straightforward. The proof that $h$ is a morphism
is more involved.

\begin{lemma}\label{lem:refinement-rectangular-h}
The constructed morphism $h$ is rectangular.
\end{lemma}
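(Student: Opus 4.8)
The plan is to show that whenever $h(\tau) = h(\tau')$ we have $h([\tau]_\sim) = h([\tau']_\sim)$, where here $\sim$ means $\sim_0$. First I would unwind the definition: $[\tau]_\sim = [\tau]_{\sim_{\{0\}}}$, and by construction the component $h(\tau)[\{0\}]$ already records exactly $\{(h(\sigma)[K])_{K \in \up{\{0\}}} \mid \sigma \sim_0 \tau\}$, i.e.\ the set of ``upper profiles'' of all histories in the information set $[\tau]_\sim$. So the entire question is whether $h(\tau) = h(\tau')$ forces the images $h([\tau]_\sim)$ and $h([\tau']_\sim)$ to coincide as \emph{sets of full configurations} in $P$, not merely as sets of upper profiles. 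The key observation to exploit is that a full configuration $p \in P$ is determined by its components $p[J]$ for $\{0\}\subseteq J \subseteq I$, and the component $p[\{0\}]$ together with all the strictly-larger components already appears inside $h(\sigma)[\{0\}]$ for $\sigma = \tau$: concretely, for any $\sigma \sim_0 \tau$, the tuple $(h(\sigma)[K])_{K \in \up{\{0\}}}$ is an element of the set $h(\tau)[\{0\}]$, and conversely $h(\sigma)$ is reconstructible from that tuple together with $h(\sigma)[\{0\}]$ — but $h(\sigma)[\{0\}] = h(\tau)[\{0\}]$ by Lemma~\ref{lem:basic-property-h} since $\sigma \sim_0 \tau$. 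Hence the map $\sigma \mapsto h(\sigma)$ restricted to $[\tau]_\sim$ has image in bijection with the set $h(\tau)[\{0\}]$ via $h(\sigma) \mapsto (h(\sigma)[K])_{K \in \up{\{0\}}}$, and this identification depends only on $h(\tau)[\{0\}]$.

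Carrying this out, I would argue as follows. Fix $\tau, \tau'$ with $h(\tau)=h(\tau')$; in particular $h(\tau)[\{0\}] = h(\tau')[\{0\}]$, call this set $X \subseteq \prod_{K \in \up{\{0\}}}\Psi_K$. Take any $\sigma \sim_0 \tau$. Its profile $x := (h(\sigma)[K])_{K \in \up{\{0\}}}$ lies in $X$, hence also in $h(\tau')[\{0\}]$, so by definition of the latter there is some $\sigma' \sim_0 \tau'$ with $(h(\sigma')[K])_{K \in \up{\{0\}}} = x$. It remains to see $h(\sigma') = h(\sigma)$: they agree on every component $K \in \up{\{0\}}$ by choice of $\sigma'$, and they agree on the component $\{0\}$ because $h(\sigma)[\{0\}] = h(\tau)[\{0\}] = X = h(\tau')[\{0\}] = h(\sigma')[\{0\}]$, using Lemma~\ref{lem:basic-property-h} twice (for $\sigma \sim_0 \tau$ and $\sigma' \sim_0 \tau'$). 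So $h(\sigma) = h(\sigma') \in h([\tau']_\sim)$, giving $h([\tau]_\sim) \subseteq h([\tau']_\sim)$; the reverse inclusion is symmetric, and rectangularity follows.

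The only genuinely delicate point — and the step I expect to need the most care — is the claim that a configuration $p = h(\sigma)$ is fully recovered from the pair consisting of $p[\{0\}]$ and the upper-profile tuple $(p[K])_{K \in \up{\{0\}}}$; this is immediate here only because $\up{\{0\}} = \{K \subseteq I \mid \{0\} \subsetneq K\}$ together with $\{0\}$ itself exhausts all coalitions containing $0$, so the two pieces of data literally list every component of $p$. I would state this explicitly as a one-line remark on the product structure of $P$ before the main argument. Everything else — surjectivity of $h$ onto $P$ used implicitly, the application of Lemma~\ref{lem:basic-property-h}, and the symmetry of the two inclusions — is routine bookkeeping, so the proof should be short once the reconstruction remark is in place.
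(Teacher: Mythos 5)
Your proof is correct and follows essentially the same route as the paper's: you pick a companion $\sigma'\sim_0\tau'$ whose upper-profile tuple matches that of $\sigma$ (possible because $h(\tau)[\{0\}]=h(\tau')[\{0\}]$), and then close the gap on the $\{0\}$-component via Lemma~\ref{lem:basic-property-h}. The explicit remark that a configuration is exactly the pair (its $\{0\}$-component, its $\up{\{0\}}$-profile) is left implicit in the paper but is the same observation.
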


\begin{longversion}
\begin{proof}
To show that $h$ is rectangular, let $h(\tau) = h(\xi)$ and $\tau' \sim \tau$. 
We construct $\xi' \sim \xi$ such that $h(\xi') = h(\tau')$. 
Since $\tau' \sim \tau$ (and $\sim$ is $\sim_0$), the tuple 
$(h(\tau')[K])_{K \in \up{\{0\}}}$ belongs to $h(\tau)$,
and since $h(\tau) = h(\xi)$, there exists $\xi' \sim \xi$ such that 
the tuple $h(\xi')[K] = h(\tau')[K]$ for all $K \in \up{\{0\}}$.
By Lemma~\ref{lem:basic-property-h} (with $J = \{0\}$), 
we also have $h(\xi')[\{0\}] = h(\tau')[\{0\}]$, and thus $h(\xi') = h(\tau')$,
which concludes the proof.
\end{proof}
\end{longversion}

To show that $h$ is a morphism, we construct in the rest of this section an update function 
$\Delta: P \times \Gamma \to P$ and show that $\Delta(h(\tau),c) = h(\tau c)$ for 
all $\tau \in \Gamma^*$ and $c \in \Gamma$.

	
To define the update function $\Delta$, we need an auxiliary operator 
to deal with the effect of communication on the knowledge sets.
%
When a coalition $J$ synchronises with a set $S$ of observers, 
the knowledge of the coalition $J \cup S$ is transferred to the
coalition $J$. The transfer is not a simple copy, as the 
knowledge sets of different coalitions are not of the same type.
In particular, the knowledge of $J$ about a (larger) coalition $K$ 
is transferred from the knowledge of $J \cup S$ about the coalition $K \cup S$.
We present a lifting operator that transforms the knowledge set of $J \cup S$ 
into a knowledge set for $J$. The definition is inductive, assuming 
that the operator is defined for all coalitions larger than $J$.
We define the function $\lift^S_J: \Psi_{J \cup S} \to \Psi_{J}$,
for all $\psi \in \Psi_{J \cup S}$, as follows:

\begin{itemize} 
\item if $J = J\cup S$, then $\lift$ is the identity: $\lift^S_J(\psi) = \psi$;

\item otherwise (i.e., $J \subsetneq J\cup S$), we proceed recursively:

$$\lift^S_J(\psi) = \{\clift^S_J(\varphi,\psi) \mid \varphi \in \psi\},$$

where $\clift^S_J(\varphi,\psi)$ is a tuple of knowledge sets, 
one for each coalition $K \in \up{J}$ larger than $J$:

 $$\clift^S_J(\varphi,\psi)[K] = 
         \begin{cases}
         \lift^S_K(\psi)              & \text{if } K \cup S = J \cup S,\\
 	\lift^S_K(\varphi[K \cup S]) & \text{otherwise (i.e., } K \cup S \supsetneq J \cup S\text{)}.
 	\end{cases}$$

\end{itemize} 

Note that $\varphi \in \psi \in \Psi_{J \cup S}$ and therefore the knowledge
set $\varphi[K \cup S]$ is well defined only for $K \cup S \supsetneq J \cup S$
and the knowledge set for $K \cup S = J \cup S$ is given by $\psi$ itself.

We illustrate this definition with an example of configuration 
in \figurename~\ref{fig:configuration}. With three players $I = \{0,1,2\}$,
there are four coalitions containing player~$0$: the singleton $\{0\}$,
the grand coalition $\{0,1,2\}$, and the two coalitions $\{0,1\}$ and $\{0,2\}$.
The coalition $\{0,1\}$ (and therefore also the grand coalition $\{0,1,2\}$) 
knows the current state, namely $q_3$. However, the coalition $\{0,2\}$ know only that the current state is either $q_2$ or $q_3$, and Player~$0$ sees three possibilities:
the current state is $q_1$ and coalition $\{0,2\}$ knows it, 
or the current state is $q_2$ and neither coalition $\{0,1\}$ nor coalition $\{0,2\}$ knows it, 
or the current state is $q_3$, and coalition $\{0,1\}$ knows it.
Note that if it is a possibility for Player~$0$ that a 
coalition $J \supsetneq \{0\}$ sees $k$ possibilities (for example $k=2$ with $J = \{0,1\}$
in the left branch in \figurename~\ref{fig:configuration}), then those $k$ possibilities
should appear in the configuration for Player~$0$ (as the left and middle branch in 
our example).

The lifting of this configuration after Player~$0$ communicates with Player~$1$
is shown in \figurename~\ref{fig:lift-configuration1}. Intuitively, the effect 
of the lifting can be understood as replacing every coalition $J$ containing $0$
by $J \cup \{1\}$. For example, what Player~$0$ knows about coalition 
$\{0,2\}$ after the communication, is what  coalition $\{0,1\} = \{0\} \cup \{1\}$ knows 
about the grand coalition $\{0,1,2\} = \{0,2\} \cup \{1\}$. It turns out in this
case that all coalitions know the current state is $q_3$. 
\figurename~\ref{fig:lift-configuration2} show the lifting if Player~$0$ 
communicates with Player~$2$ instead.

\begin{remark}\label{rmk:lift}
It follows immediately from the definition that $\lift^S_J = \lift^{J \cup S}_J$,
for all coalitions $J,S$. It is then easy to show that, equivalently, 
if $J \cup S = J \cup T$, then $\lift^S_J = \lift^{T}_J$. 
We use this property in the latter form.
\end{remark}



The $\lift$ function is compositional:
lifting for a coalition~$J$ that synchronises with $S \cup T$
can be obtained by first lifting for the coalition $J\cup S$ 
synchronizing with~$T$, and then lifting for the coalition~$J$ 
synchronizing with~$S$.

\begin{figure*}[t]
\begin{center}
\hrule

\begin{picture}(152,45)(0,0)

\gasset{Nw=6,Nh=5,Nmr=4, rdist=1, loopdiam=5, ELdist=0, AHnb=0}

\node[Nmarks=n, Nframe=y, Nmr=0, Nw=150](root)(76,40){}
\node[Nmarks=n, Nframe=n, Nmr=0](n1)(56,40){{\scriptsize $0$}}

\node[Nmarks=n, Nframe=n, Nmr=0, Nw=14](s1)(9,25){{\scriptsize $01$}}
\node[Nmarks=n, Nframe=y, Nmr=0, Nw=8](s2)(21,25){{\scriptsize $02$}}
\node[Nmarks=n, Nframe=n, Nmr=0, Nw=8](s3)(29,25){{\scriptsize $012$}}
\node[Nmarks=n, Nframe=y, Nmr=0, Nw=31](s)(17,25){}    

\node[Nmarks=n, Nframe=y, Nmr=0](I1)(5,15){{\scriptsize $012$}}
\node[Nmarks=n, Nframe=y, Nmr=0](I2)(13,15){{\scriptsize $012$}}
\node[Nmarks=n, Nframe=y, Nmr=0](I3)(21,15){{\scriptsize $012$}}
\node[Nmarks=n, Nframe=n](l4)(29,15){$q_1$}

\node[Nmarks=n, Nframe=n](l1)(5,5){$q_1$}
\node[Nmarks=n, Nframe=n](l2)(13,5){$q_2$}
\node[Nmarks=n, Nframe=n](l3)(21,5){$q_1$}

\drawedge[ELpos=50, ELside=l, syo=-2, eyo=2.5, curvedepth=0](n1,s){}

\drawedge[ELpos=50, ELside=l, curvedepth=0](s1,I1){}
\drawedge[ELpos=50, ELside=l, curvedepth=0](s1,I2){}
\drawedge[ELpos=50, ELside=l, curvedepth=0](s2,I3){}
\drawedge[ELpos=50, ELside=l, curvedepth=0](s3,l4){}

\drawedge[ELpos=50, ELside=l, curvedepth=0](I1,l1){}
\drawedge[ELpos=50, ELside=l, curvedepth=0](I2,l2){}
\drawedge[ELpos=50, ELside=l, curvedepth=0](I3,l3){}


\node[Nmarks=n, Nframe=n, Nmr=0, Nw=20](s1)(45,25){{\scriptsize $01$}}
\node[Nmarks=n, Nframe=y, Nmr=0, Nw=16](s2)(60,25){{\scriptsize $02$}}
\node[Nmarks=n, Nframe=n, Nmr=0, Nw=10](s3)(72,25){{\scriptsize $012$}}
\node[Nmarks=n, Nframe=y, Nmr=0, Nw=39](s)(56,25){}

\node[Nmarks=n, Nframe=y, Nmr=0](I1)(40,15){{\scriptsize $012$}}
\node[Nmarks=n, Nframe=y, Nmr=0](I2)(48,15){{\scriptsize $012$}}
\node[Nmarks=n, Nframe=y, Nmr=0](I3)(56,15){{\scriptsize $012$}}
\node[Nmarks=n, Nframe=y, Nmr=0](I4)(64,15){{\scriptsize $012$}}
\node[Nmarks=n, Nframe=n](l5)(72,15){$q_2$}

\node[Nmarks=n, Nframe=n](l1)(40,5){$q_1$}
\node[Nmarks=n, Nframe=n](l2)(48,5){$q_2$}
\node[Nmarks=n, Nframe=n](l3)(56,5){$q_2$}
\node[Nmarks=n, Nframe=n](l4)(64,5){$q_3$}

\drawedge[ELpos=50, ELside=l, syo=-2, eyo=2.5, curvedepth=0](n1,s){}

\drawedge[ELpos=50, ELside=l, curvedepth=0](s1,I1){}
\drawedge[ELpos=50, ELside=l, curvedepth=0](s1,I2){}
\drawedge[ELpos=50, ELside=l, curvedepth=0](s2,I3){}
\drawedge[ELpos=50, ELside=l, curvedepth=0](s2,I4){}
\drawedge[ELpos=50, ELside=l, curvedepth=0](s3,l5){}

\drawedge[ELpos=50, ELside=l, curvedepth=0](I1,l1){}
\drawedge[ELpos=50, ELside=l, curvedepth=0](I2,l2){}
\drawedge[ELpos=50, ELside=l, curvedepth=0](I3,l3){}
\drawedge[ELpos=50, ELside=l, curvedepth=0](I4,l4){}


\node[Nmarks=n, Nframe=n, Nmr=0, Nw=10](s1)(83,25){{\scriptsize $01$}}
\node[Nmarks=n, Nframe=y, Nmr=0, Nw=16](s2)(95,25){{\scriptsize $02$}}
\node[Nmarks=n, Nframe=n, Nmr=0, Nw=10](s3)(107,25){{\scriptsize $012$}}
\node[Nmarks=n, Nframe=y, Nmr=0, Nw=31](s)(95,25){}

\node[Nmarks=n, Nframe=y, Nmr=0](I1)(83,15){{\scriptsize $012$}}
\node[Nmarks=n, Nframe=y, Nmr=0](I2)(91,15){{\scriptsize $012$}}
\node[Nmarks=n, Nframe=y, Nmr=0](I3)(99,15){{\scriptsize $012$}}
\node[Nmarks=n, Nframe=n](l4)(107,15){$q_3$}

\node[Nmarks=n, Nframe=n](l1)(83,5){$q_3$}
\node[Nmarks=n, Nframe=n](l2)(91,5){$q_2$}
\node[Nmarks=n, Nframe=n](l3)(99,5){$q_3$}

\drawedge[ELpos=50, ELside=l, syo=-2, eyo=2.5, curvedepth=0](n1,s){}

\drawedge[ELpos=50, ELside=l, curvedepth=0](s1,I1){}
\drawedge[ELpos=50, ELside=l, curvedepth=0](s2,I2){}
\drawedge[ELpos=50, ELside=l, curvedepth=0](s2,I3){}
\drawedge[ELpos=50, ELside=l, curvedepth=0](s3,l4){}

\drawedge[ELpos=50, ELside=l, curvedepth=0](I1,l1){}
\drawedge[ELpos=50, ELside=l, curvedepth=0](I2,l2){}
\drawedge[ELpos=50, ELside=l, curvedepth=0](I3,l3){}


\node[Nmarks=n, Nframe=y, Nmr=0, Nw=10](s1)(123,40){{\scriptsize $01$}}
\node[Nmarks=n, Nframe=y, Nmr=0, Nw=14](s2)(135,40){{\scriptsize $02$}}
\node[Nmarks=n, Nframe=n, Nmr=0, Nw=10](s3)(147,40){{\scriptsize $012$}}
\node[Nmarks=n, Nframe=n, Nmr=0, Nw=31](s)(135,40){}

\node[Nmarks=n, Nframe=y, Nmr=0](I1)(123,25){{\scriptsize $012$}}
\node[Nmarks=n, Nframe=y, Nmr=0](I2)(131,25){{\scriptsize $012$}}
\node[Nmarks=n, Nframe=y, Nmr=0](I3)(139,25){{\scriptsize $012$}}
\node[Nmarks=n, Nframe=n](l4)(147,25){$q_3$}

\node[Nmarks=n, Nframe=n](l1)(123,15){$q_3$}
\node[Nmarks=n, Nframe=n](l2)(131,15){$q_2$}
\node[Nmarks=n, Nframe=n](l3)(139,15){$q_3$}


\drawedge[ELpos=50, ELside=l, curvedepth=0](s1,I1){}
\drawedge[ELpos=50, ELside=l, curvedepth=0](s2,I2){}
\drawedge[ELpos=50, ELside=l, curvedepth=0](s2,I3){}
\drawedge[ELpos=50, ELside=l, curvedepth=0](s3,l4){}

\drawedge[ELpos=50, ELside=l, curvedepth=0](I1,l1){}
\drawedge[ELpos=50, ELside=l, curvedepth=0](I2,l2){}
\drawedge[ELpos=50, ELside=l, curvedepth=0](I3,l3){}







\end{picture}
\hrule
\caption{A configuration.\label{fig:configuration}}
\end{center}
\end{figure*}
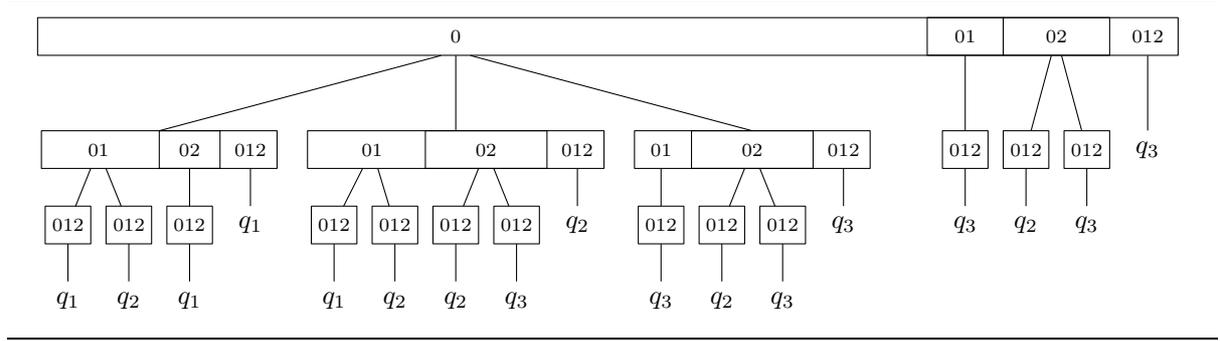

\begin{figure}[t]
\begin{center}
\hrule

\begin{picture}(26,45)(0,0)

\gasset{Nw=6,Nh=5,Nmr=4, rdist=1, loopdiam=5, ELdist=0, AHnb=0}

\node[Nmarks=n, Nframe=y, Nmr=0, Nw=24](n1)(13,40){{\scriptsize $0$}}

\node[Nmarks=n, Nframe=n, Nmr=0, Nw=10](s1)(5,25){{\scriptsize $01$}}
\node[Nmarks=n, Nframe=y, Nmr=0, Nw=8](s2)(13,25){{\scriptsize $02$}}
\node[Nmarks=n, Nframe=n, Nmr=0, Nw=10](s3)(21,25){{\scriptsize $012$}}
\node[Nmarks=n, Nframe=y, Nmr=0, Nw=24](s)(13,25){}

\node[Nmarks=n, Nframe=y, Nmr=0](I1)(5,15){{\scriptsize $012$}}
\node[Nmarks=n, Nframe=y, Nmr=0](I2)(13,15){{\scriptsize $012$}}
\node[Nmarks=n, Nframe=n](l3)(21,15){$q_3$}

\node[Nmarks=n, Nframe=n](l1)(5,5){$q_3$}
\node[Nmarks=n, Nframe=n](l2)(13,5){$q_3$}

\drawedge[ELpos=50, ELside=l, syo=-2, eyo=2.5, curvedepth=0](n1,s){}

\drawedge[ELpos=50, ELside=l, curvedepth=0](s1,I1){}
\drawedge[ELpos=50, ELside=l, curvedepth=0](s2,I2){}
\drawedge[ELpos=50, ELside=l, curvedepth=0](s3,l3){}

\drawedge[ELpos=50, ELside=l, curvedepth=0](I1,l1){}
\drawedge[ELpos=50, ELside=l, curvedepth=0](I2,l2){}








\end{picture}
\hrule
\caption{Lifting of the configuration of \figurename~\ref{fig:configuration} after Player~$0$ synchronises with Player~$1$.\label{fig:lift-configuration1}}
\end{center}
\end{figure}

\begin{lemma}\label{lem:lift-compositional}
$\lift^S_{J} \circ \lift^T_{J \cup S}= \lift^{S \cup T}_J$ for all coalitions $J,S,T$.
\end{lemma}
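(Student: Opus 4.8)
The plan is to prove the identity by strong induction on $\abs{I \setminus J}$, that is, on coalitions ordered by reverse inclusion. This is the natural induction order because, by definition, $\lift^S_J$ is built from the operators $\lift^S_K$ for coalitions $K \supsetneq J$ (these are the components indexed by $K \in \up{J}$), so the statement for $J$ should follow from the statement for all strictly larger coalitions. The base case $J = I$ is immediate, since all three operators are then the identity; more generally the induction hypothesis will only be invoked for coalitions strictly above the current one.

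Before the general argument I would dispose of the two degenerate cases using Remark~\ref{rmk:lift}. If $S \subseteq J$, then $\lift^S_J$ is the identity, $\lift^T_{J\cup S} = \lift^T_J$, and $\lift^{S\cup T}_J = \lift^{J\cup(S\cup T)}_J = \lift^{J\cup T}_J = \lift^T_J$, so both sides coincide. If $T \subseteq J\cup S$, then $\lift^T_{J\cup S}$ is the identity, $J\cup S\cup T = J\cup S$, and $\lift^{S\cup T}_J = \lift^{J\cup S}_J = \lift^S_J$, so again both sides coincide. Neither of these uses the induction hypothesis, so it remains to treat the main case $J \subsetneq J\cup S \subsetneq J\cup S\cup T$.

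In the main case I would fix $\psi \in \Psi_{J\cup S\cup T}$, put $\psi' = \lift^T_{J\cup S}(\psi) = \{\clift^T_{J\cup S}(\varphi,\psi) \mid \varphi \in \psi\}$, and unfold both sides: $\lift^S_J(\psi') = \{\clift^S_J(\varphi',\psi') \mid \varphi' \in \psi'\}$ and $\lift^{S\cup T}_J(\psi) = \{\clift^{S\cup T}_J(\varphi,\psi) \mid \varphi\in\psi\}$. Since every element of $\psi'$ has the form $\clift^T_{J\cup S}(\varphi,\psi)$ for some $\varphi\in\psi$, the desired set equality reduces to the tuple identity
$$\clift^S_J\bigl(\clift^T_{J\cup S}(\varphi,\psi),\,\psi'\bigr) = \clift^{S\cup T}_J(\varphi,\psi) \qquad \text{for every } \varphi \in \psi,$$
which I would check component by component, for each coalition $K$ with $J \subsetneq K$. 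The case split is governed by comparing $K\cup S$ with $J\cup S$ and, when they differ, $K\cup S\cup T$ with $J\cup S\cup T$: (i) if $K\cup S = J\cup S$, the left component is $\lift^S_K(\lift^T_{K\cup S}(\psi))$ and the right one is $\lift^{S\cup T}_K(\psi)$; (ii) if $K\cup S\supsetneq J\cup S$ but $K\cup S\cup T = J\cup S\cup T$, then $\clift^T_{J\cup S}(\varphi,\psi)[K\cup S] = \lift^T_{K\cup S}(\psi)$, so the left component is again $\lift^S_K(\lift^T_{K\cup S}(\psi))$ and the right one is $\lift^{S\cup T}_K(\psi)$; (iii) if $K\cup S\cup T\supsetneq J\cup S\cup T$, then both components become $\lift^{S\cup T}_K(\varphi[K\cup S\cup T])$. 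In every branch the two sides are identified by invoking the induction hypothesis for the strictly larger coalition $K$, in the form $\lift^S_K\circ\lift^T_{K\cup S} = \lift^{S\cup T}_K$. The step I expect to require the most care is not any single calculation — each branch collapses to a one-line rewrite — but the bookkeeping of this nested case analysis, in particular checking that the projections $\varphi[\,\cdot\,]$ and $\varphi'[\,\cdot\,]$ are only ever applied to coalitions on which they are defined (strictly above $J\cup S\cup T$, respectively strictly above $J\cup S$), which is precisely what the three cases above guarantee.
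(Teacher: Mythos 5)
Your proof is correct and follows essentially the same route as the paper's: descending induction on the coalition $J$, unfolding both sides to the component-wise identity $\clift^S_J(\clift^T_{J\cup S}(\varphi,\psi),\lift^T_{J\cup S}(\psi))[K]=\clift^{S\cup T}_J(\varphi,\psi)[K]$, and resolving each of the same three cases by the induction hypothesis at the larger coalition $K$. Your explicit upfront treatment of the degenerate cases $S\subseteq J$ and $T\subseteq J\cup S$ via Remark~\ref{rmk:lift} is a small extra precaution that the paper leaves implicit, but it does not change the argument.
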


\begin{figure}[t]
\begin{center}
\hrule

\begin{picture}(69,45)(0,0)

\gasset{Nw=6,Nh=5,Nmr=4, rdist=1, loopdiam=5, ELdist=0, AHnb=0}

\node[Nmarks=n, Nframe=y, Nmr=0, Nw=67](n1)(34.5,40){{\scriptsize $0$}}

\node[Nmarks=n, Nframe=n, Nmr=0, Nw=10](s1)(5,25){{\scriptsize $01$}}
\node[Nmarks=n, Nframe=y, Nmr=0, Nw=16](s2)(17,25){{\scriptsize $02$}}
\node[Nmarks=n, Nframe=n, Nmr=0, Nw=10](s3)(29,25){{\scriptsize $012$}}
\node[Nmarks=n, Nframe=y, Nmr=0, Nw=31](s)(17,25){}

\node[Nmarks=n, Nframe=y, Nmr=0](I1)(5,15){{\scriptsize $012$}}
\node[Nmarks=n, Nframe=y, Nmr=0](I2)(13,15){{\scriptsize $012$}}
\node[Nmarks=n, Nframe=y, Nmr=0](I3)(21,15){{\scriptsize $012$}}
\node[Nmarks=n, Nframe=n](l4)(29,15){$q_2$}

\node[Nmarks=n, Nframe=n](l1)(5,5){$q_2$}
\node[Nmarks=n, Nframe=n](l2)(13,5){$q_2$}
\node[Nmarks=n, Nframe=n](l3)(21,5){$q_3$}

\drawedge[ELpos=50, ELside=l, syo=-2, eyo=2.5, curvedepth=0](n1,s){}

\drawedge[ELpos=50, ELside=l, curvedepth=0](s1,I1){}
\drawedge[ELpos=50, ELside=l, curvedepth=0](s2,I2){}
\drawedge[ELpos=50, ELside=l, curvedepth=0](s2,I3){}
\drawedge[ELpos=50, ELside=l, curvedepth=0](s3,l4){}

\drawedge[ELpos=50, ELside=l, curvedepth=0](I1,l1){}
\drawedge[ELpos=50, ELside=l, curvedepth=0](I2,l2){}
\drawedge[ELpos=50, ELside=l, curvedepth=0](I3,l3){}


\node[Nmarks=n, Nframe=n, Nmr=0, Nw=10](s1)(40,25){{\scriptsize $01$}}
\node[Nmarks=n, Nframe=y, Nmr=0, Nw=16](s2)(52,25){{\scriptsize $02$}}
\node[Nmarks=n, Nframe=n, Nmr=0, Nw=10](s3)(64,25){{\scriptsize $012$}}
\node[Nmarks=n, Nframe=y, Nmr=0, Nw=31](s)(52,25){}

\node[Nmarks=n, Nframe=y, Nmr=0](I1)(40,15){{\scriptsize $012$}}
\node[Nmarks=n, Nframe=y, Nmr=0](I2)(48,15){{\scriptsize $012$}}
\node[Nmarks=n, Nframe=y, Nmr=0](I3)(56,15){{\scriptsize $012$}}
\node[Nmarks=n, Nframe=n](l4)(64,15){$q_3$}

\node[Nmarks=n, Nframe=n](l1)(40,5){$q_3$}
\node[Nmarks=n, Nframe=n](l2)(48,5){$q_2$}
\node[Nmarks=n, Nframe=n](l3)(56,5){$q_3$}

\drawedge[ELpos=50, ELside=l, syo=-2, eyo=2.5, curvedepth=0](n1,s){}

\drawedge[ELpos=50, ELside=l, curvedepth=0](s1,I1){}
\drawedge[ELpos=50, ELside=l, curvedepth=0](s2,I2){}
\drawedge[ELpos=50, ELside=l, curvedepth=0](s2,I3){}
\drawedge[ELpos=50, ELside=l, curvedepth=0](s3,l4){}

\drawedge[ELpos=50, ELside=l, curvedepth=0](I1,l1){}
\drawedge[ELpos=50, ELside=l, curvedepth=0](I2,l2){}
\drawedge[ELpos=50, ELside=l, curvedepth=0](I3,l3){}








\end{picture}
\hrule
\caption{Lifting of the configuration of \figurename~\ref{fig:configuration} after Player~$0$ synchronises with Player~$2$.\label{fig:lift-configuration2}}
\end{center}
\end{figure}
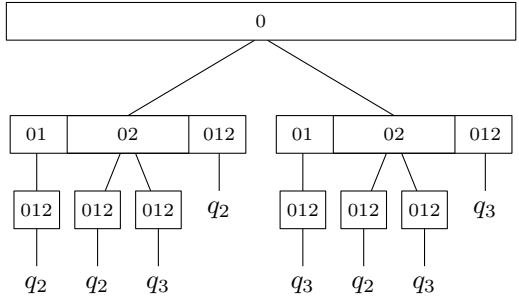

\begin{longversion}
\begin{proof}
The proof is by (descending) induction on $\abs{J}$. The base case $J = I$
is trivial since all three operators $\lift^S_{J}$, $\lift^T_{J \cup S}$, 
and $\lift^{S \cup T}_J$ are then the identity.

For the induction case, assume that the lemma holds for all coalitions 
of cardinality larger than $J$ (in particular for all $K\in \up{J}$) 
and show that it holds for coalition~$J$. 

	\begin{align*}
	\lift_J^S \circ \lift_{J\cup S}^T(\psi)
	& =\{\clift^S_{J}(\varphi,\lift^{T}_{J\cup S}(\psi))\,|\,\varphi\in\lift^{T}_{J\cup S}(\psi) \},\\
	& =\{\clift^S_{J}(\varphi,\lift^{T}_{J\cup S}(\psi))\,|\,\varphi\in\{\clift^T_{J\cup S}(\varphi,\psi)\,|\, \varphi\in \psi \} \},\\
	& =\{\clift^S_J(\clift^T_{J\cup S}(\varphi,\psi),\lift^{T}_{J\cup S}(\psi) )\,|\, \varphi\in \psi \},
	\end{align*}

	and considering each $K\in \up{J}$:
	\begin{itemize}
	\item if $K\cup S = J\cup S$,
	\begin{align*}
		\clift^S_J(\clift^T_{J\cup S}(\varphi,\psi),\lift^{T}_{J\cup S}(\psi))[K] 
		 &=\lift^{S}_K(\lift^{T}_{J\cup S}(\psi)), & \text{by definition of }\clift^S_J,\\
		 &=\lift^{S}_K(\lift^{T}_{K\cup S}(\psi)), & \text{as }K\cup S= J\cup S,\\
		 &=\lift^{S \cup T}_K(\psi), & \text{by induction hypothesis}.
	\end{align*}

        \smallskip

	\item if $K\cup S\neq J\cup S$ and $K\cup S \cup T=J\cup S\cup T$,
	\begin{align*}
		\clift^S_J(\clift^T_{J\cup S}(\varphi,\psi),\lift^{T}_{J\cup S}(\psi))[K] 
		 &=\lift^{S}_K(\clift^T_{J\cup S}(\varphi,\psi)[K\cup S]), & \text{by definition of }\clift^S_J,\\
		 &=\lift^{S}_K(\lift^{T}_{K\cup S}(\psi)), & \text{by definition of }\clift^T_{J\cup S},\\
		 &=\lift^{S \cup T}_K(\psi) & \text{by induction hypothesis.}
        \end{align*}

	\item if $K\cup S\neq J\cup S$ and $K\cup S \cup T\neq J\cup S\cup T$,
%

	\begin{align*}
		\clift^S_J(\clift^T_{J\cup S}(\varphi,\psi),\lift^{T}_{J\cup S}(\psi))[K] 
		 &=\lift^{S}_K(\clift^T_{J\cup S}(\varphi,\psi)[K \cup S]), & \text{by definition of }\clift^S_J,\\
		 &=\lift^{S}_K(\lift^{T}_{K\cup S}(\varphi[K\cup S\cup T])), & \text{by definition of }\clift^T_{J\cup S},\\
		 &=\lift^{S \cup T}_K(\varphi[K\cup S\cup T]) & \text{by induction hypothesis.}
	\end{align*}

	\end{itemize}
	
	In summary we get:


	\[\clift^S_{J}(\clift^T_{J\cup S}(\varphi,\psi),\lift^{T}_{J\cup S}(\psi))[K]=\begin{cases}
	\lift^{S \cup T}_K(\psi), & \text{if }K\cup S \cup T=J\cup S \cup T,\\
	\lift^{S \cup T}_K(\varphi[K\cup S\cup T]), & \text{if }K\cup S \cup T \neq J\cup S \cup T,\\
	\end{cases}
	\]

	and so $\clift^T_{J\cup S}(\clift^S_{J}(\varphi,\psi),\lift^{T}_{J\cup S}(\psi))=\clift^{T\cup S}_{J}(\varphi,\psi)$, 
which concludes the proof.
	\end{proof}
\end{longversion}


The update function $\Delta: P \times \Gamma \to P$ is defined component-wise for each
coalition. The definition is recursive: given a coalition $J$, we first update
all coalitions $K\in \up{J}$ greater than $J$. Then, we update the coalition $J$ as
follows. Given the actual move $c$, let $S = \sync_J(c)$ be the coalition that
transfers their knowledge to $J$ (through communication). The update is calculated
as the (lifting of) the knowledge of coalition $S$ upon reading a move move $d$ 
that the coalition $S$ cannot distinguish from $c$, that is such that the 
observations for the players in $S$ are the same, $d[S] = c[S]$.
Note that for the grand coalition $S=I$, the condition $d[I] = c[I]$ is
equivalent to $d=c$.

Given a state $p \in P$ and a move $c \in \Gamma$, let
$\Delta(P,c) = (\delta^c_J(P[\sync_J(c)]))_{\{0\} \subseteq J \subseteq I}$
where $\delta^c_J$ is defined recursively as follows, for all 
$\psi \in \Psi_{S}$ where $S = \sync_J(c)$:

\begin{itemize} 
\item if $S = I$ is the grand coalition, then we update according to the last 
observations of $I$ followed by a lifting:
$$\delta^c_J(\psi) = \lift^I_J(\{\delta_{\M}(q,c) \mid q \in \psi\});$$
\item otherwise, we lift the knowledge set of $S$, which is defined recursively:

$$
  \delta^c_J(\psi) = \lift^S_J(\{(\delta^d_K(\varphi[\sync_K(d)]))_{K \in \up{S}} \mid \varphi \in \psi, d[S] = c[S] \}).
$$

\mynote{L: alternative presentation: the argument $\sync_K(d)$
in $\delta^d_K(\varphi[\sync_K(d)])$ sounds redundant.}

\end{itemize}

\begin{longversion}
\begin{remark}\label{rmk:update}
We often use $\delta^c_J(\psi)$ with argument of the form 
$\psi = h(\tau)[S]$ with $\tau \in \Gamma^*$ and $S = \sync_J(c)$,
which by unfolding the definition of $h$ gives:

\begin{itemize} 
\item if $S = I$, then 

	$$\arraycolsep=1.5pt
	  \begin{array}{rlr}
	  \delta^c_J(h(\tau)[S]) & = \lift^S_J(\{\delta_{\M}(q,c) \mid q \in h(\tau)[S]\}) &  \\[+3pt]
	  & = \lift^S_J(\{\delta_{\M}(q,c) \mid q \in \{\delta_{\M}(q_\epsilon,\tau) \}\}) & \\[+3pt]
	  & \multicolumn{2}{r}{\text{since } h(\tau)[S] = h(\tau)[I]} \\[+3pt]
	  & = \lift^S_J(\{\delta_{\M}(\delta_{\M}(q_\epsilon,\tau),c)\}) & \\[+3pt]
	  & = \lift^S_J(\{\delta_{\M}(q_\epsilon,\tau c) \}). & \\[+3pt]
	\end{array}$$

\item if $S \neq I$, then 

	$$\arraycolsep=1.5pt
	  \begin{array}{rlr}
	  \delta^c_J(h(\tau)[S]) & = \lift^S_J(\{(\delta^d_K(\varphi[\sync_K(d)]))_{K \in \up{S}}  \mid \varphi \in h(\tau)[S], d_S = c_S \}) &  \\[+3pt]
	  & = \lift^S_J(\{(\delta^d_K(h(\tau')[\sync_K(d)]))_{K \in \up{S}}  \mid \tau' \sim_S \tau, d[S] = c[S] \}) & \\[+3pt]
	  & \multicolumn{2}{r}{\text{since } h(\tau)[S] = \{ (h(\tau')[K])_{K \in \up{S}} \mid \tau' \sim_S \tau\}.} \\[+3pt]
	  & = \lift^S_J(\{(\delta^d_K(h(\tau')[\sync_K(d)]))_{K \in \up{S}}  \mid \tau' d \sim_S \tau c \}) & \\[+3pt]
	  & \multicolumn{2}{r}{\text{by Lemma~\ref{lem:fip-properties}(\ref{lem:fip-properties:P2}) }} \\[+3pt]
	\end{array}$$

\end{itemize}
\end{remark}
\end{longversion}


\begin{lemma}
For all histories $\tau \in \Gamma^*$ and moves $c \in \Gamma$, we have:
$$\Delta(h(\tau),c) = h(\tau c).$$
\end{lemma}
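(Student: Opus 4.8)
The plan is to prove $\Delta(h(\tau),c)=h(\tau c)$ componentwise, by a descending induction on the cardinality of the coalition $J$ (with $0\in J$), using the recursive definitions of $h$, $\Delta$, $\delta^c_J$ and $\lift$, together with the characterisation of $\sim_J$ from Lemma~\ref{lem:fip-properties} and the compositionality of $\lift$ (Lemma~\ref{lem:lift-compositional}). For the base case $J=I$ (the grand coalition), $\sync_I(c)=I$, so $\Delta(h(\tau),c)[I]=\delta^c_I(h(\tau)[I])=\lift^I_I(\{\delta_\M(q,c)\mid q\in h(\tau)[I]\})=\{\delta_\M(\delta_\M(q_\init,\tau),c)\}=\{\delta_\M(q_\init,\tau c)\}=h(\tau c)[I]$, which is exactly the computation recorded in Remark~\ref{rmk:update}. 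This settles refinement at the top level.

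For the induction step, fix $J\subsetneq I$ with $0\in J$ and assume $\Delta(h(\tau),c)[K]=h(\tau c)[K]$ for all $K\in\up{J}$ (indeed for all coalitions strictly larger than $J$ containing $0$; note $\up{J}$ here ranges over all $K$ with $J\subsetneq K\subseteq I$, and one must be slightly careful that the induction is really on larger coalitions appearing inside the $\delta^d_K$ calls for various $d$, not just on supersets of $J$ — but $\sync_K(d)\supseteq K\supsetneq J$, so the inductive hypothesis applies). Set $S=\sync_J(c)$. Starting from the left side and unfolding as in Remark~\ref{rmk:update},
\[
\Delta(h(\tau),c)[J]=\delta^c_J(h(\tau)[S])=\lift^S_J\bigl(\{(\delta^d_K(h(\tau')[\sync_K(d)]))_{K\in\up{S}}\mid \tau'd\sim_S\tau c\}\bigr),
\]
where I have already used Lemma~\ref{lem:fip-properties}(\ref{lem:fip-properties:P2}) to replace the condition ``$\tau'\sim_S\tau$ and $d[S]=c[S]$'' by ``$\tau'd\sim_S\tau c$''. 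Now for each $K\in\up{S}$ and each such pair $(\tau',d)$ one would like to recognise $\delta^d_K(h(\tau')[\sync_K(d)])$ as $h(\tau'd)[K]$; this is precisely an instance of the statement being proved, but for the pair $(\tau',d)$ and the strictly larger coalition $K$, so it is available by the induction hypothesis. Hence the inner set becomes $\{(h(\tau'd)[K])_{K\in\up{S}}\mid \tau'd\sim_S\tau c\}$, which by the definition of $h$ is exactly $h(\tau c)[S]$. Therefore $\Delta(h(\tau),c)[J]=\lift^S_J(h(\tau c)[S])$, and it remains to show $\lift^S_J(h(\tau c)[S])=h(\tau c)[J]$.

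This last identity — that applying the lifting operator to the $S$-knowledge set at history $\tau c$ recovers the $J$-knowledge set at $\tau c$ — is the real content and the main obstacle. Here $S=\sync_J(c)$ is the coalition reached from $J$ by following the communication links fired by $c$, so one is asserting that what $J$ knows after the move is obtained from what $S$ knows after the move by relabelling each tracked coalition $L\supseteq J$ to $L\cup S$ (equivalently, by the $\clift$ reindexing $K\mapsto K\cup S$). I expect to prove this by a separate lemma, again by descending induction on $|J|$, unwinding $\lift^S_J$ through its recursive $\clift^S_J$ definition and matching, coalition by coalition $K\in\up{J}$, the two cases $K\cup S=J\cup S=S$ (handled by $\psi$ itself, i.e. by $h(\tau c)[S]$, lifted via $\lift^S_K$) and $K\cup S\supsetneq S$ (handled by $\varphi[K\cup S]$, lifted via $\lift^S_K$), using the identity $\sync_J(c)=S$ together with $\sync_K(c)=\sync_{J\cup S}(c)=\sync_J(c)=S$ for $J\subseteq K\subseteq S$ (from Lemma~\ref{lem:fip-properties}(\ref{lem:fip-properties:P1}) and the stated properties of $\sync$), and the compositionality Lemma~\ref{lem:lift-compositional} to split $\lift$ over nested coalitions; Remark~\ref{rmk:lift} (that $\lift^S_J=\lift^T_J$ whenever $J\cup S=J\cup T$) is used to harmonise the various synchronisation sets appearing. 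Once this lemma is in place, combining it with the displayed computation above closes the induction and yields $\Delta(h(\tau),c)=h(\tau c)$ for every coalition, hence as elements of $P$.
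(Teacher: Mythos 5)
Your proposal is correct and follows essentially the same route as the paper's proof: descending induction on $|J|$ with base case $J=I$, unfolding $\delta^c_J$ via Remark~\ref{rmk:update} and Lemma~\ref{lem:fip-properties}, applying the induction hypothesis to recognise each $\delta^d_K(h(\tau')[\sync_K(d)])$ as $h(\tau'd)[K]$, and then discharging the remaining lift identity through the $\clift$ case analysis together with Lemma~\ref{lem:lift-compositional} and Remark~\ref{rmk:lift}. The one caveat is that your ``separate lemma'' $\lift^S_J(h(\tau c)[S])=h(\tau c)[J]$ is not provable in isolation: in the case $L\cup S\supsetneq S$ one must rewrite $h(\tau'd)[L\cup S]$ as $\delta^d_{L\cup S}(h(\tau')[\sync_{L\cup S}(d)])$, which is again an instance of the main statement for the strictly larger coalition $L\cup S$, so this identity has to be established inside the main induction step where that hypothesis is available (as the paper does inline, via the element-wise claim $\clift^S_J(\varphi,\psi)[L]=h(\tau'd)[L]$) rather than as an independent prior result.
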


\begin{corollary}\label{cor:morphic-h}
The function $h$ is a morphism.
\end{corollary}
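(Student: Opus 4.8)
## Proof plan for $\Delta(h(\tau), c) = h(\tau c)$

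The plan is to prove the equality component-wise over all coalitions $J$ with $0 \in J$, and within each component to work by descending induction on $|J|$, exactly mirroring the recursive structure of both $h$ and $\Delta$. Fix $\tau \in \Gamma^*$ and $c \in \Gamma$. For a coalition $J$, write $S = \sync_J(c)$. We must show $h(\tau c)[J] = \delta^c_J(h(\tau)[S])$, where the right-hand side is the $J$-component of $\Delta(h(\tau),c)$ since $\Delta(h(\tau),c)[J] = \delta^c_J(h(\tau)[\sync_J(c)])$.

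\medskip\noindent\emph{Base case $J = I$.} Here $S = \sync_I(c) = I$, so $\delta^c_I(h(\tau)[I]) = \lift^I_I(\{\delta_{\M}(q,c) \mid q \in h(\tau)[I]\})$. Since $\lift^I_I$ is the identity and $h(\tau)[I] = \{\delta_{\M}(q_\init, \tau)\}$, this equals $\{\delta_{\M}(\delta_{\M}(q_\init,\tau),c)\} = \{\delta_{\M}(q_\init, \tau c)\} = h(\tau c)[I]$, using the recursive definition of the extended transition function of $\M$. (In the \texttt{longversion} this is essentially the first bullet of Remark~\ref{rmk:update}.)

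\medskip\noindent\emph{Inductive case $J \subsetneq I$.} Assume the identity holds for every coalition strictly larger than $J$ — in particular for every $K \in \up{J}$ and, after the synchronisation step, for every $K \in \up{S}$. We unfold $h(\tau c)[J]$ using Lemma~\ref{lem:fip-properties}: writing $K = \sync_J(c) = S$, we have $[\tau c]_{\sim_J} = [\tau c]_{\sim_S} = \{\tau' d \mid \tau' \sim_S \tau,\ d[S] = c[S]\}$, and by Lemma~\ref{lem:fip-properties:P1} the $\sim_J$-class and $\sim_S$-class of $\tau c$ coincide, so by definition of $h$,
\begin{align*}
h(\tau c)[J] &= \{(h(\tau' d)[K'])_{K' \in \up{J}} \mid \tau' d \sim_S \tau c\}.
\end{align*}
The key bookkeeping fact is that for $K' \in \up{J}$ the communication structure forces $\sync_{K'}(d)$ to behave compatibly with $S$: concretely $K' \cup S = \sync_{K'}(d) \cup S$ when $d[S]=c[S]$, so that the relevant knowledge of $K'$ at history $\tau' d$ is recoverable from the knowledge of the coalition $K' \cup S$ at $\tau'$ — this is what the $\lift^S_J$ and $\clift^S_J$ operators encode (replacing each coalition $K'$ by $K' \cup S$, dropping $S$-members from the index set). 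Using the induction hypothesis to rewrite each $h(\tau' d)[K']$ as $\delta^d_{K'}(h(\tau')[\sync_{K'}(d)])$ and then matching this against the two-case definition of $\clift^S_J$ (the case $K' \cup S = J \cup S = S$ picking up the lifted image of the fresh $\M$-step / the recursively updated $S$-knowledge, the case $K' \cup S \supsetneq S$ picking up the component $\varphi[K' \cup S]$), one recovers exactly the expression $\lift^S_J(\{(\delta^d_K(\varphi[\sync_K(d)]))_{K \in \up{S}} \mid \varphi \in h(\tau)[S],\ d[S]=c[S]\})$, which is $\delta^c_J(h(\tau)[S])$ by definition. The compositionality of $\lift$ (Lemma~\ref{lem:lift-compositional}) is used to collapse the nested liftings that arise when $\sync_{K'}(d)$ is strictly larger than $S$, i.e.\ when an observer in $K'$ synchronises further on move $d$ with players outside $S$.

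\medskip\noindent The main obstacle is precisely this inductive case: reconciling the "history world" description (take $\tau' d \sim_S \tau c$, read off $h(\tau' d)[K']$) with the "configuration world" update (apply $\delta^d_K$ inside, then $\lift^S_J$ outside) requires carefully tracking, for each $K' \in \up{J}$, which of the two branches of $\clift^S_J$ applies, and showing that the quantifier "$\exists \tau' d \sim_S \tau c$" is faithfully simulated by "$\exists \varphi \in h(\tau)[S],\ \exists d$ with $d[S]=c[S]$" — this uses Lemma~\ref{lem:basic-property-h} (so that $\varphi$ ranging over $h(\tau)[S]$ is the same as $\tau'$ ranging over $\sim_S$-companions of $\tau$, up to $h$-value) together with the inductive hypothesis applied at the larger coalitions $\sync_{K'}(d)$. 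Everything else — surjectivity of $h$, well-definedness of $\delta^c_J$, the identity $\lift^S_J = \lift^{J\cup S}_J$ from Remark~\ref{rmk:lift} — is routine and can be invoked as needed. Corollary~\ref{cor:morphic-h} is then immediate: if $h(\tau) = h(\tau')$ then $h(\tau c) = \Delta(h(\tau),c) = \Delta(h(\tau'),c) = h(\tau' c)$ for every $c \in \Gamma$, which is exactly the morphism property.
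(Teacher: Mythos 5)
Your proposal follows essentially the same route as the paper's proof: component-wise descending induction on $|J|$, with the base case $J=I$ handled via the $\M$-transition, and the inductive case unfolding $[\tau c]_{\sim_J}$ via Lemma~\ref{lem:fip-properties}, rewriting with the induction hypothesis, matching against the two branches of $\clift^S_J$, and collapsing nested liftings with Lemma~\ref{lem:lift-compositional}. One intermediate claim is imprecise --- $K'\cup S = \sync_{K'}(d)\cup S$ is false in general, since an observer in $K'$ may synchronise on $d$ with players outside $K'\cup S$ --- but you immediately account for exactly that situation via the compositionality of $\lift$, which is how the paper handles it (setting $T=\sync_{L\cup S}(d)$ and using $S\cup T=\sync_L(d)$).
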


\begin{longversion}
\begin{proof}
We show that $\Delta(h(\tau),c)[J] = h(\tau c)[J]$ for all coalitions $J$.
We proceed by (descending) induction on the cardinality of $J$.

The base case is for $\abs{J} = \abs{I}$, that is $J = I$:
$$
\arraycolsep=1.5pt
\begin{array}{rlr}
\Delta(h(\tau),c)[I] 
          & = \delta^c_I(h(\tau)[I]) &  \\[+3pt]
	  & = \lift^S_I(\{\delta_{\M}(q_\epsilon,\tau c) \}) & \\[+3pt]
	  & \multicolumn{2}{r}{\text{by  Remark~\ref{rmk:update} where } S = \sync^*_I(c) = I} \\[+3pt]
	  & = \{\delta_{\M}(q_\epsilon,\tau c) \} & \\[+3pt]
	  & \multicolumn{2}{r}{\text{since } I = I \cup S \text{ and thus } \lift^S_I \text{ is the identity}} \\[+3pt]
	  & \multicolumn{2}{l}{= h(\tau c)[I] \hfill \text{by definition of } h}  \\[+3pt]
\end{array}
$$

For the induction step, consider a coalition $J \subsetneq I$ and assume that the property holds
for all coalitions of cardinality larger than $\abs{J}$, in particular for all $K \in \up{J}$,
that is $\Delta(h(\tau),c)[K] = h(\tau c)[K]$ for all $\tau \in \Gamma^*$ and $c \in \Gamma$.
By definition of $\Delta$, the induction hypothesis boils down to 
$h(\tau c)[K] = \delta^c_K(h(\tau)[\sync_K(c)])$.

Given coalition $J$, history $\tau$, and move $c$, let $S = \sync_J(c)$.
We consider several cases:

\begin{enumerate}

\item if $S = I$, then  

	$$
	\arraycolsep=1.5pt
	\begin{array}{rlr}
	\Delta(h(\tau),c)[J] 
	  & \multicolumn{2}{l}{= \delta^c_J(h(\tau)[I]) \hfill \text{ since } J \cup S = I}  \\[+3pt]
	  & \multicolumn{2}{l}{= \lift^S_J(\{\delta_{\M}(q_\epsilon,\tau c) \})\hfill \text{by Remark~\ref{rmk:update},}}  \\[+3pt]
	  & \multicolumn{2}{r}{\text{ since } S = I} \\[+3pt]
	  & \multicolumn{2}{l}{= \{ (\lift^S_K(\{\delta_{\M}(q_\epsilon,\tau c) \})_{K \in \up{J}} \}}  \\[+3pt]
 	  & \multicolumn{2}{r}{\text{by definition of $\lift$,}} \\[+3pt]
	  & \multicolumn{2}{r}{\text{since $K \cup S = I = J \cup S$ for all $K \in \up{J}$}} \\[+3pt]
	  & \multicolumn{2}{l}{= \{ (\delta^c_K(h(\tau)[I])  )_{K \in \up{J}} \} \hfill \text{by  Remark~\ref{rmk:update}}}  \\[+3pt]
	  & \multicolumn{2}{l}{= \{ (h(\tau c)[K])_{K \in \up{J}} \}\hfill \text{by induction hypothesis}}  \\[+3pt]
 	  & \multicolumn{2}{r}{\text{since } \sync_K(c) = I} \\[+3pt]
	  & \multicolumn{2}{l}{= \{ (h(\tau' d)[K])_{K \in \up{J}} \mid \tau' d \sim_I \tau c \}}  \\[+3pt]
	  & \multicolumn{2}{r}{\text{since} \sim_I \text{is the identity}} \\[+3pt]
	  & \multicolumn{2}{l}{= \{ (h(\tau' d)[K])_{K \in \up{J}} \mid \tau' d \sim_J \tau c \}}  \\[+3pt]
	  & \multicolumn{2}{r}{\text{by Lemma~\ref{lem:fip-properties}(\ref{lem:fip-properties:P1})}} \\[+3pt]
	  & = h(\tau c)[J] &  \\[+3pt]
               
	\end{array}
	$$
	
\item if $S \neq I$ and $J = S$, then:

	$$
	\arraycolsep=1.5pt
	\begin{array}{rlr}
	\Delta(h(\tau),c)[J] 
	  & \multicolumn{2}{l}{= \delta^c_J(h(\tau)[J])}  \\[+3pt]
	  & = \lift^S_J(\{(\delta^d_K(h(\tau')[\sync_K(d)]))_{K \in \up{J}}  \mid \tau' d \sim_J \tau c \}) & \\[+3pt]
	  & \multicolumn{2}{r}{\text{by Remark~\ref{rmk:update}}} \\[+3pt]
	  & = \{(h(\tau' d)[K])_{K \in \up{J}}  \mid \tau' d \sim_J \tau c \} & \\[+3pt]
	  & \multicolumn{2}{r}{\text{since } J = J \cup S \text{ and thus } \lift^S_J \text{ is the identity,}} \\[+3pt]
	  & \multicolumn{2}{r}{\text{and by induction hypothesis}} \\[+3pt]
	  & = h(\tau c)[J] & \\[+3pt]
	\end{array}
	$$

\item if $S \neq I$ and $J \neq S$, then:

	$$
	\arraycolsep=1.5pt
	\begin{array}{rlr}
	\Delta(h(\tau),c)[J] 
	  & \multicolumn{2}{l}{= \delta^c_J(h(\tau)[S])}  \\[+3pt]
	  & = \lift^S_J(\psi) & \\[+3pt]
          & \text{where } \psi = \{(\delta^d_K(h(\tau')[\sync_K(d)]))_{K \in \up{S}}  \mid \tau' d \sim_S \tau c \} & \\[+3pt]
	  & \multicolumn{2}{r}{\text{by Remark~\ref{rmk:update}}} \\[+3pt]
          & \phantom{where \psi\, } = \{(h(\tau' d)[K])_{K \in \up{S}} \mid \tau' d \sim_S \tau c \} & \\
	  & \multicolumn{2}{r}{\text{by induction hypothesis}} \\[+3pt]
	  & = \{\clift^S_J(\varphi, \psi) \mid \varphi \in \psi \} & \\[+3pt]
	  & \multicolumn{2}{r}{\text{by definition of } \lift \text{ as } J \subsetneq S} \\[+3pt]
	  & = \{\clift^S_J((h(\tau' d)[K])_{K \in \up{S}}, \psi) \mid \tau' d \sim_S \tau c \} & \\[+3pt]
	  & = \{\clift^S_J((h(\tau' d)[K])_{K \in \up{S}}, \psi) \mid \tau' d \sim_J \tau c \} & \\[+3pt]
	  & \multicolumn{2}{r}{\text{by Lemma~\ref{lem:fip-properties}(\ref{lem:fip-properties:P1}) }} \\[+3pt]
	\end{array}
	$$

        We now show that $\clift^S_J((h(\tau' d)[K])_{K \in \up{S}}, \psi)[L] = h(\tau' d)[L]$ for all $L \in \up{J}$, 
 	which concludes the proof as we get $\Delta(h(\tau),c)[J] = \{ (h(\tau' d)[l])_{L \in \up{J}} \mid \tau'd \sim_J \tau c \} = h(\tau c)[J]$.
        We consider several cases:

	\begin{enumerate}
	\item if $L \subseteq S$ (i.e., $L \cup S = S$), then 

	$$
	\arraycolsep=1.5pt
	\begin{array}{ll}
	 \clift^S_J((h(\tau' d)[K])_{K \in \up{S}}, \psi)[L] & = \lift^S_L(\psi) \\[+3pt]
	 &  = \lift^S_L(\{(\delta^e_K(h(\tau'')[\sync_K(e)]))_{K \in \up{S}}  \mid \tau'' e \sim_S \tau c \}) \\[+3pt]
	 &  = \delta^c_L(h(\tau)[\sync_L(c)])$ \hfill by Remark~\ref{rmk:update} since $S = \sync_L(c)  \\[+3pt]
	 &  = h(\tau c)[L] $\hfill by induction hypothesis$ \\[+3pt]
	 &  = h(\tau' d)[L] $\hfill by Lemma~\ref{lem:basic-property-h} since $\tau'd \sim_L \tau c $ by Lemma~\ref{lem:fip-properties}(\ref{lem:fip-properties:P1})$  \\[+3pt]
	\end{array}
	$$

        \item \label{item:correctness-proof} otherwise (i.e., $L \cup S \neq S$) let $T = \sync_{L \cup S}(d)$. It will be useful to remark that $S \subseteq L \cup S \subseteq T$ (as coalitions communicate with themselves), thus $S \cup T = T$; and that $S = \sync_J(c) = \sync_J(d) \subseteq \sync_{L}(d)$ (as $J \subseteq L$), thus $\sync_{L}(d) = \sync_{L \cup S}(d) = T$; and by transitivity, we get $S\cup T = \sync_{L}(d)$. We proceed as follows:

        If $T = I$ then:
	$$
	\arraycolsep=1.5pt
	\begin{array}{ll}
	\clift^S_J((h(\tau' d)[K])_{K \in \up{S}}, \psi)[L] & = \lift^S_L(h(\tau' d)[L \cup S]) \\[+3pt]
	& = \lift^S_L(\delta^d_{L \cup S}(h(\tau')[T]))$ \hfill by induction hypothesis $\\[+3pt]
	& = \lift^S_L(\lift^T_{L \cup S}(\{\delta_{\M}(q_\epsilon,\tau''e) \mid \tau'' e \sim_I \tau' d\}))$ \hfill by Remark~\ref{rmk:update}$ \\[+3pt]
	& = \lift^{S \cup T}_L(\{\delta_{\M}(q_\epsilon,\tau''e) \mid \tau'' e \sim_I \tau' d\})$ \hfill by Lemma~\ref{lem:lift-compositional} $\\[+3pt]
	& = \delta^d_L(h(\tau')[\sync_L(d)])$ \hfill  by Remark~\ref{rmk:update} as $S \cup T = \sync_{L}(d) \\[+3pt]
	& = h(\tau' d)[L]$ \hfill by induction hypothesis$ \\[+3pt]
	\end{array}
	$$
	

        \item otherwise (i.e., $L \cup S \neq S$ and $T \neq I$), this case is similar to \ref{item:correctness-proof}:

	$$
	\arraycolsep=1.5pt
	\begin{array}{ll}
	\clift^S_J((h(\tau' d)[K])_{K \in \up{S}}, \psi)[L] & = \lift^S_L(h(\tau' d)[L \cup S]) \\[+3pt]
	& = \lift^S_L(\delta^d_{L \cup S}(h(\tau')[T])) \hfill \text{by induction hypothesis}  \\[+3pt]
	& = \lift^S_L(\lift^T_{L \cup S}(\{( \delta^e_K(h(\tau'')[\sync_K(e)]))_{K \in \up{T}} \mid \tau'' e \sim_T \tau' d \})) \\[+3pt]
	& \multicolumn{1}{r}{\text{by Remark~\ref{rmk:update}}} \\[+3pt]
	& = \lift^{S \cup T}_L(\{(\delta^e_K(h(\tau'')[\sync_K(e)]))_{K \in \up{T}} \mid \tau'' e \sim_T \tau' d \})  \\[+3pt]
	& \multicolumn{1}{r}{\text{by Lemma~\ref{lem:lift-compositional}}} \\[+3pt]
	& = \delta^d_L(h(\tau')[\sync_L(d)]) \hfill \text{ by Remark~\ref{rmk:update} as } S \cup T = \sync_{L}(d) \\[+3pt]
	& = h(\tau' d)[L] \hfill \text{by induction hypothesis} \\[+3pt]
	\end{array}
	$$

	\end{enumerate}

\end{enumerate}

\end{proof}
\end{longversion}

By Lemma~\ref{lem:refinement-rectangular-h} and Corollary~\ref{cor:morphic-h},
the function $h$ is a rectangular morphism for FIP games. 
By Theorem~\ref{thm:bisimilar} (using Lemma~\ref{lem:info-quotient} and 
Lemma~\ref{lem:bisimulation-preserves-strat}) we can reduce FIP games
to a game of perfect information, and thus solving FIP games is decidable.

\begin{theorem}\label{theo:FIP-synthesis}
The synthesis problem for FIP games with a parity winning condition is decidable.
\end{theorem}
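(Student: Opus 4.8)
The plan is to assemble the machinery developed in Sections~\ref{sec:graph-games} and~\ref{sec:solving-FIP}. First I would invoke the pre-processing of Section~\ref{sec:pre-processing} to replace the given FIP game by an equivalent one $\G = \tuple{A,\Sigma^{n+1},\act,\sim,\lambda}$ with trivial observation functions; this step preserves the existence of a winning strategy for the parity condition~$L$, because the transformation retains every winning pre-image under $\beta'$ and declares every play that leaves the (regular) language of realisable observation profiles to be winning. Henceforth $\sim = \sim_0$ is the indistinguishability relation described inductively by Lemma~\ref{lem:fip-properties}, and $\lambda$ is given by a Mealy machine $\M = \tuple{Q,\Gamma,C,q_{\varepsilon},\delta_{\M},\lambda}$.

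Next I would verify that the function $h\colon \Gamma^* \to P$ constructed in Section~\ref{sec:solving-FIP} meets the four requirements of Section~\ref{sec:covering}. Rectangularity is Lemma~\ref{lem:refinement-rectangular-h}; the morphism property is Corollary~\ref{cor:morphic-h}, obtained from the update function $\Delta$ satisfying $\Delta(h(\tau),c) = h(\tau c)$; the refinement property is immediate since the grand-coalition component $h(\tau)[I] = \{\delta_{\M}(q_{\varepsilon},\tau)\}$ records the state of~$\M$, so $h(\tau) = h(\tau')$ forces $\delta_{\M}(q_{\varepsilon},\tau) = \delta_{\M}(q_{\varepsilon},\tau')$ and hence $\lambda(\tau) = \lambda(\tau')$; and finite-stateness holds because $P = \prod_{\{0\}\subseteq J \subseteq I}\Psi_J$ with $\Psi_I = 2^{Q}$ and, by descending induction on $\abs{J}$, each $\Psi_J = 2^{\prod_{K\in\up{J}}\Psi_K}$ is finite, whence $P$ is finite (of $(n+1)$-fold exponential size in $\abs{Q}$).

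Having established that $h$ is a finite-state rectangular morphism refining~$\M$, I would apply Theorem~\ref{thm:bisimilar}: the information tree $\U(\G)$ is bisimilar to the quotient $\quot{\H}{R^{\H}}$, where $\H = \tuple{P,p_{\init},\delta,\lambda}$ is the semi-automaton induced by~$h$ and $R^{\H}$ is the equivalence of Lemma~\ref{lem:RH-equivalence}. Both $\H$ and $R^{\H}$ are effectively computable from the FIP description, since $P$ is finite and enumerable, $\delta$ is given by the explicit function $\Delta$, and $R^{\H}$ is the unique equivalence satisfying $h([\tau]_{\sim}) = [h(\tau)]_{R^{\H}}$ (Lemma~\ref{lem:h-preserves-eq-class}), computable as a fixpoint over the reachable part of~$\H$. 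Thus $\quot{\H}{R^{\H}}$ is a \emph{finite} game graph. By Lemma~\ref{lem:bisimulation-preserves-strat}, a winning strategy for~$L$ exists in $\U(\G)$ iff one exists in $\quot{\H}{R^{\H}}$, and by Lemma~\ref{lem:info-quotient} (whose bijection preserves the outcome colours, hence the winning status) a winning strategy for~$L$ exists in $\G$ iff one exists in $\U(\G)$. Since~$L$ is a parity condition, the perfect-information parity game on the finite graph $\quot{\H}{R^{\H}}$ is solvable~\cite{automata}; this decides the FIP synthesis problem. Moreover a finite-memory winning strategy on $\quot{\H}{R^{\H}}$ can be pulled back along the $p$-morphism~$k$ of Theorem~\ref{thm:bisimilar} and the bijection of Lemma~\ref{lem:info-quotient} to an effectively describable winning strategy of~$\G$.

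The genuine difficulty lies not in this final assembly but in the two lemmas it rests upon: proving that the simultaneously-recursive $h$ is a morphism (Corollary~\ref{cor:morphic-h}), which hinges on the delicate definition of the lifting operators $\lift^S_J$ and their compositionality (Lemma~\ref{lem:lift-compositional}) needed to model how a communication event updates the nested knowledge sets; and proving that $R^{\H}$ is an equivalence (Lemma~\ref{lem:RH-equivalence}), where transitivity is recovered precisely from rectangularity. The remaining ingredients --- closure and effectiveness of the construction, and the reduction to finite parity games --- are routine within the framework of Section~\ref{sec:covering}.
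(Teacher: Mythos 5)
Your proposal is correct and follows essentially the same route as the paper: it assembles the pre-processing step, the verification that $h$ is a finite-state rectangular morphism refining $\M$ (Lemma~\ref{lem:refinement-rectangular-h}, Corollary~\ref{cor:morphic-h}), and the reduction via Theorem~\ref{thm:bisimilar} together with Lemmas~\ref{lem:info-quotient} and~\ref{lem:bisimulation-preserves-strat} to a finite perfect-information parity game. The only additions beyond the paper's own (very terse) argument are the explicit remarks on finiteness of $P$ and effectiveness of computing $\H$ and $R^{\H}$, which are welcome but not a departure in approach.
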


Theorem~\ref{theo:FIP-synthesis} extends to all (visible) winning conditions for which
perfect-information games are decidable, such as mean-payoff, discounted sum, etc.

Given a FIP game with $k$ observers, the pre-processing step (Section~\ref{sec:pre-processing})
adds one observer, and thus the size $\abs{P}$ of the perfect-information game (induced by
the rectangular morphism $h$) is $(k+1)$-fold exponential in the size of the FIP game.
Note that the parity objective is defined using the same number of priorities, 
and since perfect-information parity games can be solved in time
at most exponential in the number of priorities (even in quasi-polynomial time,
see e.g.~\cite[Section 1.2]{JMT22}) we derive a $(k+1)$-EXPTIME upper bound
for the synthesis problem.

We show a matching lower bound for reachability winning conditions.
We reduce the membership problem for alternating $k$-EXPSPACE Turing machines
(which is $(k+1)$-EXPTIME-complete)
to the synthesis problem for FIP with $k$ observers.


Given an alternating $k$-EXPSPACE Turing machine $M$ and an input word $w$ of length $n$,
we construct a FIP game with reachability objective in which the player
has a winning strategy if and only if~$M$ accepts~$w$. The contructed
game has size polynomial in~$n$ and~$\abs{M}$. 
 
Intuitively, the game 
simulates an interactive execution of the alternating Turing machine
where the player chooses the transitions in existential states, and the environment
chooses the transitions in universal states. Moreover, 
in order to win, the player has to announce the configuration\footnote{From now on, configurations
are of the Turing machine, no longer from the rectangular morphism.}
of the Turing machine~$M$ after each transition, that is the content of the $k$-fold exponential-size
tape, the position of the tape head, and the control state (initially the tape contains
$w$ followed by blank symbols), and eventually to announce a configuration 
containing the accepting state (thus a reachability condition). 

The crux is to ensure that the configurations announced by the player are 
consistent with the transitions of the machine~$M$. This is simple
for the initial configuration, and we need to 
verify that all other configurations are the successor of the previously announced
configuration upon executing the corresponding transition.

The difficulty is that the configurations, which are of $k$-fold exponential size,
cannot be stored by the winning condition (via the coloring function encoded by a Mealy machine
with polynomial number of states). 
Therefore we use imperfect information and observers to carry out the verification.
In fact it is sufficient to be able to check equality of two configurations,
because transition updates are local (within a window of three tape cells) and 
can be stored in the Mealy machine. 

The verification of configuration equality works according to to the following principle.
At any stage of the play,
the environment may challenge the equality of the last two configurations
announced by the player, by marking a position in the first configuration
and a position in the second configuration where the content of the two
configurations differ. To produce a mark, the environment sends a specific
observation signal to some designated observer, while the player 
announces the configurations. As the signal sent to the observer
is not visible to the player until a communication happens, the marking
can be done retrospectively by the environment, knowing the second configuration. 
After the marking is done, the player is allowed to communicate with the observer 
and to see the marks. 
If the bits at the marked positions are equal (which can be checked by 
the winning condition), then the player wins. Otherwise, the player
may challenge that the positions marked in the two configurations
are the same, which amounts to verifying \emph{inequality} of two numbers 
with $(k-1)$-fold exponentially many bits. We achieve this by requiring that 
the player announces, along with every bit of a configuration, 
its address (i.e., its position encoded in binary), and to claim
inequality of two marked positions $p_1$ and $p_2$, that is inequality
of their announced addresses, the player 
identifies a position $p^*$ in the two addresses where the bits are different.
As the player cannot do marking retrospectively, he gives the position $p^*$
by announcing the binary encoding of $p^*$ (thus over $(k-2)$-fold exponentially many bits),
along with the value $b_i$ of the bit at position $p^*$ in (the binary encoding of) $p_i$ ($i=1,2$),
thus either announcing $b_1=1$ and $b_2=0$, or $b_1=0$ and $b_2=1$. 
This allows the environment to challenge any of the two claims ``the bit at position 
$p^*$ in (the binary encoding of) $p_1$ is $b_1$'' or ``the bit at position 
$p^*$ in (the binary encoding of) $p_2$ is $b_2$''. 
The environment does so by marking a position in $p_i$ ($i=1$ or $i=2$), the marking 
being recorded by a fresh observer: 
either the bit marked is $b_i$ (which can be checked by the winning condition)
and the player wins, or the player may challenge that the positions marked 
in $p_i$ is $p^*$ after communicating with the new observer, 
which is again an inequality test, now over numbers with $(k-3)$-fold exponentially many bits.
The verification game proceeds in this way until the numbers to be compared
are encoded with a small number of bits, which can then be checked by a small
Mealy machine.  

In summary, every bit announced by the player is followed by its address (encoded in binary)
within the sequence of bits announced. As the address is itself a sequence of bits, 
the rule applies recursively. Given a size $n$ and depth $k$, define $\mathit{Tower}(n,0) = n$
and $\mathit{Tower}(n,k) = 2^{\mathit{Tower}(n,k-1)}$ if $k \geq 1$, thus $\mathit{Tower}(n,k)$ is the
$k$-fold exponential of $n$ (in base $2$). We define counters of depth $k$
as an encoding of numbers, ranging from $0$ to $\mathit{Tower}(n,k)-1$, as sequences
of bits where each bit is followed by its address, encoded as a counter of depth $k-1$.
This encoding is directly inspired by a similar definition in previous work~\cite[Section 4.2]{GGMW13}.

A counter of depth $0$ is a sequence $b \#_0$ where $b \in \{0,\dots,n-1\}$
and $\#_0$ is a separator (of level $0$). Define $\val_0(b \#_0) = b$. 
A counter of depth $k$  ($k \geq 1$) is a sequence 
$\gamma = b_0 c_0 \, b_1 c_1 \dots b_N c_N \#_k$ where $N = \mathit{Tower}(n,k-1)-1$,
consisting of bits $b_i \in \{0,1\}$, counters $c_i$ of depth $k-1$ and value $\val_{k-1}(c_i) = i$,
for all $0 \leq i \leq N$, and a separator $\#_k$ of level $k$. 
Define $\val_{k}(\gamma) = \sum_{i} b_i 2^{N-i}$. We refer to the bits $b_i$ 
in $\gamma$ as bits of level $k$.

A formal description the contructed FIP games would be tedious to read.
\begin{shortversion}
We give an informal description in the extended version.
\end{shortversion}

\begin{longversion}
We give an informal description.
We describe the action (of the player) and the moves (of the environment)
as if they were independent (the player may play a few actions, then the environment
may play some moves, rather than a strict interleaving of one action and one move).

The actions of the player are the bits $\{0,1\}$ along with their level, 
the separators $\{\#_0,\dots,\#k\}$,
the transitions of the Turing machine, and two actions $r^{01}$ and $r^{10}$
to announce that the value of certain two bits are different (either $0$ then $1$,
or $1$ then $0$); these actions are followed by the address where the bits 
can be found (it is the same address within two different counters). 

The moves of the environment are the marks that are given to the observers,
the synchronisation moves $\{\$_i \mid 1 \leq i \leq k\}$ that let the player
communicate with the corresponding observer, and an extra two moves that are not
shown to anyone and are used to drive the Mealy automaton for the winning
condition. The Mealy automaton consists of several components that check some
property of the history, to which the environment can branch using the extra two 
moves (possibly several times). As every component can be chosen, but the player
does not see which one is chosen, the player has to ensure that all are accepting.
We now show that the correct simulation of the Turing machine can be verified 
by a small number of small such components.

In the game, the player can only see his own actions, until a synchronisation occurs.
We call \emph{phase} the segment of a history between two (consecutive) synchronisations.
We construct game such that the player wins if each phase corresponds to a correct
encoding of a transition and a configuration of the Turing machine (encoded as a counter of depth $k$),
that is the format of the encoding is correct and the bit values correspond
to the execution of the transition from the configuration in the previous phase.

First we define some conditions that must hold for all $0 \leq i \leq k$ and 
that are easy to verify (individually) with a small automaton:
\begin{itemize}
\item the configuration starts with a $0$ of level $k$ and ends with a separator $\#_k$;
\item a bit of level $i \geq 1$ is followed by a bit of level $i-1$;
\item a bit of level $0$ is followed by $\#_0$; 
\item a separator of level $i < k$ is followed either by a separator of level $i+1$ or by a bit of level $i+1$;
\item all bits of level $i-1$ are $0$ until the first separator of level $i-1$;
\item all bits of level $i-1$ are $1$ after the second to last separator of level $i-1$.
\end{itemize}
The environment may branch to any of the components checking the above conditions.
Similarly, we ensure that the transitions of the Turing machines respect the
control states: we store the control state and update it according to the
the transitions, then blame either the player or the environment for a wrong
choice of transition. 

The content of the configuration is checked via the marking by the environment,
which is transmitted to the observers. First, we can detect whether the marking has the wrong format
(a separator is marked istead of a bit, more than two marks occur at some level, etc.)
and let the player win. If the format is correct, the environment has to 
let the player see the marking at the appropriate level~$i$ (by communicating 
with the corresponding observer, using $_i$), so the player may claim
an \emph{inequality}, namely  
that the two marked bits are not at the same position, by pointing, within the (two) addresses
of the marked bits (which are supposed to be equal if the marking is honest), 
at which position their bits differ, claiming either $r^{01}$ or $r^{10}$ 
and announcing their address as a counter of the appropriate (and lower) depth. 
Now, either the announced address is not a counter in the correct format,
and the environment can restart the game at the appropriate depth,
or we check the player's claim, letting the environment choose which of
the two bits to check, and if it does not match with the marked bit, 
allowing the player to further claim \emph{inequality} between the address
of the pointed bit and the announced address, at a lower depth. 
The whole process requires at most $k$ observers because the depth of the 
checked counters decreases until level $0$ where an automaton can directly check the claim.
We argue that a small (polynomial-size) automaton can store 
the value of the marked bits in order to verify the claims ($r^{01}$ or $r^{10}$) of the player.
Storing $2$ bits at each depth (thus $2k$ bits) would be sufficient if the format of the 
addresses announced by the player are never challenged by the environment.
Otherwise, the game restarts to check the format of a counter that is
now of lower depth, and requires to store $2(k-1)$ bits (if there is no further challenge 
on the format). Repeating this argument gives a number $O(k^2)$ of bits to store.
As $k$ is a constant (independent of $n$ and the size of the Turing machine),
the reduction is indeed polynomial.

We show that the player has a strategy to reach the 
accepting state of the Turing machine~$M$ if and only if $M$ accepts~$w$. 

If $M$ accepts~$w$, the strategy of the player is to produce a run of $M$ on $w$
where the configurations are in the format of counters of depth $k$,
thus satisfying the simple conditions.
If the environment ever challenges the content of the tape, he may either
mark bits at the same address within two successive configurations (thus that are correct with respect
to the the transition of the Turing machine), or choose different 
positions within the two configurations. In the first case, the pointed bits
will be compared by the automaton for the winning condition and the player would
win; in the second case, the two bits have different addresses and 
the strategy of the player is to further produce a counter that points
to a position of a bit that differs in the two addresses, which can be 
checked by an inductive argument, showing that the strategy of the player is winning.

For the converse direction, If $M$ does not accept~$w$, then a strategy of the 
player either does not reach an accepting state, or produces a spurious run of $M$ on $w$.
If the format of the counters is not respected, then the environment would
choose the corresponding small automaton and the player loses. If the content
of the tape cells is incorrect, the environment marks the position where a
fault occurs, and will be able to respond to any challenge in address inequality
from the player, by always marking bits at the same address in two counters.
Hence the strategy of the player is losing.
\end{longversion}


\begin{theorem}\label{theo:FIP-synthesis-complexity}
The synthesis problem for FIP games with $k$ observers is $(k+1)$-EXPTIME-complete,
both for parity and reachability winning conditions.
\end{theorem}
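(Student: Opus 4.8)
The plan is to prove the two matching bounds separately, the upper bound being a complexity accounting of the reduction already in place and the lower bound a dedicated reduction from alternating space-bounded Turing machines. For the upper bound, given a FIP game $\G$ with $k$ observers, the pre-processing of Section~\ref{sec:pre-processing} yields an equivalent FIP with $k+1$ observers and trivial observation functions, so that Lemma~\ref{lem:refinement-rectangular-h} and Corollary~\ref{cor:morphic-h} supply a rectangular morphism $h$ into $P = \prod_{\{0\}\subseteq J\subseteq I}\Psi_J$ with $\Psi_I = 2^Q$ and $\Psi_J = 2^{\prod_{K\in\up{J}}\Psi_K}$; hence $\abs{P}$ is bounded by a tower of exponentials of height $k+1$ in $\abs{\G}$. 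By Theorem~\ref{thm:bisimilar} the information tree of $\G$ is bisimilar to the perfect-information game graph $\quot{\H}{R^\H}$, which has at most $\abs{P}$ nodes, and by Lemma~\ref{lem:bisimulation-preserves-strat} together with Lemma~\ref{lem:info-quotient} a winning strategy exists in $\G$ iff one exists in this parity game. Solving a perfect-information parity game takes time polynomial in the number of nodes and at most exponential (indeed quasi-polynomial) in the number of priorities, and the set of priorities is inherited unchanged from the Mealy machine for $\lambda$ in the input, hence of constant size in $k$. This gives a $(k+1)$-EXPTIME decision procedure, and since reachability is the instance $C = \{0,1\}$, $L = \{n_1 n_2 \ldots \mid \inf_i n_i = 0\}$, the same bound holds there.

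\smallskip\noindent\emph{Lower bound.} I would reduce the membership problem for alternating $k$-EXPSPACE Turing machines --- which is $(k+1)$-EXPTIME-complete --- to the FIP synthesis problem with $k$ observers and a reachability condition, which then transfers to parity. Given $(M,w)$ with $\abs{w} = n$, the reduction produces, in time polynomial in $n + \abs{M}$, a FIP game in which Player~$0$ must announce move by move a run of $M$ on $w$: a sequence of configurations of the $k$-EXPSPACE tape, each written as a \emph{counter of depth $k$} (the nested bit-plus-address encoding defined above), with the environment resolving the universal branchings and Player~$0$ the existential ones, aiming to reach a configuration in the accepting state. Local correctness --- format of the counter encoding, control-state bookkeeping, the three-cell rewrite window of each transition --- is checked by a polynomial-size Mealy machine folded into the coloring, to which the environment may branch at any time and which Player~$0$ must therefore satisfy on every branch. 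The only non-local obligation, that successive configurations agree outside the rewrite window, is discharged through the observers: exploiting that communication is delayed, the environment may retrospectively mark one cell in each of the last two announced configurations by privately signalling the designated observer; after the next synchronisation with that observer Player~$0$ sees the marks, and either the marked bits coincide and he wins, or he contests that the two marked \emph{addresses} are equal, which is an inequality test over numbers with $\mathit{Tower}(n,k-1)$ bits, recursing and consuming one further observer per level until, at level~$0$, the comparison is small enough to be settled inside the Mealy machine.

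\smallskip\noindent\emph{Correctness and the main obstacle.} I would close with the usual two directions. If $M$ accepts $w$, Player~$0$ plays the honest accepting run in the counter encoding: every format component holds; if the environment marks two cells at equal addresses in consecutive configurations the rewrite rule forces the bits to match and he wins, and if it marks cells at different addresses he enters the inequality subgame and, by induction on depth, produces a witnessing differing bit position and wins. Conversely, if $M$ rejects $w$, any strategy of Player~$0$ either fails to reach the accepting state or yields a spurious run; a format violation is caught by the corresponding component, and a content violation lets the environment mark the faulty cell and thereafter answer every inequality challenge by marking the \emph{same} address in both counters, so the play is losing. The delicate part --- and the step I expect to be the main obstacle --- is the design and verification of this equality-checking protocol: arranging the timing of private signals and synchronisations so that the environment can mark with knowledge of the later configuration while Player~$0$ learns it only through communication; ensuring the nested inequality challenges bottom out after exactly $k$ levels so that $k$ observers suffice; and verifying that the winning-condition Mealy machine needs to remember only $O(k^2)$ bits (two per level, refreshed each time a format challenge restarts the comparison one level lower), so that with $k$ constant the reduction remains polynomial.
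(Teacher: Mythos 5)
Your proposal is correct and follows essentially the same route as the paper: the upper bound via the $(k{+}1)$-fold exponential size of the rectangular-morphism state space $P$ (after the pre-processing step adds one observer) combined with quasi-polynomial solving of the resulting perfect-information parity game, and the lower bound via a reduction from alternating $k$-EXPSPACE Turing machine acceptance using depth-$k$ counters and the observer-based retrospective marking and nested address-inequality protocol. Both the structure of the correctness argument and the $O(k^2)$-bit accounting for the winning-condition Mealy machine match the paper's proof.
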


\section{Expressiveness}\label{sec:expressiveness}

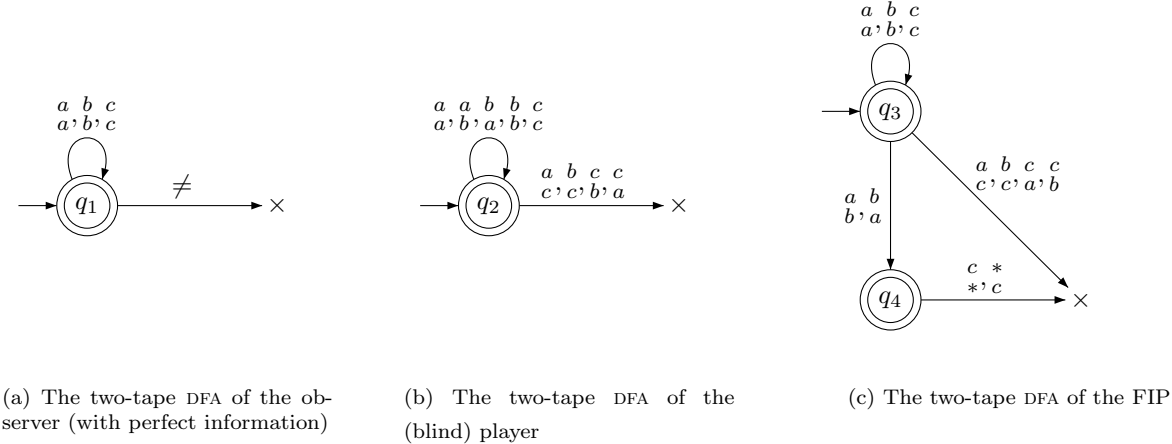
\begin{figure*}[th]%
\begin{center}
\hrule
\subfloat[The two-tape \dfa of the observer (with perfect information)]{

\begin{picture}(40,50)(0,0)

\gasset{Nw=8,Nh=8,Nmr=4, rdist=1, loopdiam=5}      

\node[Nmarks=ir, Nframe=y](q1)(10,22.5){$q_1$}
\node[Nmarks=n, Nframe=n, Nw=4, Nh=5](q4)(35,22.5){$\times$}





\drawedge[ELpos=50, ELside=l, ELdist=1, curvedepth=0](q1,q4){$\neq$}


\drawloop[ELside=l,loopCW=y, loopdiam=5, loopangle=90](q1){$\cfbi{a}{a},\fbi{b}{b},\fbi{c}{c}$}







\end{picture}
   \label{fig:perfect}  
}
\hfill
\subfloat[The two-tape \dfa of the (blind) player {\large \strut}]{

\begin{picture}(40,50)(0,0)

\gasset{Nw=8,Nh=8,Nmr=4, rdist=1, loopdiam=5}      

\node[Nmarks=ir, Nframe=y](q1)(10,22.5){$q_2$}
\node[Nmarks=n, Nframe=n, Nw=4, Nh=5](q4)(35,22.5){$\times$}





\drawedge[ELpos=50, ELside=l, ELdist=1, curvedepth=0](q1,q4){$\cfbi{a}{c},\fbi{b}{c},\cfbi{c}{b},\fbi{c}{a}$}


\drawloop[ELside=l,loopCW=y, loopdiam=5, loopangle=90](q1){$\cfbi{a}{a},\fbi{a}{b},\fbi{b}{a},\fbi{b}{b},\cfbi{c}{c}$}







\end{picture}
    \label{fig:blind}
}
\hfill
\subfloat[The two-tape \dfa of the FIP]{

\begin{picture}(50,50)(0,0)

\gasset{Nw=8,Nh=8,Nmr=4, rdist=1, loopdiam=5}      

\node[Nmarks=ir, Nframe=y](q1)(10,35){$q_3$}
\node[Nmarks=r, Nframe=y](q3)(10,10){$q_4$}
\node[Nmarks=n, Nframe=n, Nw=4, Nh=5](q4)(35,10){$\times$}




\drawedge[ELpos=50, ELside=r, ELdist=1, curvedepth=0](q1,q3){$\fbi{a}{b},\fbi{b}{a}$}

\drawedge[ELpos=50, ELside=l, ELdist=.5, curvedepth=0](q1,q4){$\cfbi{a}{c},\fbi{b}{c},\fbi{c}{a},\fbi{c}{b}$}
\drawedge[ELpos=50, ELside=l, curvedepth=0](q3,q4){$\cfbi{c}{*},\cfbi{*}{c}$}


\drawloop[ELside=l,loopCW=y, loopdiam=5, loopangle=90](q1){$\cfbi{a}{a},\fbi{b}{b},\fbi{c}{c}$}







\end{picture}
    \label{fig:fip-two-players}
}
\hrule
\caption{A FIP with one player and one observer. \label{fig:simple-fip}}%
\end{center}
\end{figure*}

We compare the expressive power of full-information protocols to
define indistinguishability relations. We show that FIP are 
strictly more expressive than the traditional partial-observation setting,
which corresponds to FIP with no observer.
We further generalise this result and show that the number of observers in a FIP 
induces a strict hierarchy  in terms of expressive power.
Finally, the general framework of two-tape automata~\cite{BD23} is 
strictly more expressive than FIP (with an arbitrary number of observers). 

Two-tape automata are \dfa over alphabet $\Gamma \times \Gamma$
that recognise synchronous relations over $\Gamma$, 
that is, relations between words of the same length. 
The relation recognised by such an automaton $\A$ consists of all pairs 
of words $c_1 c_2 \ldots c_\ell, c'_1 c'_2 \ldots c'_\ell \in \Gamma^*$ 
such that $(c_1, c_1') (c_2, c_2') \ldots (c_\ell, c_\ell') \in L(\A)$. 
With a slight abuse of notation, we also denote this relation by $L(\A)$. 
We say that a synchronous relation is regular if it is recognised by a \dfa.
It is decidable in polynomial time whether the relation recognised by a given 
two-tape automaton is an indistinguishability relation~\cite[Lemma 2.3]{BD23}.
The decidability of the synthesis problem is open when the indistinguishability 
relation is regular.

As a first example (inspired by ~\cite[Lemma 3.2]{BD23}), consider the following scenario over a set of moves $\Gamma = \{a,b,c\}$:
there is an observer with perfect information (their indistinguishability relation
is recognised by the two-tape \dfa of \figurename~\ref{fig:perfect}), and a player who does
not distinguish $a$ and $b$, but can observe $c$ (their indistinguishability relation 
is recognised by the two-tape \dfa of \figurename~\ref{fig:blind}). In a FIP where the player 
communicates with the observer on $c$, the indistinguishability relation is 
recognised by the two-tape \dfa of \figurename~\ref{fig:fip-two-players}.
Informally, two histories are indistinguishable for the FIP if 
they are equal up to the last $c$. The induced information tree has unbounded branching
as all histories of the same length $n$ that do not contain $c$ are indistinguishable, 
hence $u_n = \{a, b\}^n$ is an information set, and for every history $\tau \in u_n$ 
the history $\tau c$ forms a singleton information set. Therefore $u_n$ has at least 
$2^n$ successors, for every $n$.

A regular observation function (or equivalently, a FIP with no observer) induces 
an information tree with bounded branching~\cite[Theorem~4.1]{BD23}, which
implies that FIP are strictly more expressive than the traditional partial-observation games.

We show that two-tape automata are strictly more expressive than FIP.
First we show that two-tape automata can recognise the indistinguishibility 
relation of a FIP.

\begin{lemma}\label{lem:FIP-to-dfa}
Every indistinguishability relation defined by a FIP can be recognised by a two-tape \dfa.
\end{lemma}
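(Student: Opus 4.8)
The plan is to build a nondeterministic two-tape automaton for the indistinguishability relation $\sim_0$ that a FIP defines, and then invoke determinisation. Indeed, a synchronous relation over $\Gamma$ is regular — hence recognised by some two-tape \dfa — already when it is recognised by a nondeterministic automaton over the alphabet $\Gamma \times \Gamma$, since such an automaton can be determinised without changing the recognised language. So it suffices to produce a nondeterministic two-tape automaton $\B$ that, reading a pair of histories $\tuple{\tau,\tau'}$ with $\tau = c_1 \cdots c_\ell$ and $\tau' = c'_1 \cdots c'_\ell$ letter by letter as $\tuple{c_1,c'_1}\cdots\tuple{c_\ell,c'_\ell}$, accepts exactly when $\tau \sim_0 \tau'$.

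The first step is to rephrase $\sim_0$ so that it refers only to a left-to-right scan. Fix $\tau$ and, for every level $t \le \ell$, let $W_t \subseteq I$ be the set of players $j$ for which $(j,t)$ is reachable from $(0,\ell)$ in $\View(\tau)$. Every edge of $\View(\tau)$ either decreases the time coordinate by one (a ``past'' edge) or keeps it fixed (a communication edge), so a path from $(0,\ell)$ to a node at level $t-1$ must reach level $t-1$ from some node at level $t$ through a past edge and then travel only along communication edges at level $t-1$. This yields the backward recursion $W_\ell = \sync^\ell(\{0\})$ and $W_{t-1} = \sync^{t-1}(W_t)$, where $\sync^{t}(S)$ denotes the closure of $S$ under the communication links present at level $t$ of $\View(\tau)$, i.e.\ $k \in \sync^t(S)$ whenever $i \in \sync^t(S)$ and $k \in R_\sigma(i)$ with $\sigma = \beta_i(\tau(t))$. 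By the definition of the relations $\sim_i$ we then get $\tau \sim_0 \tau'$ iff $\abs{\tau} = \abs{\tau'}$ and $\beta_j(\tau(t)) = \beta_j(\tau'(t))$ for every level $t$ and every $j \in W_t$ — this is the view-graph-free, inductive description of $\sim_J$ of the kind recorded after Lemma~\ref{lem:fip-properties}. The key observation for the construction is that the sets $W_t$ depend only on $\tau$, whereas the equalities to be checked involve both $\tau$ and $\tau'$.

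The automaton $\B$ is then built as follows. Deterministically it runs, for every player $j \in I$, the Mealy machine $\M_j$ defining $\beta_j$ on the first tape and a second, independent copy of $\M_j$ on the second tape; thus it keeps in its finite control the state reached by each $\M_j$ after $c_1 \cdots c_t$ and after $c'_1 \cdots c'_t$, from which the next pair of letters lets it read off $\beta_j(\tau(t{+}1))$, $\beta_j(\tau'(t{+}1))$ and the communication links at level $t{+}1$ of $\View(\tau)$. In parallel $\B$ guesses, level by level, the set $W_t \subseteq I$; keeping in its control the last one or two letters read and the last one or two tuples of $\M_j$-states, it can then locally verify $(i)$ the agreement $\beta_j(\tau(t)) = \beta_j(\tau'(t))$ for all $j \in W_t$, and $(ii)$ the consistency $W_{t-1} = \sync^{t-1}(W_t)$ between the guesses for consecutive levels; at the end of the input it also checks $W_\ell = \sync^\ell(\{0\})$, and it accepts iff every check has succeeded. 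Since the true sequence $(W_t)_{t \le \ell}$ is uniquely determined by $\tau$, the automaton $\B$ has an accepting run precisely when this unique sequence satisfies all the local equality checks, i.e.\ precisely when $\tau \sim_0 \tau'$. Determinising $\B$ yields the two-tape \dfa required by the lemma.

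The step I expect to be the main obstacle — or at least the one needing care — is the interaction between the Mealy-machine observation functions and the direction of the recursion: because each $\beta_i$ consumes an entire history, the communication structure at a level $t$, and hence the set $W_t$, depends on the whole prefix of $\tau$ up to $t$ and not just on the move $c_t$, so the recursion for $W_t$ runs backwards in time and cannot be evaluated on the fly during one left-to-right pass. The guess-and-verify mechanism above is exactly what sidesteps this: $\B$ defers committing to $W_t$ and only checks the backward recursion locally, between two adjacent guesses, using the $\M_j$-states it has already computed in the forward direction, and determinisation then removes the nondeterminism. (One could instead first replace the family $(\M_i)_{i \in I}$ by its synchronous product so that observations become single letters and $\sync^t$ depends only on $c_t$, in the spirit of the pre-processing of Section~\ref{sec:pre-processing}; but this is not needed.)
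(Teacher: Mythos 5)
Your proof is correct, but it takes a genuinely different route from the paper's. You characterise $\sim_0$ by backward reachability in the view graph (the level sets $W_t$ satisfying $W_\ell = \sync^\ell(\{0\})$ and $W_{t-1}=\sync^{t-1}(W_t)$), then guess these sets nondeterministically in a left-to-right pass, verify the recursion locally, and determinise. The paper instead builds a \emph{deterministic} two-tape automaton directly: it takes the product over all players $i$ of two-tape \dfa{}s $\A_i$ checking $\hat{\beta_i}(\tau)=\hat{\beta_i}(\tau')$ (each also storing the last common observation), and then modifies the transition function so that the rejecting status propagates from sender to receiver along the communication links determined by the stored observations --- formally, every player in the least set $J$ closed under ``$p[i]$ is rejecting'' and ``$i$ peeks at some $j\in J$'' is sent to $q_{\rej}$ --- accepting iff Player~$0$'s component is non-rejecting. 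The reason no guessing is needed there is the observation dual to yours: whether some node with differing observations is reachable \emph{from} $(i,t)$ depends only on levels $\le t$, so it can be maintained forward in time, whereas your set of nodes reachable from $(0,\ell)$ depends on the future and forces nondeterminism. Both arguments establish the lemma as stated; the practical difference is size. The paper's automaton is polynomial in the Mealy machines and exponential only in the number of players, a bound it exploits immediately afterwards to transfer the nonelementary lower bound to two-tape automata (Theorem~\ref{theo:synthesis-2dfa}); your construction pays an additional exponential for determinising the guessed subsets of $I$ together with the product of Mealy states, which is harmless for the lemma itself but would not support that quantitative corollary as directly.
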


Given a FIP $F = \tuple{I, (\M_i)_{i\in I},(R_{\sigma})_{\sigma \in \Sigma}}$
with $n$ observers, where the Mealy machines $\M_i$ define observation functions $\beta_i$
for each player $i \in I$, and the relations $R_{\sigma}$ define the communication
links on observation $\sigma$, we construct a two-tape \dfa $\A_{{\textsf{FIP}}}$ that defines the 
indistinguishability relation $\sim$ of the FIP $F$ as follows.
For each player $i \in I$, consider the two-tape \dfa $\A_i$ that accepts 
a pair $(\tau,\tau')$ of histories if $\hat{\beta_i}(\tau) = \hat{\beta_i}(\tau')$. 
The automaton $\A_i$ also stores the last observation produced by $\beta_i$ (if 
it is the same in the two input histories).

Now consider the synchronised product of the automata $\A_i$ for $i \in I$
and its transition relation $\delta$. Construct the transition relation $\delta'$
over the same state space as $\delta$, where given $p = \delta(q,c)$,
we define $r = \delta'(q,c)$ as follows.  
Consider the least set $J \subseteq I$ containing all $i \in I$
such that either the entry $p[i]$ is the rejecting state, or
$(i,j) \in R_{\sigma}$ and $j \in J$ where $\sigma$ is the 
observation of Player~$i$ stored in $p$.
The state $r$ is defined by $r[i] = q_{\rej}$ if $i \in J$,
and $r[i] = p[i]$ otherwise, where $q_{\rej}$ is an absorbing rejecting state.
A state $q$ is accepting in $\A_{{\textsf{FIP}}}$ if the entry $q[0]$ corresponding
to Player~$0$ is accepting. Intuitively, two histories are indistinguishable
if Player~$0$ receives the same observations for both of them, and the two
histories are indistinguishable for all players with whom Player~$0$ 
communicates (possibly indirectly).

The construction is illustrated in \figurename~\ref{fig:simple-fip} where
the state $q_3$ corresponds to $(q_1,q_2)$ (from the automata $\P_0$ of 
\figurename~\ref{fig:perfect} and $\P_1$ of \figurename~\ref{fig:blind}),
and $q_4$ corresponds to $(q_1,q_{\rej})$ where Player~$1$ 
has distinguished the histories, but Player~$0$ 
did not as there was no communication with $\P_1$ yet.
In $q_4$, whenever a communication occurs (via $c$), the histories get
distinguished by Player~$0$.
 
\begin{figure}[t]
\begin{center}
\hrule

\begin{picture}(55,72)(0,0)

      \gasset{Nw=8,Nh=8,Nmr=4, rdist=1, loopdiam=5}      

      \node[Nmarks=ir, Nframe=y](q1)(10,55){$q_1$}
      \node[Nmarks=r, Nframe=y](q2)(45,55){$q_2$}
      \node[Nmarks=r, Nframe=y](q3)(10,20){$q_3$}
      \node[Nmarks=n, Nframe=y](q4)(45,20){$q_{\rej}$}   
      \put(34,10){\makebox(0,0)[l]{$\times = \reject$}}
      

      \drawedge[ELpos=50, ELside=l, curvedepth=-3](q1,q2){$\fbi{c}{c}$}
      \drawedge[ELpos=50, ELside=r, ELdist=1, curvedepth=-6](q2,q1){$\fbi{\#}{\#}$}

      \drawedge[ELpos=52, ELside=l, ELdist=.5, curvedepth=0](q1,q3){\rotatebox{-90}{$\fbi{a}{b}, \fbi{b}{a}, \fbi{a}{\#}, \fbi{\#}{a}, \fbi{b}{\#}, \fbi{\#}{b}$}}
      \drawedge[ELpos=50, ELside=l, ELdist=1, curvedepth=6](q3,q1){$\fbi{\#}{\#}$}

      \drawedge[ELpos=50, ELside=l, ELdist=-7, curvedepth=0](q1,q4){\rotatebox{-45}{$\cfbi{a}{c},\fbi{b}{c},\fbi{\#}{c},\fbi{c}{a},\fbi{c}{b},\fbi{c}{\#}$}}
      \drawedge[ELpos=50, ELside=l, ELdist=1, curvedepth=0](q2,q4){$\neq$}
      \drawedge[ELpos=50, ELside=l, curvedepth=0](q3,q4){$\cfbi{c}{*},\cfbi{*}{c}$}

      \drawloop[ELside=l,loopCW=y, loopdiam=5, loopangle=90](q1){$\cfbi{a}{a},\fbi{b}{b},\fbi{\#}{\#}$}
      \drawloop[ELside=l,loopCW=y, ELdist=1, loopdiam=5, loopangle=90](q2){$\cfbi{a}{a},\fbi{b}{b},\fbi{c}{c}$}   
      \drawloop[ELpos=40, ELdist=-4.5, ELside=l,loopCW=y, loopdiam=5, loopangle=270](q3){$\cfbi{a}{a},\fbi{b}{b},\fbi{a}{b}, \fbi{b}{a}, \fbi{a}{\#}, \fbi{\#}{a}, \fbi{b}{\#}, \fbi{\#}{b}$}
      \drawloop[ELside=l,loopCW=y, loopdiam=5, loopangle=0](q4){$\fbi{*}{*}$}







\end{picture}
\hrule
\caption{A two-tape \dfa defining an indistinguishability relation that cannot be defined by any FIP (the symbol $\neq$ stands for $\{ \tbi{x}{y} \in \Gamma \times \Gamma \mid x \neq y\}$ and the symbol $*$ stands for $\{a,b,c,\#\}$).\label{fig:infinite}}
\end{center}
\end{figure}
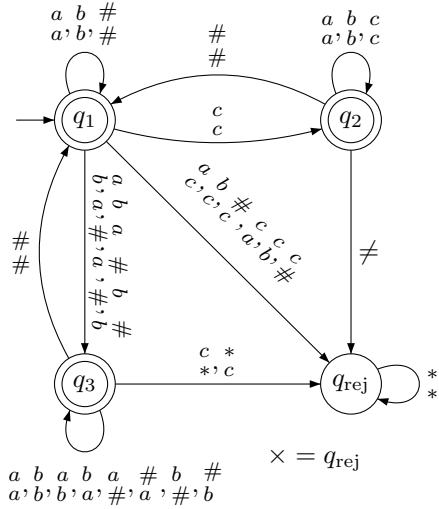

An important consequence of this construction is that the synthesis problem
for reachability games with imperfect information defined by a two-tape automaton
is nonelementary-hard. This follows from the hardness result of Theorem~\ref{theo:FIP-synthesis-complexity},
given the size of the automaton $\A_{{\textsf{FIP}}}$ is only exponential in the
number of observers of the FIP. 
Theorem~\ref{theo:synthesis-2dfa} suggests that in order to show a decidability 
result, a fairly complex construction will be necessary.

\begin{theorem}\label{theo:synthesis-2dfa}
The synthesis problem for reachability games with imperfect information 
defined by a two-tape automaton is nonelementary-hard.
\end{theorem}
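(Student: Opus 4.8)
The plan is to obtain the lower bound by a direct reduction from the synthesis problem for FIP games with $k$ observers, which is $(k+1)$-EXPTIME-hard for reachability objectives by Theorem~\ref{theo:FIP-synthesis-complexity}, using Lemma~\ref{lem:FIP-to-dfa} to repackage the indistinguishability relation of a FIP as a two-tape automaton. Concretely, given a FIP game $\G = \tuple{A,\Gamma,\act,\sim,\lambda}$ with $k$ observers and a reachability winning condition, I would produce the game $\G' = \tuple{A,\Gamma,\act,\sim,\lambda}$ that keeps the same action set, move alphabet, action map and coloring Mealy machine, and in which the indistinguishability relation $\sim$ is now presented by the two-tape \dfa $\A_{{\textsf{FIP}}}$ constructed in Lemma~\ref{lem:FIP-to-dfa}. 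Since $L(\A_{{\textsf{FIP}}})$ is \emph{exactly} the relation $\sim_0$ defined by the FIP, the games $\G$ and $\G'$ are the same game with imperfect information: the set of histories, the dynamics, and $\sim$ itself are unchanged, the coloring is still information-consistent (so the reachability condition remains visible, and $\A_{{\textsf{FIP}}}$ is a legitimate input since it recognises an indistinguishability relation), and therefore a strategy is winning in $\G$ if and only if it is winning in $\G'$. Thus the correctness of the reduction is immediate and requires no argument about game equivalence.

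The remaining, purely quantitative, point is the size of $\G'$. Everything except $\A_{{\textsf{FIP}}}$ is copied verbatim, and by the construction in Lemma~\ref{lem:FIP-to-dfa} the automaton $\A_{{\textsf{FIP}}}$ is (a modification of) the synchronous product of the $\abs{I} = k+1$ two-tape automata $\A_i$, each of size polynomial in $\M_i$ and the relations $R_\sigma$; hence $\abs{\A_{{\textsf{FIP}}}}$ is at most exponential in $k$ and polynomial in the rest of the FIP data. Consequently, for every \emph{fixed} $k$ the map $\G \mapsto \G'$ is a polynomial-time reduction, so the synthesis problem for reachability games with imperfect information defined by a two-tape automaton is $(k+1)$-EXPTIME-hard. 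Since this holds for all $k$, the problem cannot be solved in $j$-EXPTIME for any fixed $j$, i.e.\ it is nonelementary-hard; equivalently, composing the reduction of Theorem~\ref{theo:FIP-synthesis-complexity} with Lemma~\ref{lem:FIP-to-dfa} yields, for each $k$, a direct polynomial reduction from the membership problem of alternating $k$-EXPSPACE Turing machines.

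I do not expect a real obstacle, as both ingredients are already in hand; the work is in the bookkeeping, and two things must be checked carefully. First, that the blow-up of Lemma~\ref{lem:FIP-to-dfa} is genuinely only exponential in the number of observers — in particular that the pre-processing step (which adds one observer) invoked inside the proof of Theorem~\ref{theo:FIP-synthesis-complexity} does not move the instance beyond this bound — so that the composed reduction stays polynomial for each fixed $k$. Second, that ``nonelementary-hard'' is correctly inferred from a \emph{family} of $(k+1)$-EXPTIME-hardness results parametrised by $k$, rather than from a single reduction: the argument is simply that for any purported tower height $j$ one chooses $k \geq j$ and derives a contradiction with $(k+1)$-EXPTIME-hardness.
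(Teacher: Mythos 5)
Your proposal is correct and matches the paper's argument exactly: the paper derives the theorem by combining the $(k+1)$-EXPTIME-hardness of Theorem~\ref{theo:FIP-synthesis-complexity} for reachability with Lemma~\ref{lem:FIP-to-dfa}, observing that $\A_{{\textsf{FIP}}}$ is only exponential in the number of observers, so for each fixed $k$ the translation is a polynomial reduction. Your additional care about the meaning of ``nonelementary-hard'' as a family of hardness results parametrised by $k$ is exactly the right reading of the paper's (terse) justification.
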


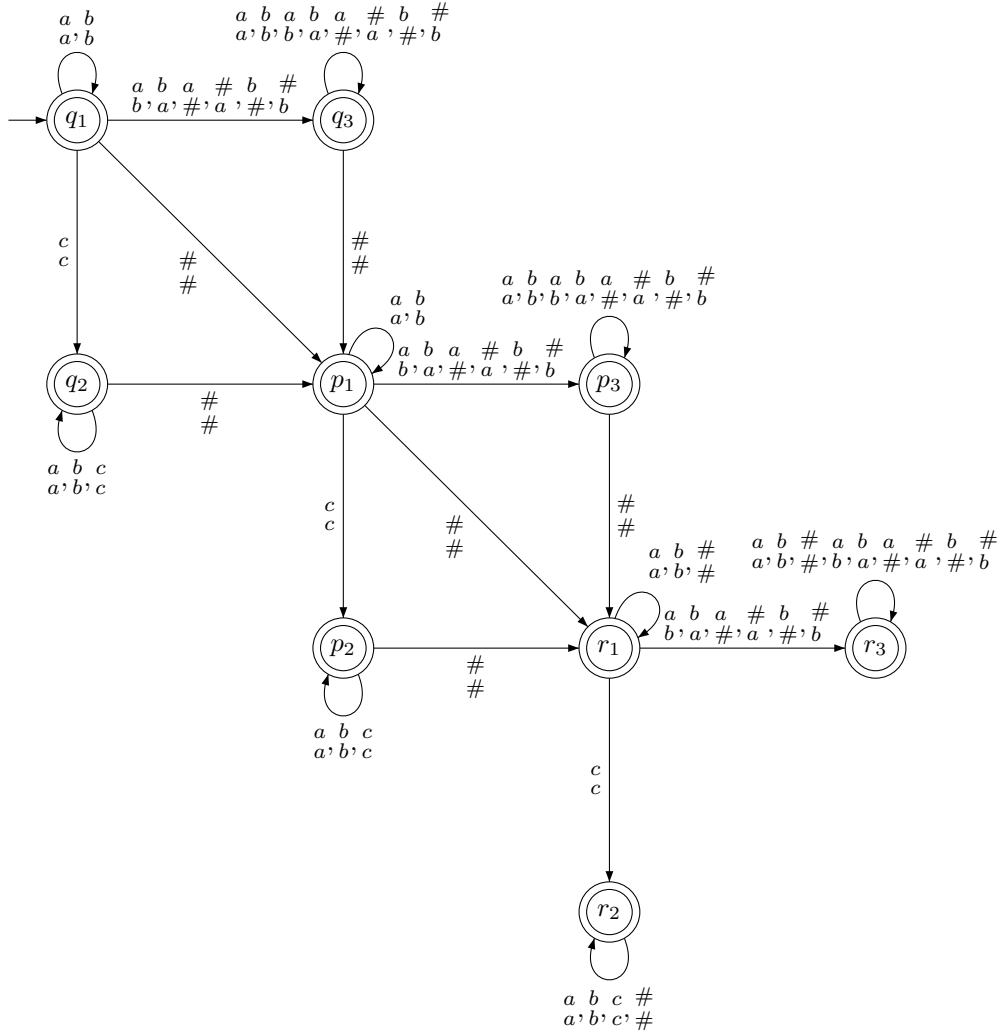
\begin{figure*}[t]
\begin{center}
\hrule

\begin{picture}(125,142)(0,0)

      \gasset{Nw=8,Nh=8,Nmr=4, rdist=1, loopdiam=5}      

      \node[Nmarks=ir, Nframe=y](q1)(10,125){$q_1$}
      \node[Nmarks=r, Nframe=y](q2)(10,90){$q_2$}
      \node[Nmarks=r, Nframe=y](q3)(45,125){$q_3$}

      \node[Nmarks=r, Nframe=y](p1)(45,90){$p_1$}
      \node[Nmarks=r, Nframe=y](p2)(45,55){$p_2$}
      \node[Nmarks=r, Nframe=y](p3)(80,90){$p_3$}

      \node[Nmarks=r, Nframe=y](r1)(80,55){$r_1$}
      \node[Nmarks=r, Nframe=y](r2)(80,20){$r_2$}
      \node[Nmarks=r, Nframe=y](r3)(115,55){$r_3$}


      \drawedge[ELpos=50, ELside=r, curvedepth=0](q1,p1){$\fbi{\#}{\#}$}
      \drawedge[ELpos=50, ELside=r, curvedepth=0](q1,q2){$\fbi{c}{c}$}
      \drawedge[ELpos=50, ELside=r, ELdist=1, curvedepth=0](q2,p1){$\fbi{\#}{\#}$}

      \drawedge[ELpos=52, ELside=l, ELdist=.5, curvedepth=0](q1,q3){$\fbi{a}{b}, \fbi{b}{a}, \fbi{a}{\#}, \fbi{\#}{a}, \fbi{b}{\#}, \fbi{\#}{b}$}
      \drawedge[ELpos=50, ELside=l, ELdist=1, curvedepth=0](q3,p1){$\fbi{\#}{\#}$}

      \drawloop[ELside=l,loopCW=y, loopdiam=5, loopangle=90](q1){$\cfbi{a}{a},\fbi{b}{b}$}
      \drawloop[ELside=l,loopCW=y, ELdist=1, loopdiam=5, loopangle=270](q2){$\cfbi{a}{a},\fbi{b}{b},\fbi{c}{c}$}   
      \drawloop[ELpos=50, ELdist=1, ELside=l,loopCW=y, loopdiam=5, loopangle=90](q3){$\cfbi{a}{a},\fbi{b}{b},\fbi{a}{b}, \fbi{b}{a}, \fbi{a}{\#}, \fbi{\#}{a}, \fbi{b}{\#}, \fbi{\#}{b}$}

      \drawedge[ELpos=50, ELside=r, curvedepth=0](p1,r1){$\fbi{\#}{\#}$}
      \drawedge[ELpos=50, ELside=r, curvedepth=0](p1,p2){$\fbi{c}{c}$}
      \drawedge[ELpos=50, ELside=r, ELdist=1, curvedepth=0](p2,r1){$\fbi{\#}{\#}$}

      \drawedge[ELpos=52, ELside=l, ELdist=.5, curvedepth=0](p1,p3){$\fbi{a}{b}, \fbi{b}{a}, \fbi{a}{\#}, \fbi{\#}{a}, \fbi{b}{\#}, \fbi{\#}{b}$}
      \drawedge[ELpos=50, ELside=l, ELdist=1, curvedepth=0](p3,r1){$\fbi{\#}{\#}$}

      \drawloop[ELside=l,loopCW=y, loopdiam=5, loopangle=50](p1){$\cfbi{a}{a},\fbi{b}{b}$}
      \drawloop[ELside=l,loopCW=y, ELdist=1, loopdiam=5, loopangle=270](p2){$\cfbi{a}{a},\fbi{b}{b},\fbi{c}{c}$}   
      \drawloop[ELpos=50, ELdist=1, ELside=l,loopCW=y, loopdiam=5, loopangle=90](p3){$\cfbi{a}{a},\fbi{b}{b},\fbi{a}{b}, \fbi{b}{a}, \fbi{a}{\#}, \fbi{\#}{a}, \fbi{b}{\#}, \fbi{\#}{b}$}

      \drawedge[ELpos=50, ELside=r, curvedepth=0](r1,r2){$\fbi{c}{c}$}

      \drawedge[ELpos=52, ELside=l, ELdist=.5, curvedepth=0](r1,r3){$\fbi{a}{b}, \fbi{b}{a}, \fbi{a}{\#}, \fbi{\#}{a}, \fbi{b}{\#}, \fbi{\#}{b}$}

      \drawloop[ELside=l,loopCW=y, loopdiam=5, loopangle=50](r1){$\cfbi{a}{a},\fbi{b}{b},\fbi{\#}{\#}$}
      \drawloop[ELside=l,loopCW=y, ELdist=1, loopdiam=5, loopangle=270](r2){$\cfbi{a}{a},\fbi{b}{b},\fbi{c}{c},\fbi{\#}{\#}$}   
      \drawloop[ELpos=50, ELdist=1, ELside=l,loopCW=y, loopdiam=5, loopangle=90](r3){$\cfbi{a}{a},\fbi{b}{b},\fbi{\#}{\#}, \fbi{a}{b}, \fbi{b}{a}, \fbi{a}{\#}, \fbi{\#}{a}, \fbi{b}{\#}, \fbi{\#}{b}$}







\end{picture}
\hrule
\caption{A two-tape \dfa defining an indistinguishability relation that can be defined by a FIP with four players, but not by FIP with three players. Missing transitions are directed to a sink rejecting state. \label{fig:strict-hierarchy}}
\end{center}
\end{figure*}

We now show that two-tape automata are strictly more expressive than FIP.
Consider the move alphabet $\Gamma = \{a,b,c,\#\}$ where $\#$ is used as a separator,
and let two histories $\tau,\tau' \in \Gamma^*$ be indistinguishable if their suffix 
after the last position where they both contain a separator~$\#$ (or from the initial position 
if no such position exists) are equal, or none of them contains the letter~$c$.
Intuitively, along a history the symbol $\#$ separates block of letters over $\{a,b,c\}$.
Within a block the letters $a$ and $b$ are indistinguishable until a letter $c$
occurs, which reveals the current block (similar to the example of \figurename~\ref{fig:simple-fip}).
Note that the letters $a$ and $b$ in all previous blocks remain indistinguishable
forever.

This indistinguishability relation is defined by the two-tape \dfa in
\figurename~\ref{fig:infinite}. Intuitively, it cannot be defined by a FIP because
whenever a letter $c$ occurs in a history, Player~$0$ would need
to communicate with some observer who can see the sequence of $a$'s and $b$'s
in the current block (Player~$0$'s own observations are not 
sufficient to define the indistinguishability relation, as it has unbounded branching).
However, we can never reuse the same observer for the next block because 
communicating with such an observer would reveal information to which  
Player~$0$ does not have access. 
We need a fresh observer for each block, and since an history may contain an 
arbitrarily large number of blocks, a finite number of observers would not be
sufficient.

\begin{lemma}\label{lem:dfa-to-FIP}
There exists an indistinguishability relation defined by a two-tape \dfa that 
cannot be defined by any FIP (no matter the number of players).
\end{lemma}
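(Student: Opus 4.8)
The plan is to argue by contradiction, starting from a FIP $F$ with players $I=\{0,1,\dots,n\}$, observation functions $\beta_i$, and communication relations $(R_\sigma)$ whose indistinguishability relation $\sim$ is the one recognised by \figurename~\ref{fig:infinite}. First I would record the shape of $\sim$ on block‑structured histories $B_1\#B_2\#\cdots\#B_k$ (each $B_\ell\in\{a,b,c\}^*$): directly from the automaton, $\tau\sim\tau'$ iff for every $\ell$ either $B_\ell=B_\ell'$, or both $B_\ell,B_\ell'$ are $c$‑free of the same length. Thus a block containing a $c$ is \emph{revealed} (its content is a function of $[\tau]_\sim$), while a $c$‑free block is \emph{hidden} (all $c$‑free completions of equal length stay indistinguishable); in particular $[vc]_\sim=\{vc\}$ whereas $[v]_\sim\supseteq\{v':v'\in\{a,b\}^{|v|}\}$, which is the source of the unbounded branching.

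Next I would reason through the view graph. At $\tau$, Player~$0$'s knowledge is the map $(j,t)\mapsto\beta_j(\tau(t))$ on the nodes reachable from $(0,|\tau|)$ in $\View(\tau)$; since this reachable set is downward‑closed in time for each player (the backward edges $(j,t)\to(j,t-1)$ are always present), it is captured by a \emph{horizon} $H_j(\tau)$ per observer, monotone when $\tau$ is extended. The key observation is a ``leakage lemma'': if Player~$0$ reaches $(j,t_0)$ and $\M_j$ is in a state $q$ entering a block $B$ located among positions $\le t_0$, then the output run of $\M_j$ on $B$ --- a function $f_{j,q}$ of $B$'s content --- is part of $[\tau]_\sim$; hence if $B$ is hidden, $f_{j,q}$ must be constant. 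Conversely, when a block $B$ is revealed, the step reading its closing $c$ must let Player~$0$ recover all $|B|$ bits of $B$, whereas her own observation contributes only $O(\log|\Sigma|)$ bits and, before that $c$, carries none; so she must reach via communication at most $n$ observers whose output runs on $B$ jointly determine its content, and a counting argument produces one reached observer $j^\star$ with $f_{j^\star,q}$ non‑constant (in fact with at least $2^{|B|/(2n)}$ values).

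Then I would combine these with a pigeonhole over Mealy configurations. Fix a large $r$ and $\bar v\in\{a,b\}^r$, read $\bar v\#$ repeatedly, and let the joint state of $(\M_i)_{i\in I}$ recur after $i$ and after $i+p$ copies. Writing $\bar\gamma_i=(\bar v\#)^i$, consider the history $\bar\gamma_i\cdot v\cdot\#\cdot(\bar v\#)^{p-1}\cdot(y\,c)\cdot\#$, in which block $i+1$ carries a varying $c$‑free word $v$ (hidden), blocks $i+2,\dots,i+p$ are $\bar v$, and block $i+p+1$ is $y\,c$ with varying $y$ (revealed). Revealing $y$ forces Player~$0$, at its closing $c$, to reach an observer informative from the state it has entering that block; a second pigeonhole over $v$ (there are only $\prod_i|Q_i|$ possible joint states) fixes that observer to a single $j^\star$ together with that state, and, via the recurrence, makes the state coincide with the one $\M_{j^\star}$ has entering the fixed‑prefix block $i+1$. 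Consequently $\M_{j^\star}$ is informative on block $i+1$ too, so its output run there depends non‑trivially on $v$; but when Player~$0$ reaches $j^\star$ at the reveal of block $i+p+1$ she sees $j^\star$'s whole view, which includes that output run --- contradicting that block $i+1$ is hidden, i.e.\ that $[\tau]_\sim$ does not depend on $v$.

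The main obstacle, and where most of the care goes, is the coherence of the two pigeonholes: the observer forced to be informative entering block $i+p+1$ must be exactly the one Player~$0$ reaches there, its Mealy state must be arranged to recur between the two block positions, and the fraction of contents $v$ surviving all the counting must remain large (which dictates taking $r\gg n\log\prod_i|Q_i|$). This is also where the bound ``$n$ observers'' enters: the recurrence length --- hence the number of blocks one needs --- is governed by the number of joint Mealy configurations reachable along $\#$‑separated copies of $\bar v$, and the leakage lemma is precisely what forbids a single observer from servicing a revealed block that lies after a hidden one.
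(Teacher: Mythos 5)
Your overall strategy --- a one-shot contradiction in which a later \emph{revealed} block forces Player~$0$ to reach an observer whose view already contains an earlier \emph{hidden} block, with a Mealy-state recurrence used to transfer ``informativeness'' from the revealed block back to the hidden one --- is genuinely different from the paper's proof and is in principle viable; your leakage lemma and the pigeonhole showing that a revealed block forces communication with an observer that already distinguished the $c$-free prefix are both sound. However, the central step has a gap as written. The recurrence you invoke is established for the fixed word $(\bar v\#)^m$: it guarantees that the joint state after $(\bar v\#)^i$ equals the state after $(\bar v\#)^{i+p}$. But in your history $\bar\gamma_i\cdot v\#\cdot(\bar v\#)^{p-1}\cdot y\,c\,\#$ the block at position $i+1$ is the \emph{varying} word $v$, not $\bar v$, so the joint state entering block $i+p+1$ depends on $v$ and is not shown to return to the state $q_0$ entering block $i+1$. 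Your ``second pigeonhole over $v$'' only fixes that state to \emph{some} common value on a large subset of contents; it does not make it coincide with $q_0$, which is exactly what the transfer of informativeness requires. Moreover, the ``large surviving fraction of $v$'' framing does not do the work you need: non-constancy of $f_{j^\star,q_0}$ on all of $\{a,b\}^r$ does not imply non-constancy on a prescribed large subset, and conversely what is actually needed is just two witnesses $v_1\neq v_2$ with different output runs, together with the (unstated) fact that the set of nodes reachable in the view graph is invariant across $\sim_0$-equivalent histories, so that $j^\star$ is still reached when $v$ is replaced by $v_1$ or $v_2$. The argument can be repaired --- take $v=\bar v$ to pin down $j^\star$ and the entering state $q_0$ via the recurrence, then use the $\sim_0$-invariance of reachability to pass to the two non-constancy witnesses --- but these steps are missing, and as stated the chain ``recurrence $\Rightarrow$ same entering state $\Rightarrow$ informative on the hidden block'' does not go through.

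For comparison, the paper avoids this machinery entirely. It defines, for each history $\tau$, the set $\Com(\tau)$ of observers that Player~$0$ can still ever reach along some continuation, and shows: (i) whenever the two-tape automaton is back in $q_1$ on $(\tau,\tau)$, appending a $c$-free block followed by $c$ forces communication with an observer that already distinguished two $c$-free contents, so $\Com(\tau)\neq\emptyset$ (essentially your step on revealed blocks); and (ii) closing that \emph{same} block with $\#$ instead of $c$ hides it forever, so that observer may never again be contacted and drops out of $\Com$. Iterating burns one observer per block and exhausts the $N$ observers after $N$ blocks. Because the hidden and revealed blocks are the same block in two counterfactual continuations, no state recurrence is needed; you may find it instructive how much shorter the argument becomes with that change of viewpoint.
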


\begin{longversion}
\begin{proof}
Consider the two-tape \dfa in \figurename~\ref{fig:infinite}.
We note that, for all $n\in \nat$, the set of words $\{a,b\}^n$ is such that 
for all histories $\tau \in \Gamma^*$,
if $(\tau,\tau)$ leads to the initial state $q_1$, 
then the words in the set $u_{\tau} = \tau\{a,b\}^n$ are pairwise indistinguishable.
Moreover, if a letter $c$ occurs, the words $\tau\{a,b\}^nc$ become all pairwise distinguishable.

Towards contradiction, assume that there exists a FIP $F$ that defines $\sim$ 
(i.e., such that $\sim_0\, = \,\sim$). Let $N$ be the number of players in $F$.

For every history $\tau \in \Gamma^*$, consider the communication set 
$\Com(\tau) \subseteq I$ containing all players with which Player~$0$ may 
communicate (directly or indirectly) along continuations $\tau w$ for all $w \in \Gamma^*$ 
(which is tedious to define formally).

We construct a sequence $\tau_0,\tau_1, \dots, \tau_N$ of histories $\tau_n \in \Gamma^*$ 
such that for all $n \geq 1$:
\begin{itemize}
\item the communication sets are strictly decreasing, $\Com(\tau_{n}) \subsetneq \Com(\tau_{n-1})$, and
\item the communication sets are nonempty, $\Com(\tau_{n}) \neq \emptyset$.
\end{itemize}

Since the size of the communication sets is bounded by the total number $N$ of players, 
this implies a contradiction for $\tau_N$. 

We now show how to construct the histories $\tau_n$. 
First, as an intermediate proposition, we show that if $(\tau_n,\tau_n)$ 
leads to the initial state $q_1$ in the two-tape automaton,
then $\Com(\tau_{n}) \neq \emptyset$. 
For $k > \abs{\Sigma}^N$, consider the histories of the form $\tau_n \{a,b\}^k c$.
Since there are $\abs{\Sigma}^N$ tuples of observations for $N$ players, 
by the pigeonhole principle there exist two sequences $w,w'\in \{a,b\}^k$ such 
that the two histories $\tau_n w c$ and $\tau_n w' c$ have the 
same (last) observation for all players, $\beta_i(\tau_n w c) = \beta_i(\tau_n w' c)$ 
for all players $i\in I$.
In particular, as $\tau_n w \sim_0 \tau_n w'$ are indistinguishable histories for 
Player~$0$, we have $\hat{\beta}_0(\tau_n w c) = \hat{\beta}_0(\tau_n w' c)$.
Hence 
upon reading the last $c$, there 
must be a (direct or indirect) communication between Player~$0$ and some 
Player $i \in I$ who distinguishes the two histories, $\tau_n wc \nsim_i \tau_n w'c$.
It follows by definition of $\Com(\tau_n)$ that $i \in Com(\tau_n)$, thus $\Com(\tau_n) \neq \emptyset$.
Since all players have the same (last) observation on $\tau_n wc$ and $\tau_n w'c$, 
we can assume w.l.o.g. that the distinction occurred earlier, $\tau_n w \nsim_i \tau_n w'$.

We now present the construction. 
Let $\tau_0 = \epsilon$, and construct $\tau_{n+1}$ from $\tau_n$ such that $
\Com(\tau_{n+1}) \subsetneq \Com(\tau_n)$ and $(\tau_{n+1},\tau_{n+1})$ leads 
to the initial state, inductively as follows.

Consider the history $\tau_{n+1} = \tau_n w d$ and note that for all continuations $z \in \Gamma^*$, 
the histories $\tau_{n+1} z \sim \tau_n w' d z$ are indistinguishable 
because the pair $(\tau_n w d,\tau_n w' d)$ leads to the state $q_1$.

As we know that the histories $\tau_n w \nsim_i \tau_n w'$ are distinguishable 
for Player $i$, so are the histories $\tau_n w d z \nsim_i \tau_n w' d z$
for all continuations $z \in \Gamma^*$.
Since $\tau_n w d z \sim \tau_n w' d z$ for all $z \in \Gamma^*$, 
Player~$0$ cannot communicate with Player $i$ after the history $\tau_{n+1} = \tau_n w d$
(as otherwise it would let Player~$0$ distinguish indistinguishable histories).
Hence $i \notin \Com(\tau_{n+1})$. 
Since $\tau_n$ is a prefix of $\tau_{n+1}$, we have 
$\Com(\tau_{n+1} \subseteq \Com(\tau_n)$ and since $i \in \Com(\tau_n)$
we conclude that $\Com(\tau_{n+1} \subsetneq \Com(\tau_n)$ and it is easy to 
check that $(\tau_{n+1},\tau_{n+1})$ leads to the initial state $q_1$ as required.
\end{proof}
\end{longversion}

The same idea can be used to show that increasing the number of players in FIP
increases the expressive power, that is for all $n\geq 2$, there exists 
an indistinguishability relation that can be defined by a FIP with $n$ players
but not by any FIP with $n-1$ players. \figurename~\ref{fig:strict-hierarchy} 
shows a two-tape \dfa that defines such an indistinguishability relation for $n=4$.
Intuitively it is obtained by ``unfolding'' the automaton of \figurename~\ref{fig:infinite}
into $n-1$ copies, redirecting the transitions on $(\#,\#)$ to the next copy 
of the automaton, except in the last copy where the transitions on  $(\#,\#)$
are self-loops. The reader can verify that Player~$0$ needs three observers
to track the moves in the first (at most) three blocks separated by $\#$ along
a history.

Given a move alphabet $\Gamma$, denote by $\F_n$ the class of indistinguishability relations
definable by a FIP with $n$ observers. 

\begin{theorem}\label{theo:FIP-expressive-power}
  The hierarchy of indistinguishability relation classes~$\F_n$ induced by a FIP with~$n$ observers is strict and does not exhaust the class of $\F_{{\textsf{2DFA}}}$ of regular indistinguishability relations:
$$\F_1 \subsetneq \F_2 \subsetneq \ldots \subsetneq \F_n \subsetneq \ldots \subsetneq\F_{\textsf{2DFA}}.$$.
\end{theorem}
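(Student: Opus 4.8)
The plan is to deduce the chain from the two inclusion lemmas of this section together with an explicit family of witnessing two-tape automata generalising Figures~\ref{fig:simple-fip}--\ref{fig:strict-hierarchy}. The non-strict inclusions are easy: adjoining to a FIP with $n$ observers one further observer with constant observation function and no communication link leaves the view graph of Player~$0$, and hence $\sim_0$, unchanged, so $\F_n \subseteq \F_{n+1}$; and $\F_n \subseteq \F_{\textsf{2DFA}}$ for every $n$ is exactly Lemma~\ref{lem:FIP-to-dfa}. It thus remains to make every inclusion proper.

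The last one, $\bigcup_n \F_n \subsetneq \F_{\textsf{2DFA}}$, is precisely Lemma~\ref{lem:dfa-to-FIP}: the indistinguishability relation of Figure~\ref{fig:infinite} is recognised by a two-tape \dfa, hence lies in $\F_{\textsf{2DFA}}$, but is defined by no FIP, hence lies in no $\F_n$. In particular this already separates $\F_n$ from $\F_{\textsf{2DFA}}$ for each fixed $n$.

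For the strict inclusions $\F_n \subsetneq \F_{n+1}$, $n \ge 1$, I would take, for each $n$, the two-tape \dfa $\A_n$ obtained by stacking $n+1$ copies of the one-block gadget of Figure~\ref{fig:infinite}, wiring them so that completing a block that contains a $c$ passes from copy $j$ to copy $j+1$ when $j<n+1$ and stays inside copy $n+1$ otherwise -- exactly the pattern displayed, for three copies, in Figure~\ref{fig:strict-hierarchy}. The relation $\sim^{n}$ it defines is the $(n{+}1)$-block analogue of the relation of Figure~\ref{fig:simple-fip}: inside each $\#$-separated, $c$-free block the letters $a$ and $b$ stay indistinguishable, the $j$-th block that contains a $c$ ``consumes'' copy $j$, and the last copy absorbs all further blocks. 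Membership $\sim^{n} \in \F_{n+1}$ I would get from the natural FIP with observers $1,\dots,n+1$: Player~$0$ observes only $c$, $\#$ and the merged image of $a$ and $b$, so she can track the current copy, and on the $j$-th $c$-containing block she communicates with observer $j$ (and with observer $n+1$ for all later blocks); some care is needed so that such a communication discloses exactly the current block and nothing older, which is arranged -- as in the discussion around Figure~\ref{fig:simple-fip} -- by letting each observer observe faithfully only within the relevant copy.

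The crux, and the step I expect to be the main obstacle, is the lower bound $\sim^{n} \notin \F_n$. I would adapt the communication-set argument from the proof of Lemma~\ref{lem:dfa-to-FIP}. Assuming a FIP $F$ with $N\le n$ observers defines $\sim^{n}$, the same inductive construction produces histories $\tau_0,\tau_1,\dots,\tau_n$, where $\tau_k$ extends $\tau_{k-1}$ by one further revealed block, together with a chain $\Com(\tau_0)\supsetneq\Com(\tau_1)\supsetneq\dots\supsetneq\Com(\tau_n)$ of nonempty communication sets: nonemptiness follows from the pigeonhole argument over the $\abs{\Sigma}^{N}$ observation tuples inside the $(k{+}1)$-st block, whose branching is still unbounded in $\A_n$, and strict decrease from the fact that the observer which revealed the $k$-th block can never be contacted again by Player~$0$ without leaking that block. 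A strictly decreasing chain of $n+1$ nonempty subsets of an $N$-element set with $N\le n$ is impossible, a contradiction. The delicate points are to confirm that $\sim^{n}$ genuinely retains unbounded branching inside each of its first $n+1$ blocks (so the pigeonhole step applies $n+1$ times and not fewer), and that each step of the construction can be performed while keeping the pair of witnessing histories in a block-``resetting'' state -- the role played by the $\fbi{\#}{\#}$-edges between successive copies of the gadget, which must be re-verified for the multi-copy automaton $\A_n$. Assembling the four parts yields $\F_1 \subsetneq \F_2 \subsetneq \cdots \subsetneq \F_n \subsetneq \cdots \subsetneq \F_{\textsf{2DFA}}$.
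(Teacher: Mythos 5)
Your proposal is correct and follows essentially the same route as the paper: the witness relation $\sim^n$ is obtained by unfolding the gadget of \figurename~\ref{fig:infinite} into stacked copies as in \figurename~\ref{fig:strict-hierarchy}, membership in $\F_{n+1}$ is the straightforward direction, non-membership in $\F_n$ reuses the communication-set/pigeonhole argument of Lemma~\ref{lem:dfa-to-FIP}, and the final strict inclusion into $\F_{\textsf{2DFA}}$ is Lemma~\ref{lem:dfa-to-FIP} itself. You in fact supply more detail than the paper (which only states that the lower bound ``follows the same line'' as Lemma~\ref{lem:dfa-to-FIP}), and you correctly identify the points that need re-verification for the multi-copy automaton.
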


\begin{longversion}
\begin{proof}
\figurename~\ref{fig:strict-hierarchy} illustrates the construction of a 
witness relation $\sim^n$ such that $\sim^n \in \F_{n+1}$ and $\sim^n \not\in \F_n$.

The proof that $\sim^n \not\in \F_n$ follows the same line as the proof of Lemma~\ref{lem:dfa-to-FIP}, and the proof that $\sim^n \in \F_{n+1}$
is straightforward.
\end{proof}
\end{longversion}

\bibliography{biblio} 

\newcommand{\etalchar}[1]{$^{#1}$}
\begin{thebibliography}{GGMW13}

\bibitem[APR01]{AzharPetRei01}
Salman Azhar, Gary Peterson, and John Reif.
\newblock Lower bounds for multiplayer non-cooperative games of incomplete
  information.
\newblock {\em Journal of Computers and Mathematics with Applications},
  41:957--992, 2001.

\bibitem[AVW03]{ArnoldWal03}
Andr{\'e} Arnold, Aymeric Vincent, and Igor Walukiewicz.
\newblock Games for synthesis of controllers with partial observation.
\newblock {\em Theoretical computer science}, 303(1):7--34, 2003.

\bibitem[BD23]{BD23}
D.~Berwanger and L.~Doyen.
\newblock Observation and distinction. representing information in infinite
  games.
\newblock {\em Theory of Computing Systems}, 67(1):4--27, 2023.

\bibitem[BK08]{BK08}
C.~Baier and J.-P. Katoen.
\newblock {\em Principles of Model Checking}.
\newblock MIT, 2008.

\bibitem[BKP11]{BerwangerKP11}
Dietmar Berwanger, Lukasz Kaiser, and Bernd Puchala.
\newblock A perfect-information construction for coordination in games.
\newblock In {\em IARCS Annual Conference on Foundations of Software Technology
  and Theoretical Computer Science, FSTTCS 2011, December 12-14, 2011, Mumbai,
  India}, volume~13 of {\em LIPIcs}, pages 387--398. Schloss Dagstuhl -
  Leibniz-Zentrum fuer Informatik, 2011.

\bibitem[BL69]{BL69}
J.~R. B\"uchi and L.~H. Landweber.
\newblock Solving sequential conditions by finite-state strategies.
\newblock {\em Transactions of the American Mathematical Society},
  138:295--311, 1969.

\bibitem[BMvdB18]{BMvdB18}
D.~Berwanger, A.~B. Mathew, and M.~van~den Bogaard.
\newblock Hierarchical information and the synthesis of distributed strategies.
\newblock {\em Acta Informatica}, 55(8):669--701, 2018.

\bibitem[CDHR07]{CDHR07}
K.~Chatterjee, L.~Doyen, T.~A. Henzinger, and J.-F. Raskin.
\newblock Algorithms for omega-regular games of incomplete information.
\newblock {\em Logical Methods in Computer Science}, 3(3:4), 2007.

\bibitem[Chu62]{Chu62}
A.~Church.
\newblock Logic, arithmetics, and automata.
\newblock {\em Proc. Int. Congr. Math.}, pages 23--35, 1962.

\bibitem[CJK{\etalchar{+}}22]{CaludeJKLS22}
Cristian~S. Calude, Sanjay Jain, Bakhadyr Khoussainov, Wei Li, and Frank
  Stephan.
\newblock Deciding parity games in quasi-polynomial time.
\newblock {\em {SIAM} J. Comput.}, 51(2):17--152, 2022.

\bibitem[DM90]{DworkMos90}
Cynthia Dwork and Yoram Moses.
\newblock Knowledge and common knowledge in a byzantine environment: Crash
  failures.
\newblock {\em Inf. Comput.}, 88(2):156--186, 1990.

\bibitem[DR11]{DR11}
L.~Doyen and J.-F. Raskin.
\newblock Games with imperfect information: Theory and algorithms.
\newblock In {\em Lectures in Game Theory for Computer Scientists}, pages
  185--212. Cambridge University Press, 2011.

\bibitem[FO17]{FO17}
B.~Finkbeiner and E.{-}R. Olderog.
\newblock Petri games: Synthesis of distributed systems with causal memory.
\newblock {\em Inf. Comput.}, 253:181--203, 2017.

\bibitem[FS05]{FinkbeinerS05}
Bernd Finkbeiner and Sven Schewe.
\newblock Uniform distributed synthesis.
\newblock In {\em 20th {IEEE} Symposium on Logic in Computer Science {(LICS}
  2005), 26-29 June 2005, Chicago, IL, USA, Proceedings}, pages 321--330.
  {IEEE} Computer Society, 2005.

\bibitem[GGMW13]{GGMW13}
B.~Genest, H.~Gimbert, A.~Muscholl, and I.~Walukiewicz.
\newblock Asynchronous games over tree architectures.
\newblock In {\em Proc. of ICALP: Automata, Languages, and Programming}, LNCS
  7966, pages 275--286. Springer, 2013.

\bibitem[GH82]{GurevichHar82}
Yuri Gurevich and Leo Harrington.
\newblock {T}rees, automata, and games.
\newblock In {\em Proceedings of the fourteenth annual ACM symposium on theory
  of computing}, pages 60--65, 1982.

\bibitem[GLZ04]{GLZ04}
P.~Gastin, B.~Lerman, and M.~Zeitoun.
\newblock Distributed games and distributed control for asynchronous systems.
\newblock In {\em Proc. of {LATIN}: Theoretical Informatics, 6th Latin American
  Symposium}, LNCS 2976, pages 455--465. Springer, 2004.

\bibitem[GTW02]{automata}
E.~Gr{\"a}del, W.~Thomas, and T.~Wilke, editors.
\newblock {\em Automata, Logics, and Infinite Games: A Guide to Current
  Research}, LNCS 2500. Springer, 2002.

\bibitem[HP85]{HarelPnu85}
D.~Harel and A.~Pnueli.
\newblock On the development of reactive systems.
\newblock In Krzysztof~R. Apt, editor, {\em Logics and Models of Concurrent
  Systems}, pages 477--498, {Berlin, Heidelberg}, 1985. {Springer Berlin
  Heidelberg}.

\bibitem[JMT22]{JMT22}
M.~Jurdzinski, R.~Morvan, and K.~S. Thejaswini.
\newblock Universal algorithms for parity games and nested fixpoints.
\newblock In {\em Principles of Systems Design - Essays Dedicated to Thomas A.
  Henzinger on the Occasion of His 60th Birthday}, LNCS 13660, pages 252--271.
  Springer, 2022.

\bibitem[K{\"o}n36]{Konig36}
D.~K{\"o}nig.
\newblock {\em Theorie der endlichen und unendlichen Graphen}.
\newblock Akademische Verlagsgesellschaft, Leipzig, 1936.

\bibitem[KV01]{KupfermanVar01}
Orna Kupferman and Moshe~Y. Vardi.
\newblock Synthesizing distributed systems.
\newblock In {\em Proc. of LICS~'01}, pages 389--398. IEEE Computer Society
  Press, June 2001.

\bibitem[MTY05]{MTY05}
P.~Madhusudan, P.~S. Thiagarajan, and S.~Yang.
\newblock The {MSO} theory of connectedly communicating processes.
\newblock In {\em Proc. of {FSTTCS}: Foundations of Software Technology and
  Theoretical Computer Science}, LNCS 3821, pages 201--212. Springer, 2005.

\bibitem[MW03]{MohalikWal03}
Swarup Mohalik and Igor Walukiewicz.
\newblock {D}istributed {G}ames.
\newblock In {\em FSTTCS'03}, volume 2914 of {\em LNCS}, pages 338--351, 2003.

\bibitem[PR89]{PnueliRos89}
A.~Pnueli and E.~Rosner.
\newblock {O}n the synthesis of a reactive module.
\newblock In {\em Proceedings of the 16th ACM SIGPLAN-SIGACT symposium on
  Principles of programming languages}, pages 179 -- 190. ACM Press, 1989.

\bibitem[PR90]{PR90}
A.~Pnueli and R.~Rosner.
\newblock Distributed reactive systems are hard to synthesize.
\newblock In {\em Proc. of FOCS: Foundations of Computer Science}, pages
  746--757. {IEEE}, 1990.

\bibitem[PSL80]{PeaseSL80}
M.~Pease, R.~Shostak, and L.~Lamport.
\newblock Reaching agreements in the presence of faults.
\newblock {\em Journal of the ACM}, 27(2):228--234, April 1980.

\bibitem[Rab69]{Rabin69}
M.~O. Rabin.
\newblock Decidability of second-order theories and automata on infinite trees.
\newblock {\em Transactions of the AMS}, 141:1--35, 1969.

\bibitem[Rab72]{Rabin72}
M.~O. Rabin.
\newblock {\em Automata on Infinite Objects and {C}hurch's Problem}.
\newblock American Mathematical Society, Boston, MA, USA, 1972.

\bibitem[Rei84]{Rei84}
John~H. Reif.
\newblock The complexity of two-player games of incomplete information.
\newblock {\em Journal of Computer and System Sciences}, 29(2):274--301, 1984.

\bibitem[RW87]{RamadgeWon87}
P.~J. Ramadge and W.~M. Wonham.
\newblock {S}upervisory control of a class of discrete event processes.
\newblock {\em SIAM J. Control Optim.}, 25(1):206--230, 1987.

\bibitem[San11]{San11}
D.~Sangiorgi.
\newblock {\em Introduction to Bisimulation and Coinduction}.
\newblock Cambridge University Press, 2011.

\bibitem[Sch14]{Schewe2014}
Sven Schewe.
\newblock Distributed synthesis is simply undecidable.
\newblock {\em Inf. Process. Lett.}, 114(4):203--207, April 2014.

\bibitem[Tho95]{Thomas95}
W.~Thomas.
\newblock On the synthesis of strategies in infinite games.
\newblock In {\em Proc. of STACS: Symposium on Theoretical Aspects of Computer
  Science}, LNCS 900. Springer, 1995.

\bibitem[Tho97]{Thomas97}
W.~Thomas.
\newblock Languages, automata, and logic.
\newblock In {\em Handbook of Formal Languages}, volume 3, Beyond Words,
  chapter~7, pages 389--455. Springer, 1997.

\bibitem[WL94]{WooLam94}
Thomas~YC Woo and Simon~S Lam.
\newblock A lesson on authentication protocol design.
\newblock {\em ACM SIGOPS Operating Systems Review}, 28(3):24--37, 1994.

\bibitem[Zie87]{Zie87}
W.~Zielonka.
\newblock Notes on finite asynchronous automata.
\newblock {\em {RAIRO} Theor. Informatics Appl.}, 21(2):99--135, 1987.

\end{thebibliography}

\bibliographystyle{alpha}

\end{document}